\documentclass{iopjournal}
\usepackage{graphicx}
\usepackage{bm}

\usepackage{amssymb,amsfonts,mathrsfs,amsthm}
\usepackage{mathtools}
\usepackage{cleveref}
\newtheorem{theorem}{Theorem}
\newtheorem{proposition}{Proposition}
\newtheorem{lemma}{Lemma}
\newtheorem{corollary}{Corollary}

\theoremstyle{remark}
\newtheorem{remark}{Remark}

\usepackage{comment}

\begin{document}

\articletype{Paper} 

\title{Rapidly rotating internally heated convection: bounds on long-time averages}

\author{Yutong Zhang$^*$\orcid{0000-0003-4811-2055}, Ali Arslan, Stefano Maffei, and Andrew Jackson}

\affil{Institute for Geophysics, ETH Zurich, Sonneggstrasse 5, Zurich 8092, Switzerland}

\affil{$^*$Author to whom any correspondence should be addressed.}

\email{yutong.zhang@eaps.ethz.ch}

\keywords{rotating convection, variational methods}

\begin{abstract}
Convection on geophysical and astrophysical scales is subject to rapid rotation and strong heating from within the domain. In studying the long-time behaviour of the solutions for such a system, energy identities fail to capture the effects of rotation because the Coriolis force does no work, and rapid rotation can be prohibitive for direct numerical simulations. Instead, we derive an asymptotically reduced model for rapidly rotating convection driven by uniform internal heating between isothermal stress-free boundaries in a plane periodic layer. The main contribution is the proof of bounds on the mean temperature, and the mean vertical convective heat transport, in terms of the Rayleigh and Ekman numbers, in the limit of infinite Prandtl number. The first quantity represents the mixing of the flow, and the second the asymmetry in heat leaving the bottom and top boundaries due to convection, and unlike Rayleigh-B\'enard convection, the two are not a priori related. We employ alternative estimation techniques to those used in previous studies (Grooms \& Whitehead, 2014 \textit{Nonlinearity}, 28, 29) 
and identify two distinct scaling behaviours for both quantities. Finally, our bounds are optimised, within the methodology, and provide a rigorous constraint for future studies of rotation-dominated internally heated convection. 
\end{abstract}

\section{Introduction}

Convective flows in globally rotating reference frames are central to many problems in astrophysics and geophysics  \cite{roberts2007theory,miesch2005large}. 
In convective systems, buoyancy fluxes may originate at boundaries, as in canonical Rayleigh–Bénard convection (RBC), or be sustained by volumetric internal heating, as in internally heated convection (IHC). Although rotating RBC has been extensively characterised \cite{kunnen2021geostrophic,ecke2023turbulent}, rotating IHC has received limited investigation \cite{roberts1965thermal,jones2000onset,kaplan2017subcritical,lin2021large,hadjerci2024rapidly}, especially in the plane layer geometry \cite{arslan2024internally,ostilla2025rotationally}. In RBC it has been demonstrated that rapid rotation gives rise to the quasi-geostrophic (QG) flow regimes \cite{charney1948scale}, where the Coriolis force dominates buoyancy effects, balances the pressure gradient at leading order, and organises the flow into columnar structures aligned with the rotation axis. 
Above the rotation-suppressed onset of convection \cite{chandrasekhar2013hydrodynamic}, increasing buoyancy transitions the flow morphology from Taylor columns to plumes, and eventually to geostrophic turbulence before the buoyancy effects dominate \cite{sprague2006numerical,king2012heat,julien2012statistical,aguirre2022flow,song2024scaling}. 

Rapidly rotating IHC is relevant to the dynamics of liquid alloys in planetary cores driven thermally or compositionally by radiogenic heating, secular cooling \cite{nimmo2015energetics}, and the exsolution of light elements \cite{badro2016early,o2016powering}, as well as to subsurface oceans on icy moons that are subject to tidal heating \cite{nimmo2018icy}. In many of these contexts, direct numerical simulations and laboratory experiments cannot reach the appropriate parameter regimes \cite{cardin2015experiments,christensen2015numerical}. The principal difficulty lies in attaining the Ekman number ($E$) and Rayleigh number ($R$) required to reproduce geostrophic turbulence, where $E$ measures the ratio of the rotation period to the viscous diffusion timescale and $R$ the destabilising effect of heating relative to diffusion.
For instance, the Ekman number is approximately $10^{-15}$ in the Earths liquid outer core \cite{de1998viscosity}, $10^{-13}$ in Mercury \cite{Seuren_2023} and possibly $10^{-15}$ in early Mars \cite{DIETRICH2013}. Given that convection requires a Rayleigh number at least as large as $E^{-4/3}$ \cite{chandrasekhar2013hydrodynamic,arslan2024internally}, the $R$ is at least on the order of $10^{17}$ and $10^{20}$ for these planetary bodies.
Due to the $E$-dependence of the maximum allowable time steps in computation \cite{julien2025rescaled}, and to mechanical strength limitations in laboratory experiments, observations of geostrophic turbulence are currently restricted to $E\gtrsim10^{-8}$ \cite{song2024scaling,cheng2020laboratory}. Moreover, the nonlinearity of the Navier–Stokes equations suggests that observations derived from a limited number of simulations may not accurately represent the systems' large-scale behaviour.

Rigorous analysis of the governing partial differential equations (PDEs) provides an alternative route to insight. An approach, first developed by Doering and Constantin, called the \textit{background field method} \cite{doering1992energy,doering1994variational,constantin1995variational,doering1996variational}, obtains bounds on long-time averages of diagnostic quantities in incompressible flows, determining how turbulent statistics depend on control parameters beyond the observable regime \cite{fantuzzi2022background}. The method relies on two key ideas. First, instead of optimising a mean quantity over exact PDE solutions, one considers a broader set of incompressible flows constrained only by integral balances such as energy and flux conservation. Second, the flow is decomposed into fluctuations and an arbitrary steady background field that satisfies the boundary conditions. By prescribing the geometry of this background field, the original variational problem over the full state space is reduced to an optimisation over the parameters of the background field. The resulting bounds depend on the chosen function and are valid, provided a quadratic form, termed the \textit{spectral constraint}, is non-negative. This method has been demonstrated to be a special case of the \textit{auxiliary functional method} \cite{chernyshenko2014polynomial,chernyshenko2022relationship} with quadratic functionals, in which bounded functionals encode the constraints from dynamics, and the background field serves as a perturbation of an energy norm.

In the present work, we employ {auxiliary functionals} to investigate a new model of rapidly rotating convection driven by internal heating. We consider a uniformly heated Boussinesq fluid confined between stress-free, isothermal horizontal parallel plates, with gravity directed vertically downward, opposite to the axis of rotation. 
The principal quantities characterising mean dissipation and energy transport at statistical equilibrium are the mean temperature 
and convective heat flux, each averaged over the volume and a sufficiently long time \cite{goluskin2016internally}.
These quantities are not \textit{a priori} related, in contrast to RBC, and can depend on $R$, $E$, as well as the Prandtl number $Pr$, defined as the ratio of viscous to thermal diffusivity, and the height to length ratio of the domain. Previous studies applying the auxiliary functional method have established rigorous bounds for non-rotating IHC \cite{lu2004bounds,whitehead2011internal,whitehead2012rigid,arslan2023rigorous,arslan2025new}. Bounds for rotating IHC have been derived under no-slip boundary conditions \cite{arslan2024internally}. However, if $E\lesssim R^{-2}$, the bounds on both quantities are independent of $E$. As shown in table 1 of \cite{arslan2024internally}, bounds are established only for rotationally-affected IHC, which corresponds to transitional regimes between buoyancy and rotation dominated dynamics. Therefore, the $E$ and $R$ dependence in the limit of rapid rotation remains an open question. Moreover,
no investigation has addressed rotating IHC with stress-free boundaries. Here, we prove the first bounds on the mean temperature and convective heat flux with respect to $R$ and $E$ in the regime of rapid rotation at infinite $Pr$.

A standard energy identity of the Navier-Stokes equation fails to capture the effects of the Coriolis force and therefore such bounds contain no information on this crucial addition to the momentum balance compared with the non-rotating case. To progress, rather than working with the standard Navier-Stokes equations, we derive a model inspired by the \textit{non-hydrostatic quasi-geostrophic} (NH-QG) model \cite{julien1998new}, an asymptotically reduced PDE system valid for $E^{1/3}\ll 1$. The NH-QG equations were first developed by Julien et al. \cite{julien1998new} for rapidly rotating RBC between stress-free parallel plates, have been used to probe geostrophic turbulence \cite{sprague2006numerical,julien2012statistical,calkins2015asymptotic}, and show consistency with solutions of the full Navier–Stokes equations \cite{van2025bridging}. 
Rapid rotation suppresses vertical fluid motion and induces an anisotropy between the horizontal length scale ${L}$ and the vertical scale $H$, resulting in columnar structures and reducing the system’s effective dimensionality \cite{gallet2015exact}. Leveraging this rotational constraint, the NH-QG equations introduce an anisotropic non-dimensionalisation in which ${L}=E^{1/3} H$, a scaling motivated by linear stability theory \cite{chandrasekhar2013hydrodynamic,arslan2024internally}. The model describes dynamics on the small scale ${L}$ where the horizontal and vertical velocities are comparable, while assuming rotation to be sufficiently rapid so that the leading-order forces remain in geostrophic balance on this scale. The derivation of the NH-QG model proceeds by decomposing the flow fields into horizontally averaged components and fluctuations, expanding them in powers of $E^{1/3}$, and matching terms order by order. This procedure separates fast, small-scale fluctuations from the slow evolution of the mean fields, and results in a closed PDE system for the leading-order components of the vertical velocity $w$, the vertical vorticity $\zeta$, the horizontal stream function $\psi$, the horizontally averaged temperature $\Theta$, and the temperature fluctuation $E^{1/3}\theta$. 

The proof of bounds can be further aided by working in the infinite Prandtl number limit. Bounds in this limit are relevant for compositional driven convection \cite{jones2015thermal}, and provide constraints that can be smoothly extended to finite $Pr$ cases \cite{wang2013bound,tilgner2022bounds}. With $Pr \to \infty$, the momentum equation reduces to a forced Stokes flow, which acts as a diagnostic constraint improving the estimates on the spectral constraint, the central step in deriving rigorous bounds within the auxiliary functional method. For the case of rapidly rotating RBC with stress-free boundaries, Grooms and Whitehead \cite{grooms2014bounds} were able to analyse the Green’s function of the Stokes flow to control the spectral constraint and proved that the Nusselt number $Nu$, which measures the ratio of the mean heat flux to its conductive counterpart, satisfies $Nu\lesssim R^3 E^{4}$ up to constants. Using a similar approach, Pachev et al. \cite{pachev2020rigorous} demonstrated $Nu \lesssim R^2 E^2$ for no-slip boundary conditions, where the effects of Ekman pumping on the bulk flow are incorporated through modified boundary conditions \cite{julien2016nonlinear}. 
The present work provides the first application of NH-QG rescaling and bounding methods to rapidly rotating IHC systems.

Before presenting the main results, we introduce the coordinate system and averaging operators. A Cartesian coordinate system $(x, y, Z)$ is adopted, where the rescaled vertical coordinate is given by $Z \coloneqq E^{1/3}z$, and hence $\partial_Z =E^{-1/3}\partial_z$. This encodes the rotation-induced anisotropy between horizontal and vertical length scales \cite{julien1998new}, so that the vertical and horizontal gradients of the flow fields are on the same order after rescaling. For any $f = f(x,y,Z,t)$ defined on the periodic domain $[0, L_x] \times [0, L_y] \times [0, 1]$, the following averaging operators are defined,
\begin{subequations}
\begin{gather}
    \overline{f}(Z,\,t)\coloneqq {\frac{1}{L_y L_x }\int_0^{L_y }\int_0^{L_x }f\,{\rm d}x\,{\rm d}y},\label{eq_ave_horizontal}\\
  \langle f \rangle_{V}(t)\coloneqq {\int_0^1  \overline{f}\,{\rm d}Z},\label{eq_ave_sp}\\
        \langle f\rangle_{V,\,t}\coloneqq\limsup_{\tau\to\infty}\frac{1}{\tau }{\int_0^\tau \langle f \rangle_{sp} \,{\rm d}t}.\label{eq_ave_all}
\end{gather}
\end{subequations}
These represent the horizontal average, the volume average, and the long-time volume average, respectively. The principal quantities to be bounded, namely the mean temperature and convective heat flux, are then denoted by $\langle \Theta \rangle_{V,\,t}$ and $\langle w \theta \rangle_{V,\,t}$. 

In this paper we prove the following two results.
\begin{theorem}[Lower bound on mean temperature]\label{Theorem_Theta}
Consider a rapidly rotating fluid driven by uniform internal heating in a plane layer with isothermal and stress-free boundary conditions. Then, the spatial and long-time average of $\Theta$ is bounded as
    \begin{equation}
     \langle\Theta\rangle_{V,\,t}\geq 
     \begin{cases}
     c_0E^{-8/7}R^{-6/7}-c_1 E^{-12/7}R^{-9/7},&d_0 E^{-4/3}\leq R<d_1E^{-4/3};\\
         c_2E^{-1}R^{-3/4}-c_3E^{-2}R^{-3/2},&  R\geq d_1 E^{-4/3},    
     \end{cases}
\end{equation}
and $d_0 E^{-4/3}$ corresponds to a lower bound on the nonlinear instability of the flow. For $R<d_0 E^{-4/3}$ the flow is purely conductive and $\langle\Theta\rangle_{V,\,t}=1/12$. The positive constants $c_0$ to $c_3$ and $d_0, d_1$ are independent of $R$ and $E$.
\end{theorem}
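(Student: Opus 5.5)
The strategy is the auxiliary functional (background field) method applied to the reduced NH-QG internally heated model at infinite Prandtl number. The model couples a mean temperature $\Theta(Z,t)$ to fluctuations $(w,\zeta,\psi,\theta)$. With $Pr\to\infty$ the momentum equations become a diagnostic Stokes-type balance, $\partial_Z \zeta = \nabla_\perp^4 w + \ldots$ and $\partial_Z w = -\nabla_\perp^2\zeta$ up to signs. In horizontal Fourier space this gives $\hat w_k$ in terms of $\hat\theta_k$ through the operator
\[
(k^6 - \partial_Z^2)\,\hat w_k = R\,k^2\,\hat\theta_k ,
\]
with stress-free conditions $w=\partial_Z^2 w=0$ at $Z=0,1$. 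The mean equation is $\partial_t\Theta + \partial_Z\overline{w\theta} = \partial_Z^2\Theta + 1$ with $\Theta(0)=\Theta(1)=0$.

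\textbf{Step 1: identities and the quadratic functional.} From the mean equation I derive $\langle \Theta\rangle_{V,t} = \tfrac{1}{12} - \tfrac12\langle (1-2Z)\,w\theta\rangle_{V,t}$, together with the fluctuation temperature energy balance. I then take the quadratic auxiliary functional
\[
V = \tfrac{\beta}{2}\langle \theta^2\rangle_V + \langle \tau(Z)\,\Theta\rangle_V ,
\]
with a balance parameter $\beta>0$ and a background profile $\tau(Z)$. Requiring $\langle \Theta\rangle + \tfrac{{\rm d}V}{{\rm d}t} - L \ge 0$ for all admissible fields, and completing the square in $\Theta$, gives the bound
\[
\langle \Theta\rangle_{V,t} \ge \langle \tau\rangle - \tfrac14\langle (\tau')^2\rangle ,
\]
roughly, with $\tau$ required to vanish at the walls. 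The price is the spectral constraint: for each wavenumber $k$,
\[
\beta\langle |\partial_Z\hat\theta_k|^2 + k^2|\hat\theta_k|^2\rangle + \langle(\tau' - \beta\,\Theta\text{-terms})\,\mathrm{Re}(\hat w_k\hat\theta_k^*)\rangle \ge 0 .
\]

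\textbf{Step 2: choice of $\tau$.} I take $\tau$ linear (equal to the conductive profile) in the bulk and adjusted to zero slope, or to a vanishing sign-indefinite weight, inside boundary layers of width $\delta$ near $Z=0$ and $Z=1$. The indefinite term then lives only in the layers. The resulting bound has the form $\langle\tau\rangle - \ldots \sim c\,\delta - C\delta^2$, and the optimal admissible $\delta$ produces the two-term structure in the statement.

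\textbf{Step 3: controlling the indefinite term, the main obstacle.} Inside the boundary layers I need pointwise control of $\hat w_k$ near the walls. Since $\hat w_k(0)=0$, I have $|\hat w_k(Z)|\le Z^{1/2}\|\partial_Z \hat w_k\|_2$. The Stokes relation then gives an energy estimate,
\[
k^6\|\hat w_k\|^2 + \|\partial_Z\hat w_k\|^2 \le R k^2 \|\hat w_k\|\,\|\hat\theta_k\| ,
\]
and similarly $|\hat\theta_k(Z)|\le Z^{1/2}\|\partial_Z\hat\theta_k\|$. Combining these, the layer term is bounded by $\delta^{2}$ times a $k$-dependent factor times $\|\partial_Z\hat\theta_k\|^2$. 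Rather than analysing the Green's function as Grooms and Whitehead do, I optimise over $k$ using these Hardy/Poincaré-type estimates. Two regimes arise, depending on whether the worst wavenumber is interior ($k^6\sim$ the vertical scale, which gives $\delta\sim E^{-4/7}R^{-3/7}$-type scalings after restoring $E$ through the rescaling $R\to RE^{4/3}$) or limited by the domain/$k^2$ damping term (which gives $\delta\sim R^{-3/4}$-type scalings). The crossover occurs at $R\sim E^{-4/3}$, which defines $d_1$.

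\textbf{Step 4: conclusion.} Choosing $\beta$ and $\delta$ so that the spectral constraint just holds yields the two lower bounds. For $R<d_0E^{-4/3}$, energy stability shows that perturbations decay, so $\Theta$ converges to the conductive profile $Z(1-Z)/2$ with mean $1/12$.

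The hardest step is Step 3. Getting the sharp $k$-optimisation exponents ($8/7$, $6/7$) requires using both terms of the operator $k^6-\partial_Z^2$. I would first try a weighted estimate that splits the wavenumbers at $k^6\sim\delta^{-2}$.
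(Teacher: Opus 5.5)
Your plan to avoid the Green's function and instead combine Hardy/Agmon-type inequalities with a split in $k$ is the same as the paper's. However, Step 3 as written fails, because it relies on a dissipation term that the reduced model does not contain. In \eqref{eq_theta} the fluctuation $\theta$ diffuses only horizontally, so the only coercive part of the spectral constraint is $k^2\|\theta_{\bm k}\|_2^2$ (see \eqref{eq_Q1_k}). There is no $\beta\|\partial_Z\theta_{\bm k}\|_2^2$, so neither $|\theta_{\bm k}(Z)|\le Z^{1/2}\|\partial_Z\theta_{\bm k}\|_2$ nor a layer estimate proportional to $\|\partial_Z\theta_{\bm k}\|_2^2$ can be absorbed. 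Dropping that step and keeping only your energy estimate does not help. The energy estimate controls only $\|w_{\bm k}\|_2$ and $\|w_{\bm k}'\|_2$, and with $|w_{\bm k}|\le\sqrt{Z}\|w_{\bm k}'\|_2$ the layer width is exactly cancelled by $|\phi'|\le 2A/\delta$. So at small $k$ you gain nothing in $\delta$, end with a $\delta$-independent restriction on $\mathcal{R}A$, and get the wrong exponents.

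The missing idea is to let the Stokes balance supply the vertical regularity. The correct relation is $\mathcal{R}k^4\theta_{\bm k}=k^6w_{\bm k}-w_{\bm k}''$ (with $k^4$, not $k^2$). Eliminating $\theta_{\bm k}$ with it turns $k^2\|\theta_{\bm k}\|_2^2$ into $\mathcal{R}^{-2}\bigl(k^6\|w_{\bm k}\|_2^2+2\|w_{\bm k}'\|_2^2+k^{-6}\|w_{\bm k}''\|_2^2\bigr)$. It also turns the indefinite term into $\mathcal{R}^{-1}\int_0^1\phi'\bigl(k^2|w_{\bm k}|^2-k^{-4}\mathrm{Re}\{w_{\bm k}''w_{\bm k}^*\}\bigr)\,\mathrm{d}Z$. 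The resulting $w''$ control lets you use $|w_{\bm k}(Z)|\le Z\|w_{\bm k}'\|_\infty$ with $\|w_{\bm k}'\|_\infty^2\le 2\|w_{\bm k}'\|_2\|w_{\bm k}''\|_2$ and $\|w_{\bm k}\|_\infty^2\le 2\|w_{\bm k}\|_2\|w_{\bm k}'\|_2$. These give $Q_{\bm k}\ge0$ for $k\ge k_c$ if $\mathcal{R}A\lesssim k_c$, and for $k\le k_c$ if $\mathcal{R}A\lesssim\min\{k_c^{-2}\delta^{-1},(k_c\delta)^{-1/2}\}$. Matching the two at $k_c\propto\delta^{-1/3}$ yields the single condition $\mathcal{R}A\le c\,\delta^{-1/3}$.

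Steps 1, 2 and the regime analysis also need repair.

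\textbf{The auxiliary functional.} Your $V$ omits the mean-field part $\frac{\beta}{2}E^{-2/3}\langle\Theta^2\rangle_V$. Without it, the term $\beta\langle\Theta'\,w\theta\rangle_V$ produced by $\frac{\mathrm{d}}{\mathrm{d}t}\frac{\beta}{2}\langle\theta^2\rangle_V$ does not cancel; it is cubic because $\Theta$ is itself unknown, and it is the source of your ``$\beta\,\Theta$-terms''. There is also then no $\langle|\Theta'|^2\rangle_V$ with which to complete the square. In addition, the sign must make the dissipation enter positively, as in $\langle\Theta\rangle_V-\mathrm{d}\mathcal{V}/\mathrm{d}t$ with \eqref{eq_auxiliaryfunction} and $\beta=1$.

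\textbf{The background profile.} Once this is fixed, the indefinite weight is $\phi'$ itself, so the background must be \emph{constant} in the bulk, not conductive or linear. A linear bulk, with weight $\phi'-1$, is the construction for $\langle w\theta\rangle_{V,t}$. The paper's \eqref{eq_phi} has quadratic layers and bulk value $A$, giving $B=A(1-\tfrac{2\delta}{3})-\tfrac{2A^2}{3\delta}$. The amplitude $A$ is essential, because it is $\mathcal{R}A$ that the spectral constraint limits.

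\textbf{The two regimes.} They do not come from competing wavenumber ranges. The bounding problem, with the constraint imposed for all $k>0$, depends on $R,E$ only through $\mathcal{R}=RE^{4/3}$, so every threshold is automatically of the form $R\propto E^{-4/3}$. Both branches come from maximising $B$ under the single condition $\mathcal{R}A\le c\,\delta^{-1/3}$ together with $\delta\le\frac12$.
\begin{itemize}
\item The choice $(A,\delta)=(\frac14,\frac12)$ gives $\Theta_{opt}=\Theta_c$ and $B=\frac{1}{12}$, and is feasible for $\mathcal{R}\le\frac35\mathcal{R}_0$. With $\langle\Theta\rangle_{V,t}\le\frac1{12}$, this gives the conductive regime and $d_0$.
\item Continuing along $\delta=\sqrt{A}\propto\mathcal{R}^{-3/7}$ gives $B=\delta^2-\frac43\delta^3$, the $E^{-8/7}R^{-6/7}$ branch.
\item The leading-order constrained optimum $A=\frac{3}{10}\delta$, $\delta\propto\mathcal{R}^{-3/4}$, gives $B=\frac{6}{25}\delta-\frac15\delta^2$, the $E^{-1}R^{-3/4}$ branch.
\end{itemize}
The constant $d_1$ is simply where these two explicit expressions cross.
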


\begin{theorem}[Upper bound on mean heat flux]\label{Theorem_wtheta}
Consider a rapidly rotating fluid driven by uniform internal heating in a plane layer with isothermal and stress-free boundary conditions. Then, the spatial and long-time average of the convective heat transport is bounded as
\begin{equation}
     \langle {w\theta}\rangle_{V,\,t}\leq
     \begin{cases}
      c_0 E^{8/7}R^{6/7}-c_1
         +c_2 E^{-4/7}R^{-3/7}-c_3E^{-8/7}R^{-6/7}
         , &d_0E^{-4/3}\leq R<d_1E^{-4/3};\\
    c_4E^{4/5}R^{3/5}-c_5- c_6 E^{-4/5}R^{-3/5}+c_7E^{-8/5}R^{-6/5}, &  R\geq d_1 E^{-4/3},
     \end{cases}
\end{equation}
where $c_0$ to $c_7$ and $d_0$ to $d_1$ are positive constants independent of $R$ and $E$.
For $R<d_0E^{-4/3}$, the flow is purely conductive and $\langle w\theta\rangle_{V,\,t}=0$.
\end{theorem}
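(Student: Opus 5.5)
We prove Theorem~\ref{Theorem_wtheta} with the auxiliary functional (background field) method applied to the reduced NH-QG system at infinite $Pr$, in parallel with Theorem~\ref{Theorem_Theta}. The first step is to record the integral identities satisfied by long-time averages. Horizontally averaging the mean temperature equation $\partial_t\Theta+\partial_Z\overline{w\theta}=\partial_Z^2\Theta+1$ and integrating in $Z$ with $\Theta(0)=\Theta(1)=0$ gives the flux identity $\partial_Z\Theta = \overline{w\theta}-Z+\tfrac12-\text{const}$. Hence $\langle w\theta\rangle_{V,t}$ is tied linearly to the boundary flux asymmetry. In particular, $\langle w\theta\rangle_{V,t}=\tfrac12-\langle\partial_Z\Theta|_{Z=0}\rangle_t$ up to normalisation, and conduction gives $0$. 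The fluctuation equation multiplied by $\theta$ gives the energy balance $\langle|\nabla\theta|^2\rangle = -\langle w\theta\,\partial_Z\Theta\rangle$, to leading order.

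Next we build the auxiliary functional
\[
\mathcal V=\Big\langle \tfrac{\beta}{2}|\theta|^2 - \tau(Z)\,\theta\Big\rangle_V
\]
on the fluctuations, together with the mean field written as $\Theta=\tau+\vartheta$ with background $\tau$. Adding $\tfrac{d}{dt}\mathcal V$, which averages to zero, to $\langle w\theta\rangle$ expresses $\langle w\theta\rangle_{V,t}$ as an explicit functional of $\tau$ minus a quadratic form $\mathcal Q[\theta,w]$. The form is of type $\langle \beta|\nabla\theta|^2+(\beta\tau'-\text{weight})w\theta\rangle$. The bound then holds whenever the spectral constraint $\mathcal Q\ge0$ is satisfied.

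The diagnostic infinite-$Pr$ Stokes balance $\partial_Z^2 w + \nabla_\perp^4 w \propto R E^{4/3}\,\nabla_\perp^2\theta$ lets us work in horizontal Fourier modes and write $\hat w_k = R E^{4/3}\, G_k\hat\theta_k$, following Grooms--Whitehead. The spectral constraint then reduces, mode by mode, to
\[
\int_0^1 (\tau'-\tfrac{1}{\beta})\,\hat w_k\hat\theta_k \, dZ \le \|\partial_Z\hat\theta_k\|^2+k^2\|\hat\theta_k\|^2 .
\]
For the background we choose $\tau$ with boundary layers of width $\delta$ (and a possibly different $\delta'$ at the top) and linear or constant in the bulk, so that $\tau'-1/\beta$ vanishes in the bulk. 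Constraints then arise only from the boundary layers.

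The key estimate, and the expected main obstacle, is to bound $|\hat w_k|$ near $Z=0$ and $Z=1$. We do this using $w=0$ at the boundaries together with the Green's function: a pointwise estimate $|\hat w_k(Z)|\lesssim RE^{4/3}\,\min(Z^{1/2},\dots)\,f(k)\,\|\hat\theta_k\|$, with separate treatment of small and large $k$. This is where the two regimes arise. For $R$ near onset, the dominant wavenumbers are $O(1)$ and we would expect $\delta\sim (RE^{4/3})^{-6/7}$-type scaling after optimising the Hardy/Poincaré-type inequalities in $Z$. For large $R$, the interpolation $\min$ switches branch and gives $\delta\sim (RE^{4/3})^{-3/5}$. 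I would first try Cauchy–Schwarz in $Z$ with $\hat w_k(Z)=\int_0^Z\partial_Z\hat w_k$ and bound $\|\partial_Z \hat w_k\|$ via the Stokes operator's symbol $k^2/(m^2+k^4)$, maximised over $k$. Finally, we choose the largest admissible $\delta$ and substitute into the explicit functional. Expanding the boundary-layer contributions yields the four-term expressions, whose leading term is $(E^{4/3}R)^{6/7}$, respectively $(E^{4/3}R)^{3/5}$. The constants $d_0,d_1$ come from the thresholds where $\delta=O(1)$ and where the branch changes; below $d_0E^{-4/3}$ the constraint holds with $\tau$ equal to the conductive profile, forcing $\langle w\theta\rangle_{V,t}=0$.
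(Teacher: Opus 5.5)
Your overall architecture (quadratic auxiliary functional, horizontal Fourier modes, the infinite-$Pr$ balance used mode by mode, and a background whose bulk slope cancels the weight so that only a boundary layer enters) is the right one. However, three ingredients are wrong as written.

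(i) The functional is degenerate: $\langle\tau\theta\rangle_V=\int_0^1\tau\,\overline{\theta}\,{\rm d}Z=0$, so your $\mathcal V$ is just $\frac\beta2\langle\theta^2\rangle_V$. Then $\langle w\theta\rangle_V+\frac{{\rm d}\mathcal V}{{\rm d}t}=\langle(1-\beta\Theta')w\theta\rangle_V-\beta\langle|\nabla_h\theta|^2\rangle_V$ still contains the unknown $\Theta'$. You need $\mathcal V=\frac\beta2\langle\theta^2+E^{-2/3}\Theta^2\rangle_V-E^{-2/3}\langle\phi\,\Theta\rangle_V$, or equivalently $\frac\beta2\langle\theta^2+E^{-2/3}(\Theta-\tau)^2\rangle_V$ with $\phi=\beta\tau$. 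The slow-time weight $E^{-2/3}$ is exactly what makes the two $\beta\langle\Theta' w\theta\rangle_V$ terms cancel, so that $Q$ splits into $Q_0\{\Theta\}+\sum_{\bm k}Q_{\bm k}$.

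(ii) The reduced $\theta$-equation has only horizontal diffusion. Your spectral constraint therefore contains no $\|\partial_Z\hat\theta_{\bm k}\|_2^2$: the only dissipation is $\beta k^2\|\theta_{\bm k}\|_2^2$, and no Poincar\'e or Hardy inequality in $Z$ is available for $\theta$.

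(iii) The Stokes balance is $-w_{\bm k}''+k^6w_{\bm k}=\mathcal R k^4\theta_{\bm k}$, with symbol $k^4/(m^2+k^6)$, not $k^2/(m^2+k^4)$. This anisotropy is what produces $k_c\propto\delta^{-1/3}$.

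The paper avoids Green's functions. It eliminates $\theta_{\bm k}$, so that $k^2\|\theta_{\bm k}\|_2^2=\mathcal R^{-2}(k^6\|w_{\bm k}\|_2^2+2\|w'_{\bm k}\|_2^2+k^{-6}\|w''_{\bm k}\|_2^2)$. It then bounds the layer integral using pointwise estimates from $w_{\bm k}=0$ at the wall and interpolation bounds such as $\|w_{\bm k}\|_\infty^2\le2\|w_{\bm k}\|_2\|w'_{\bm k}\|_2$. The large- and small-$k$ estimates are glued at $k_c\propto\delta^{-1/3}$ into one condition, $\mathcal R/\beta\le c\,\delta^{-1/3}$, valid for all $\mathcal R$. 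Here $\phi=Z$ in the bulk with a single quadratic layer at $Z=1$; a constant bulk would not cancel the weight. A Grooms--Whitehead Green's-function argument could replace this step, but only with the correct symbol and using $k^2\|\theta_{\bm k}\|_2^2$ alone.

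The decisive gap is your account of the exponents and the two regimes. You attribute them to a branch switch in the pointwise estimate and predict $\delta\sim\mathcal R^{-6/7}$ and $\delta\sim\mathcal R^{-3/5}$. That implicitly identifies the bound with $\delta^{-1}$, as for $Nu$ in Rayleigh--B\'enard convection. Here the bound comes from minimising the convex quadratic $Q_0$ over $\Theta'$ subject to $\int_0^1\Theta'\,{\rm d}Z=0$, with minimiser $\Theta'=\phi'/(2\beta)-Z/2+1/4$. For the above background this gives
\[
B=\frac{\beta}{48}+\frac{1}{4\beta}\|\phi'\|_2^2-\frac12\int_0^1\phi\,{\rm d}Z=\frac{\beta}{48}+\frac{1}{3\beta\delta}-\frac{1}{4\beta}+\frac{\delta}{6}-\frac14 .
\]
The spectral constraint forces $\beta\ge\mathcal R\delta^{1/3}/c$, so $\beta$ must grow with $\mathcal R$, and the term $\beta/48$ drives the growth of the bound.

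Balancing $\beta/48$ against $1/(3\beta\delta)$ on the constraint boundary gives $\delta=(\mathcal R/\mathcal R_0)^{-6/5}$ and $\beta=8(\mathcal R/\mathcal R_0)^{3/5}$, with $\mathcal R_0=8c$. This yields the $\mathcal R^{3/5}$ expression. The $\mathcal R^{6/7}$ expression comes from a different family: $\delta=\sqrt{2/\beta}$, which minimises $B$ in $\delta$ at fixed $\beta$ and contains the conductive choice $\beta=2$, $\delta=1$, $B=0$. With the constraint active this gives $\delta\propto\mathcal R^{-3/7}$ and $B\sim\beta\sim\mathcal R^{6/7}$.

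So $d_0=2c$ is where $(\beta,\delta)=(2,1)$ stops being admissible, and $d_1\approx2.16\,\mathcal R_0$ is merely where the two explicit bounds cross. The spectral condition never changes branch; its exponent $-1/3$ cannot be improved, as layer-localised test functions show. Your $\delta$-scalings instead give, e.g., $B\ge\beta/48-\frac14\gtrsim\mathcal R^{4/5}$ when $\delta\sim\mathcal R^{-3/5}$. The four-term expressions of the theorem therefore cannot be recovered, and ``choose the largest admissible $\delta$'' must be replaced by a joint choice of $(\beta,\delta)$.
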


Specifically, we find $d_0=5.8$ and $d_1=19$ for \cref{Theorem_Theta}. For \cref{Theorem_wtheta}, the constants are $3.5$ and $30$. The remainder of the article is organised as follows. In \cref{preliminaries}, we introduce the reduced model for rapidly rotating internally heated convection, with the derivation of the governing equations detailed in appendix \ref{A_1_demonstration} within the NH-QG framework. The diagnostic quantities of interest are then defined and shown to be related to $\langle\Theta\rangle_{V,\,t}$ and $\langle w\theta\rangle_{V,\,t}$. Theorems \ref{Theorem_Theta} and \ref{Theorem_wtheta} are proved in sections \ref{Theta} and \ref{sec_wtheta}, respectively, with all constants explicitly stated. Finally, \cref{Discussion} provides a discussion and summary of the present work. Appendix \ref{Appendix_V} establishes the boundedness of the auxiliary functional used in this study, and appendix \ref{appendix_LST} presents a linear stability analysis of the conduction state. For notation, we use $\|\cdot\|_2$ and $\|\cdot\|_{\infty}$ for the $L^2$ and $L^\infty$ norms of a function over $Z\in[0,\,1]$, and $(\cdot)'$ denotes the differentiation with respect to $Z$.

\section{Preliminaries}
\label{preliminaries}
\subsection{Configuration and governing equations}
\label{sec_setup}

\begin{figure}[h!]
    \centering

    \includegraphics[width=0.6\linewidth]{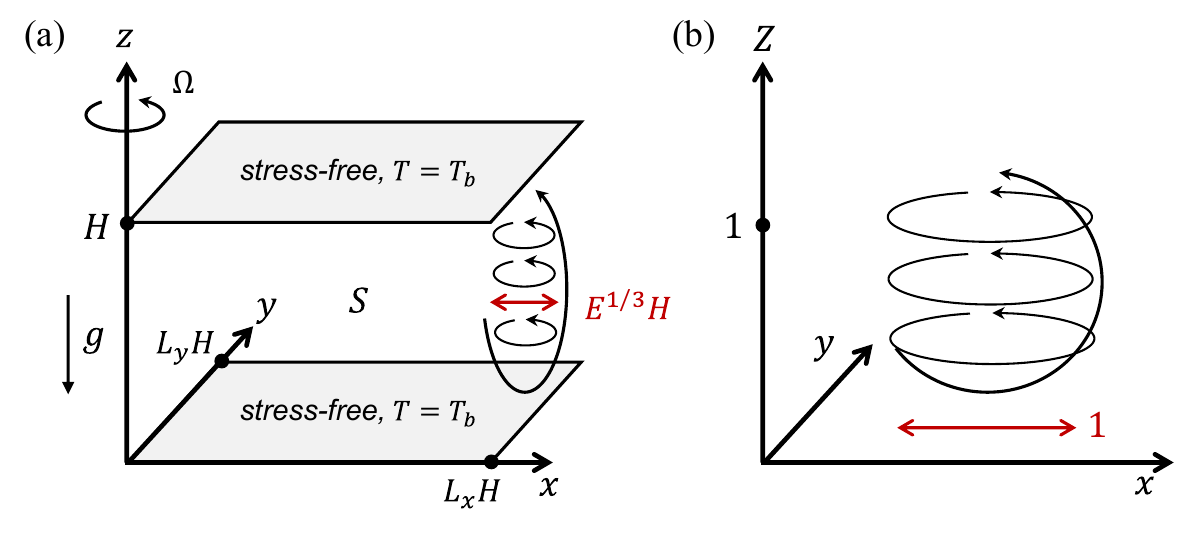}
    \caption{Left: a schematic diagram of rotating internally heated convection between two isothermal parallel plates with stress-free boundary conditions. The uniform heating source is denoted by $S$, and $H$ is the separation between the two plates in the direction of gravity, $g$. Horizontal periodicities are $L_x H$ and $L_y H$. The convection is subject to global rotation $\Omega$. Without loss of generality, the boundary temperature is set to $T_b = 0$. Circles indicate flow structures with a characteristic vertical scale $H$ and horizontal scale $E^{1/3}H$. Right: The same flow structures shown in dimensionless coordinates. The vertical and horizontal length scales are $H$ and $E^{1/3}H$ respectively, with $E\ll 1$.}
    \label{fig_setup}
\end{figure}

As illustrated in figure \ref{fig_setup}(a), we consider an incompressible fluid between two parallel plates separated by a distance $H$ with horizontal periodicities of $L_x H$ and $L_y H$. The fluid is characterised by a reference density $\rho$, thermal diffusivity $\kappa$, kinematic viscosity $\nu$, and thermal expansivity $\alpha$. Convection is driven by a uniform volumetric heat source $S$ under a constant gravitational acceleration $-g \hat{\bm{Z}}$. The system rapidly rotates about the vertical axis with constant angular velocity $\Omega$.  

To describe rapid rotation, we use an asymptotically reduced set of equations to model the convection of a Boussinesq fluid in the limit of infinite Prandtl number, $Pr$. The $Pr$ is defined in \eqref{eq:nondim_parameters} and is the ratio of viscous to thermal diffusivity. The other control parameters are the Rayleigh $R$ and Ekman $E$ numbers, where $R$ quantifies the destabilising effect of heating to the stabilising effect of diffusion and $E$ the effect of inertia to rotation. These control parameters are defined as,  
\begin{equation}
    \label{eq:nondim_parameters}
    R = \frac{\alpha g S H^5}{\rho c_p \nu \kappa^2}, \quad E = \frac{\nu}{2\Omega H}, \quad Pr= \frac{\nu}{\kappa}.
\end{equation}
An additional key control parameter characterising the dynamics is the reduced Rayleigh number \cite{julien1998new} measuring the degree of supercriticality,
\begin{equation}\label{eq_gamma}
    \mathcal{R} = R E^{4/3}.
\end{equation}

To non-dimensionalise we choose the vertical length scale to be $H$, and the horizontal length scale to be $E^{1/3} H$. The effect of this anisotropic non-dimensionalisation on rotational-constrained columnar flow structures is illustrated in figure \ref{fig_setup}(b). The time and temperature scales are $E^{2/3} H^2 / \kappa$ and $S H^2/\rho c_p \kappa$ respectively. Then, in the asymptotic limit of rapid rotation the equations for convection at infinite $Pr$ (see appendix \ref{app:eqs} for a derivation) are
\begin{subequations}\label{eq_QG}
    \begin{gather}
        \partial_Z\psi ={\mathcal{R}}\theta+\nabla_h^2w,\label{eq_w}\\
        -\partial_Zw=\nabla_h^2\zeta,\label{eq_zeta}\\
        \zeta=\nabla_h^2\psi,\label{eq_psi}\\  \partial_t\theta +     J\left[\psi,\theta\right]+w\partial_Z \Theta=\nabla_h^2\theta,\label{eq_theta}\\ E^{-\frac{2}{3}}\partial_t \Theta+\partial_Z(\overline{w\theta})=\partial_Z^2 \Theta+1.\label{eq_Theta}
    \end{gather}
\end{subequations}
Here, $J[\psi,\,\cdot\,]\coloneqq \partial_x\psi\,\partial_y(\cdot) - \partial_y\psi\,\partial_x(\cdot)$, $\nabla_h(\cdot)\coloneqq \hat{\bm x}\partial_x(\cdot)+\hat{\bm y}\partial_y(\cdot)$, and $\nabla_h^2 \coloneqq \partial_x^2 + \partial_y^2$ denotes the horizontal Laplacian. Then, $\psi(x,y,Z,t)$ is the horizontal streamfunction, with the horizontal velocity being $-\nabla_h\times(\psi\hat{\bm{Z}})$; $w(x,y,Z,t)$ is the vertical velocity, $\zeta(x,y,Z,t)$ the vertical vorticity, $\theta(x,y,Z,t)$ the temperature anomaly, and $\Theta(Z,t)$ the horizontally averaged temperature field. It is noted that $w$, $\zeta$, $\theta$, and $\Theta$ represent the leading-order components of the corresponding fields in an asymptotic expansion in powers of $E^{1/3}$, with the incompressibility condition satisfied by advection on higher-orders. If these variables are instead interpreted as the full fields, then \cref{eq_w,eq_zeta,eq_psi,eq_theta,eq_Theta} incur an error of $O(E^{1/3})$. The total temperature field  $T(x,y,Z,t)$ is approximated by
\begin{gather}\label{eq_totaltemperature}
     T=\Theta+E^{\frac{1}{3}}\theta.
\end{gather}
As demonstrated in appendix \ref{A_1_demonstration}, the horizontal means of $w$, $\zeta$ and $\psi$ vanish, while $\theta$ has zero horizontal mean by definition. Hence, $\overline{\theta} = \overline{w} = \overline{\zeta} = \overline{\psi} = 0$.

In the derivation of \eqref{eq_QG}, the geostrophic balance is assumed to constitute the leading-order force balance throughout the domain, thereby excluding any boundary-layer effects (see appendix \ref{A_1_demonstration}). For the physical consistency of the resulting PDEs, we impose stress-free mechanical boundary conditions. For the temperature, the isothermal boundary conditions $T=T_b$ are applied. Without loss of generality, $T_b$ is set to zero. The boundary conditions are thus given by
\begin{equation}\label{eq_BC}
        w=\partial_Z\zeta=\partial_Z\psi =\theta=\Theta=
        0,\quad\text{at }Z=0\text{ and }Z=1.
\end{equation}

\subsection{Quantities of interest}\label{sec_emergentquantities}

Having defined the governing equations and boundary conditions, now we establish the relevant quantities for which we have proven bounds in \cref{Theorem_Theta} and \cref{Theorem_wtheta}. The mean temperature $\langle\Theta\rangle_{V,\,t}$ quantifies the thermal dissipation and effective mixing of the fluid, while the mean convective heat flux $\langle {w\theta}\rangle_{V,\,t}$ quantifies the asymmetry in heat flux out of the domain due to convection.
To see this, we multiply \cref{eq_theta,eq_Theta} by $\theta$ and $\Theta$ respectively, apply the spatial average \eqref{eq_ave_sp}, integrate by parts and sum to obtain
   \begin{equation}\label{eq_asV_energyevo}
    \frac{E^{-\frac{2}{3}}}{2} \frac{{\rm d}}{{\rm d} t}\langle \Theta^2+E^{\frac{2}{3}}\theta^2\rangle_{V}=-\langle \left|\nabla_h \theta\right|^2+(\partial_Z  \Theta)^2\rangle_{V}+\langle\Theta\rangle_{V},
\end{equation}
where the left-hand side represents the changing rate of the thermal energy density, while the first term on the right-hand side corresponds to the thermal dissipation rate. Further applying the long-time average yields the mean energy equilibrium
\begin{equation}
         \langle \,\left|\nabla_h \theta\right|^2+(\partial_Z \Theta)^2\rangle_{V,\,t}  = \langle\Theta\rangle_{V,\,t}.
        \label{eq_temperature_energy}
\end{equation}
On the other hand, multiplying \cref{eq_w,eq_zeta} by $w$ and $\psi$ respectively and proceeding similarly leads to the identity
\begin{equation}
    \langle\,\zeta^2+\left|\nabla_h w\right|^2\rangle_{V,\,t} = {\mathcal{R}}\langle w\theta\rangle_{V,\,t},\label{eq_kinetic_energy}
\end{equation}
with the left-hand side representing the dissipation arising from velocity gradients. As can be seen in \eqref{eq_temperature_energy} and \eqref{eq_kinetic_energy} the mean temperature and convective heat transport are non-negative, and this gives uniform lower bounds of $\langle  w \theta  \rangle_{V,\,t} \geq 0 $ and $ \langle \Theta \rangle_{V,\,t} \geq 0  $. For a uniform upper bound on  $\langle  w \theta  \rangle_{V,\,t}$, we multiply \eqref{eq_Theta} by $Z$ and take the average \eqref{eq_ave_all}. With $\mathcal{F}_T$ and $\mathcal{F}_B$ defined as the heat fluxes exiting through the top and bottom boundaries, this yields 
\begin{subequations}\label{eq_boundaryflux}
    \begin{gather}
        \mathcal{F}_T\coloneqq\left. -\limsup_{\tau \rightarrow \infty}\frac{1}{\tau}\int^{\tau}_0
        \partial_Z \Theta
        \right|_{Z=1} \mathrm{d}t =\frac{1}{2}+\langle w\theta\rangle_{V,\,t},\label{eq_FT}\\
        \mathcal{F}_B\coloneqq \left. \limsup_{\tau \rightarrow \infty}\frac{1}{\tau}\int^{\tau}_0  \partial_Z \Theta\right|_{Z=0} \mathrm{d}t =\frac{1}{2}-\langle w\theta\rangle_{V,\,t}.\label{eq_FB}
    \end{gather}
\end{subequations}
Given that the temperature within the domain is pointwise non-negative, $T \geq T_b=0$ with any deviation decaying exponentially in time \cite{arslan2021bounds}, the heat fluxes are also signed and thus we infer that $\langle w\theta \rangle_{V,\,t} \leq 1/2$. The uniform upper bound on the mean temperature is given by the conductive solution of the system \eqref{eq_QG} subject to \eqref{eq_BC}, where the conductive temperature profile is 
\begin{equation}\label{eq_conduction}
    \Theta_c=\frac{1}{2}Z(1-Z).
\end{equation}
It follows that $\langle \Theta \rangle_{V,\,t} \leq \langle \Theta_c \rangle_{V}=1/12$, with a proof analogous to that in \cite{goluskin2012convection}.

\subsection{Horizontal Fourier domain}

Before presenting the framework for deriving rigorous bounds on $\langle{\Theta}\rangle_{V,\,t}$ and $\langle {w\theta}\rangle_{V,\,t}$, we introduce the Fourier expansion in horizontal direction. Let $\mathcal{K}$ denote the set of all non-zero horizontal wavevectors $\bm{k}=k_x\hat{\bm{x}}+k_y\hat{\bm{y}}$ permitted by the dimensionless periods $E^{-\frac{1}{3}}L_x$ and $E^{-\frac{1}{3}}L_y$. For any scalar field denoted generically by $f$, we write
\begin{subequations}
\begin{gather}
    f=\overline{f}+\sum_{\bm{k}\in\mathcal{K}}{f}_{\bm{k}}(Z,\,t)\,{\rm e}^{{\rm i} \left(k_xx+k_yy\right)},\nonumber \label{eq_Fouriertransform}\\
    \mathrm{where} \quad \mathcal{K} = \left\{ (k_x,k_y) = E^{\frac13}\left(\frac{2\pi m}{L_x}, \frac{2\pi n}{L_y}\right)\Bigg| (m,n) \in \mathbb{Z}^2 \,\& \, (m,n)\neq(0,0) 
\right\}.
\end{gather}
\end{subequations}
Therein, ${f}_{\bm{k}}$ represent the Fourier coefficients. Since $f$ is real-valued, the coefficients satisfy ${f}_{\bm{k}}^{*}={f}_{-\bm{k}}$, with $(\cdot)^{*}$ denoting complex conjugation. Under the expansion \eqref{eq_Fouriertransform}, the combination of \cref{eq_psi,eq_w,eq_zeta} yields a diagnostic relation between $w_{\bm{k}}$ and $\theta_{\bm{k}}$,
\begin{equation}\label{eq_ODE}
    \theta_{\bm{k}}=\frac{1}{{\mathcal{R}}}\left(k^2 {w}_{\bm{k}}-\frac{1}{k^4}\partial_Z^2 {w}_{\bm{k}}\right),
\end{equation}
in which $k=\sqrt{k_x^2+k_y^2}$ is the magnitude of each wavevector.

\section{Bounds for mean temperature $\langle \Theta\rangle_{V,\,t}$}
\label{Theta}

\subsection{Bounding framework}\label{subsec_Theta_framework}

To bound $\langle\Theta\rangle_{V,\,t}$, we follow the {auxiliary functional method} \cite{fantuzzi2022background}, which is based on the observation that for any bounded functional $\mathcal{V}\{\Theta,\theta\}$ along the solutions of the governing equations \eqref{eq_QG}, the time derivative of  $\mathcal{V}\{\Theta,\theta\}$ has zero long-time average. Moreover, the time average of a quantity is bounded from below by its pointwise infimum, so that, where $\mathcal{V}$ is independent of spatial coordinates,
\begin{equation}\label{eq_infTheta}
\langle\Theta\rangle_{V,\,t}=\limsup_{\tau \rightarrow \infty}\frac{1}{\tau}\int^{\tau}_0 
\langle\Theta\rangle_{V}-\frac{{\rm d}{\mathcal{V}}}{{\rm d}t}\,\rm dt\geq \inf \left(\langle\Theta\rangle_{V}-\frac{{\rm d}{\mathcal{V}}}{{\rm d}t}\right)\geq B.
\end{equation}
Therefore, $B$, which may ultimately be shown to be a function of ${R}$ and $E$, serves as a lower bound for $\langle\Theta\rangle_{V,\,t}$, provided that there exists an auxiliary functional $\mathcal{V}\{\Theta,\theta\}$, such that  
\begin{equation}\label{eq_Q_Theta}
    Q\coloneqq \langle\Theta\rangle_{V}-\frac{{\rm d}\mathcal{V}}{{\rm d}t}-B\geq0.
\end{equation}
We consider the following quadratic auxiliary functional,
\begin{equation}\label{eq_auxiliaryfunction}
    \mathcal{V}\{\Theta,\theta\} = \frac{\beta}{2} \left\langle\theta^2+E^{-\frac{2}{3}}\Theta^2 \right\rangle_{V}-E^{-\frac{2}{3}}\left\langle\phi\,\Theta \right\rangle_{V},
\end{equation}
where $\beta>0$ is referred to as the balance parameter, and $\phi=\phi(Z)$ we call the background field, although in most prior work $\phi/\beta$ is typically referred to as the background field. The boundedness of $\mathcal{V}\{\Theta,\theta\}$ is demonstrated in appendix \ref{Appendix_V}. $\phi$ is piece-wise differentiable with a square-integrable derivative, and satisfies the homogeneous boundary conditions
\begin{equation}\label{eq_phiBC}
    \phi(0)=0,\quad \phi(1)=0.
\end{equation}

Now, we show that condition \eqref{eq_Q_Theta} is enforced by a $B$ given by an appropriate choice of $\phi$, where $B$ is a lower bound on $\langle\Theta\rangle_{V,\,t}$, provided that an associated quadratic form is non-negative.
\begin{proposition}[bound on $\langle\Theta\rangle_{V,\,t}$] \label{proposition_boundingTheta}

For a given background field $\phi:[0,1]\rightarrow \mathbb{R}$ satisfying the boundary conditions \eqref{eq_phiBC}, the functional $B=B\left\{\phi\right\} $ provides a lower bound on the mean temperature, where
\begin{equation}
    B\coloneqq\int_0^1 \phi-\frac{1}{4}\phi'^2\,{\rm d}Z,\label{eq_Program_Theta_OF}
\end{equation}
provided that the so-called \emph{spectral constraint} is also satisfied,
\begin{equation}\label{eq_Theta_SC}
    Q_{\bm{k}}\left\{w,\,\phi\right\}\geq0,\quad \forall k>0,\quad \forall \left.w \,\right| \,{w}(0)={w}(1)=0,
\end{equation}
in which 
\begin{equation}
    Q_{\bm{k}}\coloneqq k^6\|{w}_{\bm{k}}\|^2_2+2\|{w}'_{\bm{k}}\|^2_2+\frac{1}{k^6}\|{w}''_{\bm{k}}\|^2_2+{\mathcal{R}}\int_0^1\phi'\,\left(k^2\left|w_{\bm{k}}\right|^2-\frac{1}{k^4}\mathrm{Re}\left\{{w}''_{\bm{k}}{w}^*_{\bm{k}}\right\}\right) \, {\rm d}Z.\label{eq_Program_Theta_SC}
\end{equation}
\end{proposition}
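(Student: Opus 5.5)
The plan is to compute $\mathrm{d}\mathcal{V}/\mathrm{d}t$ along solutions of \eqref{eq_QG}, substitute it into $Q$ from \eqref{eq_Q_Theta}, and choose $\beta$ so that the terms not controlled by $\phi$ cancel. What remains is a completed square in $\Theta'$ plus a quadratic form in $\theta$. Using \eqref{eq_ODE}, that quadratic form reduces mode by mode to $Q_{\bm k}$.

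\textbf{Computing $\mathrm{d}\mathcal{V}/\mathrm{d}t$.} Differentiating \eqref{eq_auxiliaryfunction} gives
\[
\frac{\mathrm{d}\mathcal{V}}{\mathrm{d}t}=\beta\langle\theta\,\partial_t\theta\rangle_V+\beta E^{-2/3}\langle\Theta\,\partial_t\Theta\rangle_V-E^{-2/3}\langle\phi\,\partial_t\Theta\rangle_V .
\]

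For the first term, multiply \eqref{eq_theta} by $\theta$ and average. The Jacobian term vanishes by horizontal periodicity, since $\langle\theta J[\psi,\theta]\rangle_V=\tfrac12\langle J[\psi,\theta^2]\rangle_V=0$. This gives $\langle\theta\partial_t\theta\rangle_V=-\langle|\nabla_h\theta|^2\rangle_V-\langle\Theta' w\theta\rangle_V$.

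For the second and third terms, use \eqref{eq_Theta} and integrate by parts. The boundary terms vanish because $\Theta=\phi=0$ at $Z=0,1$ and $w=0$ there. This gives
\[
E^{-2/3}\langle\Theta\partial_t\Theta\rangle_V=-\|\Theta'\|_2^2+\langle\Theta\rangle_V+\langle\Theta'w\theta\rangle_V,
\]
\[
-E^{-2/3}\langle\phi\partial_t\Theta\rangle_V=\langle\phi'\Theta'\rangle_V-\langle\phi\rangle_V-\langle\phi' w\theta\rangle_V .
\]
The $\beta\langle\Theta' w\theta\rangle_V$ contributions from the first two terms cancel exactly.

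\textbf{Assembling $Q$.} Collecting terms,
\[
Q=(1-\beta)\langle\Theta\rangle_V+\beta\langle|\nabla_h\theta|^2\rangle_V+\beta\|\Theta'\|_2^2-\langle\phi'\Theta'\rangle_V+\langle\phi\rangle_V+\langle\phi' w\theta\rangle_V-B .
\]
The term $\langle\Theta\rangle_V$ is linear and not sign-definite in the infimum, so I choose $\beta=1$ to remove it. Completing the square in $\Theta'$ then gives
\[
\|\Theta'-\tfrac12\phi'\|_2^2-\tfrac14\|\phi'\|_2^2+\langle\phi\rangle_V .
\]
With $B$ defined as in \eqref{eq_Program_Theta_OF}, the last two terms cancel against $-B$. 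Since the square is non-negative, $Q\ge0$ holds whenever $\langle|\nabla_h\theta|^2+\phi'w\theta\rangle_V\ge0$.

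\textbf{Reduction to the spectral constraint.} Expand this remaining form in horizontal Fourier modes using Parseval, which gives
\[
\sum_{\bm k}\int_0^1 \Big(k^2|\theta_{\bm k}|^2+\phi'\,\mathrm{Re}\{w_{\bm k}\theta_{\bm k}^*\}\Big)\,\mathrm{d}Z .
\]
Now substitute \eqref{eq_ODE}. For the first term,
\[
k^2|\theta_{\bm k}|^2=\mathcal{R}^{-2}k^2\Big(k^4|w_{\bm k}|^2-2k^{-2}\mathrm{Re}\{w_{\bm k}w''^{*}_{\bm k}\}+k^{-8}|w''_{\bm k}|^2\Big).
\]
Integrating the cross term by parts with $w_{\bm k}(0)=w_{\bm k}(1)=0$ gives $-\int\mathrm{Re}\{w w''^*\}=\|w'\|_2^2$. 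Multiplying through by $\mathcal{R}^2>0$ turns each mode's contribution exactly into $Q_{\bm k}$ in \eqref{eq_Program_Theta_SC}. Hence the condition \eqref{eq_Theta_SC}, imposed for all $k>0$ and all $w$ vanishing at the boundaries, suffices. The admissible class is larger than the physically realisable one, which also has $w''=0$ because $\theta=0$ at the walls, so the constraint is if anything stronger than needed. Combining $Q\ge0$ with \eqref{eq_infTheta} then gives $\langle\Theta\rangle_{V,t}\ge B$.

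\textbf{Main obstacle.} The delicate step is justifying \eqref{eq_infTheta}. This requires $\mathcal{V}$ to be bounded along trajectories so that $\mathrm{d}\mathcal{V}/\mathrm{d}t$ averages to zero, which I take from appendix \ref{Appendix_V}. It also requires enough regularity for the integrations by parts when $\phi$ is only piecewise differentiable, which I would handle by working with $\phi'\in L^2$ in weak form.
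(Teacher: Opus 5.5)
Your proof is correct and follows the paper's argument: compute $\mathrm{d}\mathcal{V}/\mathrm{d}t$, fix $\beta=1$, handle the mean part, then eliminate $\theta_{\bm k}$ through \eqref{eq_ODE} with one integration by parts to obtain $Q_{\bm k}$. Your completed square $\|\Theta'-\tfrac12\phi'\|_2^2$ is the paper's Euler--Lagrange minimiser $\Theta'_{opt}=\phi'/2$, and your sign $+\langle\phi' w\theta\rangle_V$ in $Q$ agrees with \eqref{eq_Q1_k}; setting $\beta=1$ is required only to match the $\beta$-free form of the statement, not by a sign obstruction, because $\langle\Theta\rangle_V=-\langle Z\Theta'\rangle_V$ is linear in $\Theta'$ and could otherwise be absorbed into the square.
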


\begin{proof}
From \cref{eq_theta,eq_Theta}, we start by calculating the Lie derivative of $\mathcal{V}$ in \eqref{eq_auxiliaryfunction}. By integration by parts, use of the boundary conditions \eqref{eq_BC} and fixing $\beta=1$ we get that,
\begin{equation}\label{eq_dVdt}
   \frac{{\rm d}\mathcal{V}}{{\rm d}t}  = -\left\langle |\nabla_h \theta|^2 + | \Theta'|^2 - \phi' w\theta + (Z- \phi') \Theta' + \phi \right\rangle_{V}.
\end{equation}
Proceed by substituting \cref{eq_dVdt} into the definition of $Q$ \eqref{eq_Q_Theta} and applying the horizontal Fourier expansion \eqref{eq_Fouriertransform} to $w$ and $\theta$. This yields the decomposition
\begin{equation}\label{eq_Q_decomposition}
    Q\{\Theta,w,\theta\}=Q_0\{\Theta\}+\sum_{\bm{k}\in\mathcal{K}}Q_{\bm{k}}\{w_{\bm{k}},\theta_{\bm{k}}\},
\end{equation}
where
\begin{subequations}
\begin{gather}
    Q_0=-B+\int_0^1\, |\Theta'|^2-\phi' \Theta' +\phi \,{\rm d}Z,\label{eq_Q1_0}\\
        Q_{\bm{k}}= k^2\| {\theta}_{\bm{k}}\|^2_2+\int_0^1  \phi' \,\mathrm{Re}\left\{ {w}_{\bm{k}} {\theta}_{\bm{k}}^*\right\}\, {\rm d}Z.\label{eq_Q1_k}
\end{gather}
\end{subequations}
Satisfying condition \eqref{eq_Q_Theta} is equivalent to requiring $Q_0 \geq 0$ and $Q_{\bm{k}} \geq 0 $ independently.
Noting that $Q_0$ is convex in $\Theta'$, we solve the associated Euler–Lagrange equation with respect to $\Theta$, subject to Dirichlet boundary conditions. The non-negativity condition on $Q_0$ is satisfied in the worst case by,
\begin{equation}\label{eq_Theta_infQ0}
    B = \inf_{  \substack{\phi(Z) \\ 
    \Theta(0)= \Theta(1)=0 }  } \int_0^1\,  |\Theta' |^2-\phi'\Theta' +\phi\,\,{\rm d}Z,
\end{equation}
with the minimiser given by 
\begin{equation}\label{eq_Thetaopt_Theta}
     \Theta'_{opt}=\frac{\phi'}{2}.
\end{equation}
Then, substituting \eqref{eq_Thetaopt_Theta} into \eqref{eq_Theta_infQ0} gives the form \eqref{eq_Program_Theta_OF}.

Next, in \cref{eq_Q1_k} we eliminate the spectral coefficient $\theta_{\bm{k}}$ using the diagnostic relation \eqref{eq_ODE}, and rearrange through integration by parts to obtain
\begin{equation}
    Q_{\bm{k}}=\frac{1}{{\mathcal{R}}^2}\left(k^6\|{w_{\bm{k}}}\|^2_2+2\|{w_{\bm{k}}'}\|^2_2+\frac{1}{k^6}\|{w_{\bm{k}}''}\|^2_2\right)+\frac{1}{{\mathcal{R}}}\int_0^1\phi'\,\left(k^2\left|w_{\bm{k}}\right|^2-\frac{1}{k^4}\mathrm{Re}\left\{{w_{\bm{k}}''}{w_{\bm{k}}}^*\right\}\right) \, {\rm d}Z.
\end{equation}
The condition $Q_{\bm{k}} \geq 0$ for all $\bm{k}$ is equivalent to the spectral constraint.
\end{proof}

\begin{remark}
Maximising \cref{eq_Program_Theta_OF} over all choices of $\phi$ subject to the spectral constraint yields the largest lower bound attainable within our framework. This optimisation problem is convex and, while analytically intractable, can be numerically solved. To obtain explicit bounds with rigorous parametrisation, here we focus on proving \cref{Theorem_Theta} by constructing a suboptimal ansatz for $\phi$.
\end{remark}

\subsection{Ansatz}
We restrict $\phi(Z)$ to take the form 
\begin{equation}\label{eq_phi}
    \phi\coloneqq \begin{cases}
        \frac{A}{\delta^2}Z\left(2\delta-Z\right),  &0\leq Z\leq \delta,\\
        A,&\delta<Z<1-\delta,\\
       \frac{A}{\delta^2}\left(1-Z\right)\left(2\delta+Z-1\right), &1-\delta\leq Z\leq 1.
    \end{cases}
\end{equation}
Here $\phi(Z)$ is characterised by boundary layers of thickness $\delta\in\left(0,\,\frac{1}{2}\right]$ at the top and bottom of the layer, and by the amplitude parameter $A>0$ specifying the bulk value of $\phi$ (figure \ref{fig_quadraticBackground}). 

The choice of a background field with quadratic boundary layers and a constant interior is motivated by the following considerations. First, it renders the sign-indefinite integral in \cref{eq_Program_Theta_SC} identically zero in the bulk. Second, as will be demonstrated in \cref{sec_Theta_proof}, the maximal lower bound attainable with \cref{eq_phi} is exactly $1/12$ when ${\mathcal{R}}$ is below a certain threshold. This value is achieved with the choice $A=1/4$ and $\delta = 1/2$, for which the optimiser $\Theta_{opt}$ \eqref{eq_Thetaopt_Theta} is exactly equivalent to the conductive state \eqref{eq_conduction}. This is only possible by using the quadratic boundary layers in $Z$.

\begin{figure}
 \centering
    \includegraphics[width=0.45\linewidth]{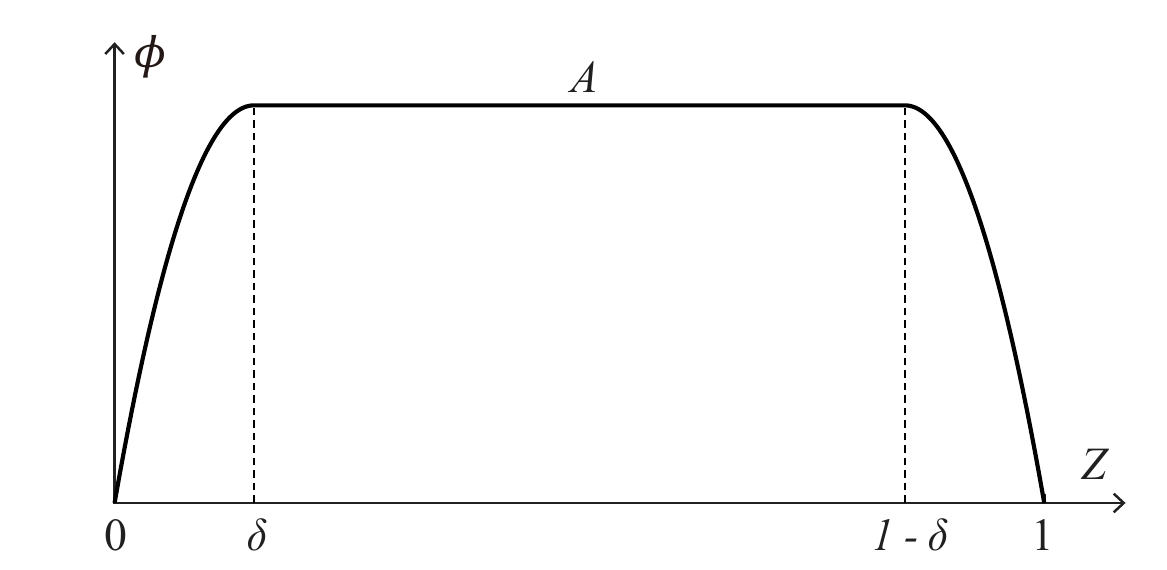}
    \caption{A sketch of the background profile \eqref{eq_phi} featuring quadratic boundary layers.}
    \label{fig_quadraticBackground}
\end{figure}

Substituting \cref{eq_phi} into \cref{eq_Program_Theta_OF,eq_Program_Theta_SC} leads to
\begin{equation}\label{eq_Theta_B}
    B=A\left(1-\frac{2\delta}{3}\right)-\frac{2A^2}{3\delta},
\end{equation}
along with
\begin{equation}\label{eq_Theta_Bk}
   Q_{\bm{k}}= \underbrace{ k^6\|{w}_{\bm{k}}\|_2^2+2\|{w}'_{\bm{k}}\|_2^2+\frac{1}{k^6}\|{w}''_{\bm{k}}\|_2^2 }_{P} +\frac{2{\mathcal{R}} A}{\delta^2}\left(I_B -I_T\right),  
\end{equation}
in which
\begin{subequations}\label{eq_Theta_Bk_decompose}
    \begin{gather}
        I_B= \int_0^\delta \left(\delta-Z \right)\left(
    k^2|{w}_{\bm{k}}|^2-\frac{1}{k^4}\mathrm{Re}\left\{{w}''_{\bm{k}}{w}^*_{\bm{k}} \right\}
    \right)\,{\rm d}Z,\label{eq_Ib}\\
    I_T=\int_{1-\delta}^{1} \left(Z-1+\delta\right)\left(
    k^2|{w}_{\bm{k}}|^2-\frac{1}{k^4}\mathrm{Re}\left\{{w}''_{\bm{k}}{w}^*_{\bm{k}} \right\}
    \right)\,{\rm d}Z.\label{eq_It}
    \end{gather}
\end{subequations}


\subsection{Sufficient conditions for the spectral constraint}\label{subsect_Theta_SCest}

As a central step in proving \cref{Theorem_Theta} using \cref{proposition_boundingTheta}, we establish in this section a closed-form sufficient condition on $\delta$ and $A$, from $\phi$ given by \cref{eq_phi}, that guarantees the spectral constraint  \eqref{eq_Theta_SC} is satisfied. This amounts to ensuring the non-negativity of \cref{eq_Theta_Bk} for all $k>0$. To this end, we first present two lemmas, each providing a sufficient condition for the non-negativity of \cref{eq_Theta_Bk}: one for the range $k\geq k_c$, and another for $0<k\leq k_c$, with $k_c>0$ being an arbitrary threshold wavenumber. Then, by reconciling the ranges of large and small wavenumbers we obtain a single condition to satisfy the spectral constraint. 


\subsubsection{For large wavenumbers $k\geq k_c$}

\begin{lemma}[Sufficient conditions for $Q_{\bm{k}}\geq 0$, large $k$] \label{lemma_Theta_largek}

Let $Q_{\bm{k}}$ be specified by \cref{eq_Theta_Bk,eq_Theta_Bk_decompose}, and $k_c>0$ be a lower threshold wavenumber. Suppose that $A>0$ satisfies
\begin{equation}\label{eq_SC_large}
    {\mathcal{R}} A\leq \sqrt{\frac{2}{5}}k_c.
\end{equation}
Then, $Q_{\bm{k}}\geq0$ for all $k\geq k_c$.
\end{lemma}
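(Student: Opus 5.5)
The plan is to show that the sign-indefinite term in \cref{eq_Theta_Bk} is controlled by the positive part $P$ with a constant that does not depend on $\delta$. The key tool is a Hardy-type inequality near each wall. The $\delta$-independence comes from the geometry of the weight. On $[0,\delta]$ the factor $(\delta-Z)/\delta^2$ satisfies
\begin{equation*}
\frac{\delta-Z}{\delta^2}\leq \frac{1}{4Z},
\end{equation*}
because $Z(\delta-Z)\leq \delta^2/4$. On $[1-\delta,1]$ the factor $(Z-1+\delta)/\delta^2$ is likewise at most $1/(4(1-Z))$. Since $w_{\bm{k}}(0)=w_{\bm{k}}(1)=0$, Hardy's inequality gives
\begin{equation*}
\|w_{\bm{k}}/Z\|_{L^2(0,\delta)}\leq 2\|w'_{\bm{k}}\|_2, \qquad \|w_{\bm{k}}/(1-Z)\|_{L^2(1-\delta,1)}\leq 2\|w'_{\bm{k}}\|_2 .
\end{equation*}

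Write $a=\|w_{\bm{k}}\|_2$, $b=\|w'_{\bm{k}}\|_2$, $c=\|w''_{\bm{k}}\|_2$. Take absolute values inside $I_B$ and apply Cauchy--Schwarz together with Hardy:
\begin{equation*}
\frac{2}{\delta^2}|I_B|\leq \frac{1}{2}\int_0^\delta \frac{k^2|w_{\bm{k}}|^2+k^{-4}|w''_{\bm{k}}||w_{\bm{k}}|}{Z}\,{\rm d}Z \leq k^2 a_B b + k^{-4} c_B b .
\end{equation*}
Here $a_B$ and $c_B$ are the $L^2$ norms of $w_{\bm{k}}$ and $w''_{\bm{k}}$ over the bottom layer. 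The same argument at the top layer gives $\frac{2}{\delta^2}|I_T|\leq k^2 a_T b + k^{-4} c_T b$. Because $\delta\leq 1/2$, the two layers are disjoint, so $a_B+a_T\leq\sqrt2\,a$ and $c_B+c_T\leq\sqrt2\,c$. This costs at most a factor $\sqrt2$, and one can even keep the layer pieces separate. The result is
\begin{equation*}
\frac{2\mathcal{R}A}{\delta^2}|I_B-I_T|\leq \sqrt2\,\mathcal{R}A\,b\left(k^2 a+k^{-4}c\right).
\end{equation*}

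Next use the hypothesis \eqref{eq_SC_large}. For $k\geq k_c$ it gives $\mathcal{R}A\leq s k$ with $s=\sqrt{2/5}$. Set $X=k^3a$ and $Y=c/k^3$, so that $P=X^2+2b^2+Y^2$ exactly. The cross terms become $\mathcal{R}A\,k^2ab\leq s\,Xb$ and $\mathcal{R}A\,k^{-4}cb\leq s\,(k/k)\,Yb\cdot k^{-0}$; here the extra factor of $k$ from $\mathcal{R}A\leq sk$ is exactly what converts $k^2a$ into $k^3a$ and $k^{-4}c$ into $k^{-3}c$. Hence
\begin{equation*}
Q_{\bm{k}}\geq X^2+Y^2+2b^2-\sqrt2\,s\,b(X+Y).
\end{equation*}
Minimising over $X$ and $Y$ gives $X=Y=\sqrt2 sb/2$, so $Q_{\bm{k}}\geq (2-s^2)b^2$. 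This is non-negative whenever $s\leq\sqrt2$, and $s=\sqrt{2/5}$ satisfies this comfortably. The same computation can alternatively be done by Young's inequality term by term with weights chosen to reproduce the stated constant.

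I expect the main obstacle to be obtaining a bound that is uniform in $\delta$. Cruder estimates, such as bounding the weight by $\delta$ or using $|w(Z)|^2\leq Z\|w'\|_2^2$ directly, leave stray factors of $1/\delta$. These come especially from the $w''w^*$ term, or from the boundary terms $|w(\delta)|^2$ that appear if one integrates it by parts. The pointwise comparison of the weight with $1/(4Z)$, combined with Hardy, removes this difficulty. I would check carefully that $w_{\bm{k}}$ need not vanish at $Z=\delta$ for the Hardy step, which it does not, since Hardy requires vanishing only at $Z=0$.
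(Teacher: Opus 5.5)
Your proof is correct, and it takes a genuinely different route from the paper's.

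The paper first integrates the $w''_{\bm k}w^*_{\bm k}$ term by parts to get the identity \eqref{eq_Theta_IB_integral}. This exposes a sign structure: the bulk part of $I_B$ is non-negative, so the lower bound on $I_B$ involves only the boundary term $-\tfrac{1}{2k^4}|w_{\bm k}(\delta)|^2$. The paper then works in sup norms. The weight integrates to exactly $\delta^2/2$, which cancels the $\delta^{-2}$ prefactor, and the boundary term is handled by $|w_{\bm k}(\delta)|\le\delta\|w'_{\bm k}\|_\infty$. The sup norms are converted back to $L^2$ through the Agmon-type inequalities \eqref{ineq_w_inf} and \eqref{ineq_w'_inf}. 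Finally, the Young split $2=\tfrac25+\tfrac85$ is tuned to give the factorised bound \eqref{eq_est_Bk_large}. So $\delta$-uniformity is not really the obstacle you anticipate, because the sup-norm route is uniform too.

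You instead take absolute values and skip the integration by parts entirely. You get uniformity from the pointwise comparison $(\delta-Z)/\delta^2\le 1/(4Z)$ together with Hardy, staying in $L^2$ throughout. This is cleaner and gives a better constant. Your inequality $Q_{\bm k}\ge(2-s^2)\|w'_{\bm k}\|_2^2$ already proves the conclusion under $\mathcal{R}A\le\sqrt2\,k_c$.

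You can do better still. Keep the layer-localised norms of $w'_{\bm k}$ in Hardy and pair the two layers by Cauchy--Schwarz, rather than using $a_B+a_T\le\sqrt2\,a$. This gives $Q_{\bm k}\ge(2-s^2/2)\|w'_{\bm k}\|_2^2$, so even $\mathcal{R}A\le 2k_c$ suffices, against the paper's $\sqrt{2/5}\,k_c$. Fed into the reconciliation of \cref{lemma_SC_Theta}, a larger large-$k$ constant gives a larger $c$ in \eqref{eq_est_SC}. That means a larger $\mathcal{R}_0$ and better constants in \cref{Theorem_Theta}, which is exactly the sharpening the paper's remark on $\mathcal{R}_0$ anticipates.

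What the paper's route buys is economy of tools: $\|w'_{\bm k}\|_\infty$ and \eqref{ineq_w'_inf} are reused in \cref{lemma_Theta_smallk}.

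One small point: your displayed step ``$s\,(k/k)\,Yb\cdot k^{-0}$'' is garbled. The intended inequality $\mathcal{R}A\,k^{-4}cb\le s\,k^{-3}cb=sYb$ is correct.
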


\begin{proof}
    We prove the lemma by constructing a lower bound on $Q_{\bm{k}}$ which is non-negative under the condition \eqref{eq_SC_large}. The contribution of $\phi$ from \eqref{eq_phi} to $I_B$ and $I_T$, as defined in \cref{eq_Ib,eq_It}, is symmetric. Hence, apart from deriving a lower bound on $P$, which has been defined in \eqref{eq_Theta_Bk}, we only prove upper and lower bounds on $I_B$, since the identical estimates hold for $I_T$. Using the boundary condition $w_{\bm{k}}(0)=0$ and integrating by parts, the $O(k^{-4})$ term in \cref{eq_Ib} can be written as
\begin{equation}
    \int^{\delta}_0 (\delta - Z){\rm{Re}}\left\{ w''_{\bm{k}}w_{\bm{k}}^* \right\}  \mathrm{d}Z = \int^{\delta}_0 \frac{1}{2} \left(\left|w_{\bm{k}}\right|^2 \right)' - (\delta -Z) |w'_{\bm{k}}|^2 \mathrm{d}Z = \frac12 |w_{\bm{k}}(\delta)|^2 - \int^{\delta}_0 (\delta - Z )|w'_{\bm{k}}|^2 \mathrm{d}Z.
\end{equation}
Then,
\begin{equation}\label{eq_Theta_IB_integral}
    I_B = \int^{\delta}_0 (\delta-Z) \left(k^2 |w_{\bm{k}}|^2 + \frac{1}{k^4} |w'_{\bm{k}}|^2 \right) \mathrm{d}Z -\frac{1}{2k^4}|w_{\bm{k}}(\delta)|^2 \leq \int^{\delta}_0 (\delta-Z) \left(k^2 |w_{\bm{k}}|^2 + \frac{1}{k^4} |w'_{\bm{k}}|^2 \right) \mathrm{d}Z.
\end{equation}
Next, using that $\left|w\right|\leq \|w\|_\infty$ and $\left|w'\right|\leq \|w'\|_\infty$, we get that
\begin{equation}\label{eq_est_It_large}
    I_B\leq \int_{0}^{\delta} \left(\delta-Z\right)\left(k^2\|w_{\bm{k}}\|_\infty^2+\frac{1}{k^4}\|w'_{\bm{k}}\|_\infty^2\right)\,{\rm d}Z=\frac{\delta^2}{2}\left(k^2\|w_{\bm{k}}\|_\infty^2+\frac{1}{k^4}\|w'_{\bm{k}}\|_\infty^2\right).
\end{equation}

On the other hand, the fundamental theorem of calculus implies
\begin{equation}\label{ineq_w_pointwise}
     \left|w_{\bm{k}}(Z)\right|=\left|\int_0^Z w'_{\bm{k}}(\eta)\,{\rm d}\eta \right|\leq\int_0^Z \|w'_{\bm{k}}\|_{\infty}\,{\rm d}\eta=Z\|w'_{\bm{k}}\|_{\infty},
\end{equation}
hence from the identity in \eqref{eq_Theta_IB_integral}, $I_B$ is lower bounded as
\begin{equation}\label{eq_est_Ib_large}
    I_B\geq -\frac{1}{2k^4}\left|w_{\bm{k}}(\delta)\right|^2\geq-\frac{\delta^2}{2k^4}\|w'_{\bm{k}}\|_{\infty}^2.
\end{equation}

Since $I_T$ is upper bounded by the right hand side of \eqref{eq_est_It_large}, inserting \eqref{eq_est_It_large} and \eqref{eq_est_Ib_large} into \cref{eq_Theta_Bk} yields a sufficient condition for $Q_{\bm{k}}\geq 0$,
\begin{equation}
    \label{eq:spec+const_mod}
    P - {\mathcal{R}} A \left(k^2\|w_{\bm{k}}\|_\infty^2+\frac{2}{k^4}\|w'_{\bm{k}}\|_\infty^2\right) \geq 0.
\end{equation}

To proceed we estimate $P$ from below. Denote $w_{\bm{k}}(Z)=w_{r}(Z)+\mathrm{i}w_{i}(Z)$, with $w_r$ and $w_i$ being real functions. As $w_{\bm{k}}(1)=w_{\bm{k}}(0)=0$, the mean value theorem guarantees the existence of points $Z_r$, $Z_i \in \left[0,\,1\right]$ such that ${w}'_{r}(Z_r) ={w}'_{i}(Z_i)=0$. Hence, for any $Z\in [0,\,1]$, the pointwise estimate holds that
\begin{equation}
\begin{aligned}
    \left|w'_{\bm{k}}(Z)\right|^2=\left|w'_{r}(Z)\right|^2+\left|w'_{i}(Z)\right|^2&=2\int_{Z_r}^Z {w_{r}'}(\eta)w_{r}''(\eta)\,{\rm d}\eta+2\int_{Z_i}^Z {w_{i}'}(\eta)w_{i}''(\eta)\,{\rm d}\eta\\
    &\leq2\int_0^1 \left|w_r'(\eta)\right|\left|w_{r}''(\eta)\right|+\left|w_i'(\eta)\right|\left|w_{i}''(\eta)\right|  {\rm d}\eta.
    \end{aligned}
\end{equation}
Taking the supremum over $Z$ and applying the Cauchy–Schwarz inequality then gives
\begin{equation}\label{ineq_w'_inf}
    \begin{aligned}
    \|w'_{\bm{k}}\|_{\infty}^{2}=\sup_{Z}\left|w_{\bm{k}}'(Z)\right|^2\leq 2\left(\|w_r'\|_2\|w_r''\|_2+\|w_i'\|_2\|w_i''\|_2 \right)&\leq2\sqrt{\|w_r'\|_2^2+\|w_i'\|_2^2}\,\sqrt{\|w_r''\|_2^2+\|w_i''\|_2^2}\\
    &=2\|w_{\bm{k}}'\|_2 \|w_{\bm{k}}''\|_2. 
    \end{aligned}
\end{equation}
Similarly, one obtains
\begin{equation}\label{ineq_w_inf}
    \|w_{\bm{k}}\|_{\infty}^{2}\leq 2\|w_{\bm{k}}\|_2 \|w_{\bm{k}}'\|_2.
\end{equation}

By Young’s inequality, together with inequalities \eqref{ineq_w'_inf} and \eqref{ineq_w_inf}, the quantity $P$ in \cref{eq_Theta_Bk} is bounded from below as
\begin{align}
    P=k^6\|w_{\bm{k}}\|_2^2+\frac{2}{5}\|w'_{\bm{k}}\|_2^2+\frac{8}{5}\|w'_{\bm{k}}\|_2^2+\frac{1}{k^6}\|w''_{\bm{k}}\|_2^2 &\geq2k^3\sqrt{\frac{2}{5}}\|w_{\bm{k}}\|_2\|w'_{\bm{k}}\|_2+\frac{4}{k^3}\sqrt{\frac{2}{5}}\|w'_{\bm{k}}\|_2\|w''_{\bm{k}}\|_2 \nonumber \\
    &\geq \sqrt{\frac{2}{5}}\left(k^3\|w_{\bm{k}}\|_\infty^2+\frac{2}{k^3}\|w'_{\bm{k}}\|_\infty^2\right).\label{eq_est_P_large}
\end{align}

Next, substituting \eqref{eq_est_P_large} into \eqref{eq:spec+const_mod} gives the condition
\begin{equation}\label{eq_est_Bk_large}
    Q_{\bm{k}}\geq\left(\sqrt{\frac{2}{5}}k-{\mathcal{R}} A\right)\left(k^2\|w_{\bm{k}}\|_\infty^2+\frac{2}{k^4}\|w'_{\bm{k}}\|_\infty^2\right) \geq 0,
\end{equation}
which is satisfied for all $k\geq k_c$ if and only if \eqref{eq_SC_large} holds.
\end{proof}

\subsubsection{For small wavenumbers $0<k\leq k_c$}

\begin{lemma}[Sufficient conditions for $Q_{\bm{k}}\geq 0$, small $k$] \label{lemma_Theta_smallk}

Let $Q_{\bm{k}}$ be specified by \cref{eq_Theta_Bk,eq_Theta_Bk_decompose}, $k_c>0$ be an upper threshold wavenumber, and $a\in(0,\,1)$ be an arbitrary weight. Suppose that the positive pair $(A,\,\delta)$ satisfies
\begin{equation}\label{eq_SC_small}
    {\mathcal{R}} A\leq\min\left\{\frac{6a}{k_c^2\delta},\quad \sqrt{\frac{5}{k_c\delta}}\left(\frac{2-2a}{3}\right)^{\frac{1}{4}} \right\}.
\end{equation}
Then, $Q_{\bm{k}} \geq0$ for all $0<k\leq k_c$.
\end{lemma}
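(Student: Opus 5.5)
The plan is to follow the structure of the proof of \cref{lemma_Theta_largek}, but to distribute the terms of $P$ differently, since for small $k$ the dominant positive contributions are $2\|w'_{\bm{k}}\|_2^2$ and $k^{-6}\|w''_{\bm{k}}\|_2^2$ rather than $k^6\|w_{\bm{k}}\|_2^2$. The weight $a$ will split the middle term of $P$: a fraction $2a\|w'_{\bm{k}}\|_2^2$ absorbs the $O(k^2)$ part of the indefinite integral, and the remaining $(2-2a)\|w'_{\bm{k}}\|_2^2$, together with $k^{-6}\|w''_{\bm{k}}\|_2^2$, absorbs the $O(k^{-4})$ part. These two absorptions should produce, respectively, the two entries of the minimum in \eqref{eq_SC_small}.

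As before, only the negative part of $I_B-I_T$ needs controlling. From the integration-by-parts identity \eqref{eq_Theta_IB_integral}, we have $I_B\geq -\frac{1}{2k^4}|w_{\bm{k}}(\delta)|^2$. The analogous identity at the top, using $w_{\bm{k}}(1)=0$, gives
\[
I_T\leq \int_{1-\delta}^1 (Z-1+\delta)\left(k^2|w_{\bm{k}}|^2+\frac{1}{k^4}|w'_{\bm{k}}|^2\right){\rm d}Z .
\]

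\emph{Step 1: the $k^2$ term.} Instead of $L^\infty$ bounds, I will use the Cauchy--Schwarz pointwise estimate $|w_{\bm{k}}(Z)|^2\leq (1-Z)\|w'_{\bm{k}}\|_2^2$. This gives
\[
\int_{1-\delta}^1 (Z-1+\delta)(1-Z)\,{\rm d}Z=\frac{\delta^3}{6},
\]
so this contribution is at most $\frac{2\mathcal{R}A}{\delta^2}\cdot\frac{k^2\delta^3}{6}\|w'_{\bm{k}}\|_2^2=\frac{\mathcal{R}Ak^2\delta}{3}\|w'_{\bm{k}}\|_2^2$. It is dominated by $2a\|w'_{\bm{k}}\|_2^2$ for all $k\leq k_c$ precisely when $\mathcal{R}A\leq 6a/(k_c^2\delta)$, which is the first entry of the minimum.

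\emph{Step 2: the $k^{-4}$ terms.} These are $\frac{1}{2}|w_{\bm{k}}(\delta)|^2$ from $I_B$ and $\int (Z-1+\delta)|w'_{\bm{k}}|^2$ from $I_T$. I will bound both in terms of $\|w'_{\bm{k}}\|_\infty^2$ and $\|w'_{\bm{k}}\|_2^2$, trading powers of $\delta$. For example, $|w_{\bm{k}}(\delta)|^2\leq\delta\|w'_{\bm{k}}\|_2^2$ and $\int_{1-\delta}^1(Z-1+\delta)|w'_{\bm{k}}|^2\leq \frac{\delta^2}{2}\|w'_{\bm{k}}\|_\infty^2$, and then I will use \eqref{ineq_w'_inf}, $\|w'_{\bm{k}}\|_\infty^2\leq 2\|w'_{\bm{k}}\|_2\|w''_{\bm{k}}\|_2$. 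The resulting negative term has the form $\mathcal{R}A\,k^{-4}\,C(\delta)\,\|w'_{\bm{k}}\|_2\|w''_{\bm{k}}\|_2$, possibly plus a multiple of $\|w'_{\bm{k}}\|_2^2$. It must be absorbed by
\[
(2-2a)\|w'_{\bm{k}}\|_2^2+k^{-6}\|w''_{\bm{k}}\|_2^2 ,
\]
using a weighted Young/AM--GM inequality. Since the negative term carries $k^{-4}$ while the positive geometric mean carries $k^{-3}$, there is a leftover factor of $k$, which is worst at $k=k_c$. Optimising the Young weights together with the $\delta$-trade-off should yield a condition of the form $(\mathcal{R}A)^2\leq \frac{c}{k_c\delta}\sqrt{(2-2a)/3}$. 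Matching this to the stated $\sqrt{5/(k_c\delta)}\,((2-2a)/3)^{1/4}$ is a matter of choosing the split constants, for instance a $3/5$ versus $2/5$ split analogous to the one used in \eqref{eq_est_P_large}.

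The main obstacle is Step 2. The exact $\delta$- and $k$-dependence of the stated second condition depends on which interpolation is used for $|w_{\bm{k}}(\delta)|^2$ and for the weighted integral of $|w'_{\bm{k}}|^2$ over the boundary layer. The plan is to first try the mixed $L^2$/$L^\infty$ estimates above with free Young parameters. I would then tune those parameters so that both $k^{-4}$ contributions reduce to a single multiple of $k^{-4}\sqrt{\delta}\,\|w'_{\bm{k}}\|_2\|w''_{\bm{k}}\|_2$-type quantities. Finally, I would check that the resulting sufficient condition is monotone in $k$, so that it suffices to impose it at $k=k_c$.
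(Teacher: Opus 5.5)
Your Step 1 is correct and gives the first entry of the minimum, as in the paper. Step 2 cannot be completed, and the problem is its starting point, not the choice of constants.

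By working from the integrated form \eqref{eq_Theta_IB_integral} and discarding its sign-definite pieces, you keep $\tfrac12|w_{\bm k}(\delta)|^2$ but drop $\int_0^\delta(\delta-Z)|w'_{\bm k}|^2\,{\rm d}Z$, and do the same at the top. Their difference is $\int_0^\delta(\delta-Z)\mathrm{Re}\{w''_{\bm k}w^*_{\bm k}\}\,{\rm d}Z$, so they cancel whenever $w'_{\bm k}$ is nearly constant across the layer. Once that cancellation is thrown away, the remaining $k^{-4}$ terms are of size $\frac{\mathcal{R}A}{k^4}\|w'_{\bm k}\|_\infty^2$, with no small power of $\delta$ and no dependence on $w''_{\bm k}$.

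In fact the lower bound you then try to show is non-negative takes negative values under the lemma's hypotheses. For $w_{\bm k}=\sin(\pi Z)$ and $k=1$ it equals $P-2\pi^2\mathcal{R}A+O(\mathcal{R}A\,\delta^2)$, with $P=\tfrac12+\pi^2+\tfrac{\pi^4}{2}\approx 59.1$. This is negative once $\mathcal{R}A\gtrsim 3$, yet \eqref{eq_SC_small} with $a=\tfrac12$, $\delta=10^{-3}$, $k_c=10$ admits $\mathcal{R}A=16$. The true $Q_{\bm k}$ for this $w_{\bm k}$ is $P>0$, since $I_B=I_T$ by symmetry. So no tuning of Young weights can rescue the argument.

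Your power counting also points the wrong way. A negative term $\propto k^{-4}\|w'_{\bm k}\|_2\|w''_{\bm k}\|_2$ against a positive one $\propto k^{-3}\|w'_{\bm k}\|_2\|w''_{\bm k}\|_2$ requires $\mathcal{R}A\,C(\delta)\lesssim k$. That is worst as $k\to 0$, not at $k=k_c$; it is the mechanism of \cref{lemma_Theta_largek}, which only works for large $k$. Likewise, your estimate $|w_{\bm k}(\delta)|^2\le\delta\|w'_{\bm k}\|_2^2$ produces $\frac{\mathcal{R}A}{\delta k^4}\|w'_{\bm k}\|_2^2$, which $(2-2a)\|w'_{\bm k}\|_2^2$ cannot absorb as $k\to 0$.

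The missing idea is to leave the $O(k^{-4})$ term un-integrated, so that $w''_{\bm k}$ carries the $\delta$-smallness.

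\emph{Negative side.} Use $|w_{\bm k}(Z)|\le Z\|w'_{\bm k}\|_\infty$ and Cauchy--Schwarz against the weight $(\delta-Z)Z$. This gives $\int_0^\delta(\delta-Z)|w''_{\bm k}||w_{\bm k}|\,{\rm d}Z\le\frac{\delta^{5/2}}{\sqrt{30}}\|w'_{\bm k}\|_\infty\|w''_{\bm k}\|_2$, and the same at the top. The total negative $k^{-4}$ contribution is then $\frac{4\mathcal{R}A\sqrt{\delta}}{\sqrt{30}\,k^4}\|w'_{\bm k}\|_\infty\|w''_{\bm k}\|_2$.

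\emph{Positive side.} Here you need a power of $k$ more negative than $k^{-4}$.
Split $k^{-6}\|w''_{\bm k}\|_2^2$ into fractions $1-b$ and $b$. Combine the first with $(2-2a)\|w'_{\bm k}\|_2^2$ via Young and \eqref{ineq_w'_inf} to get $\frac{1}{k^3}\sqrt{2(1-a)(1-b)}\,\|w'_{\bm k}\|_\infty^2$. Apply Young once more with $\frac{b}{k^6}\|w''_{\bm k}\|_2^2$ to obtain $\frac{2}{k^{9/2}}b^{1/2}(1-b)^{1/4}(2-2a)^{1/4}\|w'_{\bm k}\|_\infty\|w''_{\bm k}\|_2$.

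The mismatch between the two sides is now a factor $\sqrt{k\delta}$, which increases with $k$. So $k=k_c$ really is the worst case, and choosing $b=\tfrac23$ gives exactly the second entry of \eqref{eq_SC_small}.
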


\begin{proof}
First, given that $w_{\bm{k}}(0)=0$, we can use the fundamental theorem of calculus and H\"olders inequality to write \eqref{ineq_w_pointwise} and also that
\begin{equation}
    \label{eq:simp_CS}
   \left| w_{\bm{k}}(Z)\right| =\left| \int^{Z}_0 w_{\bm{k}}'(\eta)\mathrm{d}\eta \right| \leq \sqrt{Z}\|w_{\bm{k}}' \|_2.
\end{equation}
Then, substituting \eqref{ineq_w_pointwise} and \eqref{eq:simp_CS} into \cref{eq_Ib}, integrating in $Z$ and using the Cauchy-Schwarz inequality gives,
\begin{align}
    \label{eq:IB_small}
    I_B &\leq k^2 \| w'_{\bm{k}}\|_2^2 \int^{\delta}_0 (\delta- Z)Z \mathrm{d}Z + \frac{1}{k^4} \| w'_{\bm{k}}\|_\infty\int^{\delta}_0 (\delta-Z)Z\, |w''_{\bm{k}}|\mathrm{d}Z  \nonumber \\
    &\leq \frac16 \delta^3k^2 \| w'_{\bm{k}}\|_2^2  + \frac{\delta^{5/2}}{\sqrt{30}\, k^4} \|w'_{\bm{k}} \|_\infty \|w''_{\bm{k}}\|_2.
\end{align}
By symmetry, $I_T$ is bounded from above by the right hand side of \eqref{eq:IB_small}. A similar procedure yields
\begin{equation}\label{eq_est_Ib_small}
    I_B\geq -\frac{1}{k^4} \| w'_{\bm{k}}\|_\infty \int^{\delta}_0 (\delta-Z)Z\, |w''_{\bm{k}}|\mathrm{d}Z \geq -\frac{\delta^{5/2}}{\sqrt{30}\, k^4} \|w'_{\bm{k}} \|_\infty \|w''_{\bm{k}}\|_2.
\end{equation}

Next, consider $P$. Let $a,\,b\in(0,\,1)$ be weighting parameters. By Young's inequality and \eqref{ineq_w'_inf}, the non-negative term in \cref{eq_Theta_Bk} can be written as
\begin{align}
\label{eq_est_P_small}
     P&=k^6\|w_{\bm{k}}\|_2^2+2a\|w'_{\bm{k}}\|_2^2+2(1-a)\|w'_{\bm{k}}\|_2^2+\frac{(1-b)}{k^6}\|w''_{\bm{k}}\|_2^2+\frac{b}{k^6}\|w''_{\bm{k}}\|_2^2 \nonumber\\
     &\geq 2a\|w'_{\bm{k}}\|_2^2+\frac{2}{k^3}\sqrt{2(1-a)(1-b)}\|w'_{\bm{k}}\|_2\|w''_{\bm{k}}\|_2+\frac{b}{k^6}\|w''_{\bm{k}}\|_2^2 \nonumber\\
     &\geq 2a\|w'_{\bm{k}}\|_2^2+\frac{1}{k^3}\sqrt{2(1-a)(1-b)}\|w'_{\bm{k}}\|_{\infty}^2+\frac{b}{k^6}\|w''_{\bm{k}}\|_2^2 \nonumber\\
     &\geq 2a\|w'_{\bm{k}}\|_2^2+\frac{2}{k^{9/2}}b^{\frac{1}{2}}(1-b)^{\frac{1}{4}}(2-2a)^{\frac{1}{4}}\|w'_{\bm{k}}\|_{\infty}\|w''_{\bm{k}}\|_2.
\end{align}
Here we fix $b=\frac{2}{3}$ to maximize the pre-factor $b^{\frac{1}{2}}(1-b)^{\frac{1}{4}}$. 
Substituting estimates \eqref{eq:IB_small} to \eqref{eq_est_P_small} into \cref{eq_Theta_Bk} then yields,
\begin{equation}\label{eq_est_Bk_small}
    Q_{\bm{k}}\geq \left(2a-\frac{1}{3} k^2\delta {\mathcal{R}} A\right)\|w'_{\bm{k}}\|_2^2+\frac{2}{k^{9/2}}\left[\left(\frac{2}{3}\right)^{\frac{3}{4}}(1-a)^{\frac{1}{4}}-2\sqrt{\frac{k\delta}{30}}{\mathcal{R}} A\right]\|w'_{\bm{k}}\|_{\infty}\|w''_{\bm{k}}\|_2.
\end{equation}
It suffices to establish $Q_{\bm{k}}\geq 0$ for all $k\in(0,\, k_c]$, provided both terms on the right-hand side of \eqref{eq_est_Bk_small} are non-negative on this interval. This requirement is equivalent to the condition \eqref{eq_SC_small}.
\end{proof}

\subsubsection{Reconciling the two ranges}~{}

Aided by \cref{lemma_Theta_largek,lemma_Theta_smallk}, we are now in a position to state the condition for the spectral constraint to be satisfied at all wavenumbers.
\begin{lemma}[Sufficient conditions for spectral constraint]
\label{lemma_SC_Theta}
Let $\phi$ be defined by \cref{eq_phi}. Suppose that the positive pair $(A,\,\delta)$ in \cref{eq_phi} satisfies
\begin{gather}\label{eq_est_SC}
    {\mathcal{R}} A\leq c\,\delta^{-\frac{1}{3}}, \quad  \text{where}\quad c=\frac{\sqrt{5}}{3}\left(\frac{3}{10}\sqrt{557}-\frac{3}{2}\sqrt{5}\right)^{\frac{1}{3}}=1.156. 
\end{gather}
Then, $(A,\,\delta)$ satisfies the spectral constraint.
\end{lemma}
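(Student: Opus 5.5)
The plan is to combine Lemma~\ref{lemma_Theta_largek} and Lemma~\ref{lemma_Theta_smallk} with a single threshold $k_c$ and weight $a$, chosen so that their hypotheses hold simultaneously. Together the two lemmas then give $Q_{\bm{k}}\geq0$ for every $k>0$, which is the spectral constraint \eqref{eq_Theta_SC}. The bound on ${\mathcal{R}} A$ is the minimum of three expressions:
\[
\sqrt{\tfrac{2}{5}}\,k_c,\qquad \frac{6a}{k_c^2\delta},\qquad \sqrt{\frac{5}{k_c\delta}}\left(\frac{2-2a}{3}\right)^{1/4}.
\]
The first increases in $k_c$ and the other two decrease in $k_c$. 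This suggests balancing them.

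\textbf{Scaling and reduction.} All three expressions scale as $\delta^{-1/3}$ under the ansatz $k_c=\kappa\,\delta^{-1/3}$ with $\kappa>0$ independent of $\delta$. Indeed, $k_c^2\delta=\kappa^2\delta^{1/3}$ and $k_c\delta=\kappa\delta^{2/3}$, so the three become
\[
\sqrt{\tfrac25}\,\kappa\,\delta^{-1/3},\qquad \frac{6a}{\kappa^2}\,\delta^{-1/3},\qquad \sqrt{\tfrac{5}{\kappa}}\left(\tfrac{2-2a}{3}\right)^{1/4}\delta^{-1/3}.
\]
The problem therefore reduces to a $\delta$-independent maximisation over $(\kappa,a)$ of
\[
\min\Big\{\sqrt{\tfrac25}\kappa,\ \tfrac{6a}{\kappa^2},\ \sqrt{5/\kappa}\,\big(\tfrac{2-2a}{3}\big)^{1/4}\Big\}.
\]
We take $c$ to be this maximum, or any admissible value of it. If ${\mathcal{R}} A\le c\,\delta^{-1/3}$, then ${\mathcal{R}} A$ is below all three expressions, so both lemmas apply with the chosen $k_c$ and $a$.

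\textbf{Choosing $\kappa$ and $a$.} The optimum should make all three terms equal, since increasing $a$ raises the second term and lowers the third. Equating the first two gives $a=\sqrt{2/5}\,\kappa^3/6$. Equating the first and third and squaring gives $(2/5)\kappa^3=5\sqrt{(2-2a)/3}$, that is, $\tfrac{4}{25}\kappa^6=\tfrac{25}{3}(2-2a)$. Substituting for $a$ and writing $x=\kappa^3$ yields a quadratic in $x$:
\[
\tfrac{4}{25}x^2+\tfrac{50}{3}\sqrt{\tfrac25}\,\tfrac{x}{6}-\tfrac{50}{3}=0 .
\]
Its positive root produces the nested radicals involving $\sqrt{557}$ and $\sqrt5$. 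Then $c=\sqrt{2/5}\,x^{1/3}$, which should simplify to the stated form $\frac{\sqrt5}{3}\big(\frac{3}{10}\sqrt{557}-\frac32\sqrt5\big)^{1/3}\approx1.156$.

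\textbf{Main obstacle.} The main difficulty is this algebra together with an admissibility check: the resulting $a=\sqrt{2/5}\,x/6$ must lie in $(0,1)$, as Lemma~\ref{lemma_Theta_smallk} requires. Numerically, $x=\kappa^3\approx(1.156/0.632)^3\approx6.1$, which gives $a\approx0.65$, so this holds. Strictly, equality is not needed: it suffices to exhibit this particular $(\kappa,a)$ and verify the three inequalities. Finally, the thresholds overlap at $k=k_c$, so together the lemmas cover all $k>0$ and the spectral constraint follows.
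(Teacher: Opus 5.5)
Your proposal is correct and is the same argument as the paper's proof. You equate the three right-hand sides of \eqref{eq_SC_large} and \eqref{eq_SC_small} with $k_c=\kappa\,\delta^{-1/3}$; the positive root of your quadratic is $\kappa^3=\tfrac{25\sqrt2}{72}\bigl(\sqrt{557}-5\sqrt5\bigr)$, with $a\approx0.643$, which are exactly the values in \eqref{eq_kc}. Your explicit check that $a\in(0,1)$, as Lemma~\ref{lemma_Theta_smallk} requires, is a point the paper leaves implicit.
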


\begin{proof}
 If we equate the right hand sides of  \eqref{eq_SC_large} and \eqref{eq_SC_small}, then, solving the resulting two equations for $a$ and $k_c$ we get that,  
\begin{equation}\label{eq_kc}
    a=\frac{5\sqrt{2785}-125}{216},\quad  k_c=\frac{5}{3\sqrt{2}}\left(\frac{3}{10}\sqrt{557}-\frac{3}{2}\sqrt{5}\right)^{\frac{1}{3}}\delta^{-\frac{1}{3}}.
\end{equation}
Given \eqref{eq_kc}, both conditions \eqref{eq_SC_large} and \eqref{eq_SC_small}, for large and small $k$ respectively, become equivalent to \eqref{eq_est_SC}. Therefore, when \eqref{eq_est_SC} holds, then $Q_{\bm{k}}\geq 0$ for all $k>0$.  
\end{proof}

\begin{remark}
    In \cite{grooms2014bounds}, the authors prove the non-negativity of a spectral constraint identical to \eqref{eq_Program_Theta_SC}, with an alternative strategy to the one in \cref{subsect_Theta_SCest}.    
    In contrast to \cite{grooms2014bounds}, we do not examine the Greens function solution of the diagnostic equation \eqref{eq_ODE}. Instead, we have demonstrated that integral estimates directly on the quantities in $Q_{\bm{k}}$ suffice to prove non-negativity. 
    Even if our choice of $\phi(Z)$ was piecewise linear, like in \cite{grooms2014bounds}, our strategy would work the same. 
\end{remark}

\begin{remark}
    The optimality of the exponent $-\frac{1}{3}$ in \eqref{eq_est_SC} can be demonstrated by considering the velocity ansatz 
\begin{equation}\label{eq_velocityansatz}
{w}_{\bm{k}}^{ans}\coloneqq
    \begin{cases}
        0,& 0\leq Z < 1-\delta,\\
        \sin\frac{\pi (1-Z)}{\delta}, &1-\delta\leq Z\leq1,
    \end{cases}
\end{equation}
and establishing a necessary condition for satisfying the spectral constraint. Inserting this ansatz into \cref{eq_Theta_Bk} gives
\begin{equation}
    {Q}_{\bm{k}}^{ans}=\frac{k^6\delta}{2}+\frac{\pi^2}{\delta}+\frac{\pi^4}{2k^6\delta^3}-\frac{{\mathcal{R}} A}{2}\left(k^2+\frac{\pi^2}{k^4 \delta^2}\right)=\frac{1}{2k\delta }\left(k^3 \delta+\frac{\pi^2}{k^3 \delta}\right)\left(k^4\delta+\frac{\pi^2}{k^2\delta}-{\mathcal{R}} A\right).
\end{equation}
The condition ${Q}_{\bm{k}}^{ans}\geq0$ for all positive $k$ is satisfied if and only if
\begin{equation}
    {\mathcal{R}} A\leq\min_k\, k^4\delta+\frac{\pi^2}{k^2\delta}=3\cdot2^{-\frac{2}{3}}\pi^{\frac{4}{3}}\delta^{-\frac{1}{3}},
\end{equation}
with the minimiser $k=2^{-\frac{1}{6}}\pi^{\frac{1}{3}}\delta^{-\frac{1}{3}}$. Assuming \textit{a priori} that $\delta$ decreases as ${\mathcal{R}}$ increases beyond a certain threshold, no sufficient condition for the spectral constraint can exist in the form ${\mathcal{R}} A\lesssim \delta^{-d}$ with the exponent $d>\frac{1}{3}$.
\end{remark}


\subsection{Proof of theorem \ref{Theorem_Theta}}\label{sec_Theta_proof}

The final step toward \cref{Theorem_Theta} is to specify ${\mathcal{R}}$-dependent $A$ and $\delta$ that satisfy the condition of \cref{lemma_SC_Theta}. For algebraic simplicity, let
\begin{equation}\label{eq_R0}
    {\mathcal{R}}_0\coloneqq\frac{5\cdot2^{\frac{7}{3}}c}{3}=9.706
\end{equation}
be an \textit{a posteriori} threshold parameter. We then choose
\begin{equation}\label{eq_choice}
    \delta=\frac{1}{2}\left(\frac{{\mathcal{R}}}{{\mathcal{R}}_0}\right)^{-\frac{3}{4}},\quad A=\frac{3}{20}\left(\frac{{\mathcal{R}}}{{\mathcal{R}}_0}\right)^{-\frac{3}{4}}.
\end{equation}
The requirement $\delta\leq\frac{1}{2}$ imposes a lower threshold on ${\mathcal{R}}$ above which these choices are valid, namely ${\mathcal{R}}\geq {\mathcal{R}}_0$. Substituting \cref{eq_choice} into \cref{eq_Theta_B} yields the explicit expression
    \begin{equation}\label{eq_bound_Theta}
     B= \frac{3}{25}\left(\frac{{\mathcal{R}}}{{\mathcal{R}}_0}\right)^{-\frac{3}{4}}-\frac{1}{20}\left(\frac{{\mathcal{R}}}{{\mathcal{R}}_0}\right)^{-\frac{3}{2}},\quad \forall{\mathcal{R}}\geq {\mathcal{R}}_0.
     \end{equation}
Since the choice of $\delta$ and $A$ satisfy the condition of \cref{lemma_SC_Theta}, from \cref{proposition_boundingTheta} $B$ provides a lower bound on $\langle\Theta\rangle_{V,\,t}$. For small $\mathcal{R}$, the unconstrained global maximum of \cref{eq_Theta_B}$, \frac{1}{12}$, is attained at $\delta=\frac{1}{2}$ and $A=\frac{1}{4}$. This combination is feasible for ${\mathcal{R}}\leq \frac{3}{5}{\mathcal{R}}_0$, as determined by \eqref{eq_est_SC}. Since $\frac{1}{12}$ is also the uniform upper bound, it follows that $\langle \Theta\rangle_{V,\,t}=\frac{1}{12}$ in this regime, and the feasibility condition provides a sufficient condition for nonlinear stability of the conductive state. For ${\mathcal{R}}>\frac{3}{5}{\mathcal{R}}_0$, the spectral constraint \eqref{eq_est_SC} becomes active, and we choose
\begin{equation}\label{eq_choice_2}
    \delta=\sqrt{A}=\frac{1}{2}\left(\frac{5{\mathcal{R}}}{3 {\mathcal{R}}_0}\right)^{-\frac{3}{7}},
\end{equation}
which leads to the bound
\begin{equation}\label{eq_bound_Theta_2}
    B= \frac{1}{4}\left(\frac{5{\mathcal{R}}}{3{\mathcal{R}}_0}\right)^{-\frac{6}{7}}-\frac{1}{6}\left(\frac{5{\mathcal{R}}}{3{\mathcal{R}}_0}\right)^{-\frac{9}{7}},\quad \forall \mathcal{R}\geq \frac{3}{5}\mathcal{R}_0.
\end{equation}
It remains effective up to ${\mathcal{R}}=1.976\,{\mathcal{R}}_0$, beyond which \cref{eq_bound_Theta} provides a sharper analytical bound. Theorem \ref{Theorem_Theta} is therefore proven, with 
\begin{equation}
    c_0=1.132,\quad c_1=1.606,\quad c_2=0.6599,\quad c_3=1.512, \quad d_0=5.824,\quad d_1=19.18.
\end{equation}

\begin{remark}
As demonstrated in appendix \ref{appendix_LST}, the conductive state \eqref{eq_conduction} is linearly stable for $\mathcal{R}\leq \mathcal{R}_c=71.75$, hence the range of validity for \cref{eq_bound_Theta} covers values of ${\mathcal{R}}$ above the linear onset of convection. However, although a lower bound is provided by $d_0$, the nonlinear onset of convection remains unknown, and may be significantly lower than the linear threshold. For instance, in nonrotating internally heated convection with stress-free and isothermal boundary conditions, the nonlinear onset occurs at Rayleigh numbers 40\% smaller than the linear critical value \cite{goluskin2016internally}. Motivated by this, we have also proved bounds valid for ${\mathcal{R}} < \mathcal{R}_c$. Lower bounds for $\langle\Theta\rangle_{V,\,t}$ across different ranges of ${\mathcal{R}}$ are presented in figure \ref{fig_Theta_bounds}. Ideally, one seeks the maximal lower bound, yet the maximisation of \cref{eq_Theta_B} under \eqref{eq_est_SC} can only be solved numerically. By using \cref{eq_bound_Theta,eq_bound_Theta_2}, we obtain analytical bounds that are at most 2.4\% below the numerical maximum, denoted by $B^{num}$. 
\end{remark}

\begin{remark} For $\mathcal{R}\geq\mathcal{R}_0$, $A$ and $\delta$ are chosen as \eqref{eq_choice} so that the bound asymptotically approaches the maximal lower bound $B^{num}$, although the QG regimes extend only to finite $\tilde{R}$. This is shown as follows. By construction, for $\mathcal{R}\geq\mathcal{R}_0$ we have
\begin{equation}
    B^{num}=\max_{\eqref{eq_est_SC},\,\delta\leq1/2} A\left(1-\frac{2\delta}{3}\right)-\frac{2A^2}{3\delta} <\max_{\eqref{eq_est_SC},\,\delta\leq1/2} A-\frac{2A^2}{3\delta}=\frac{3}{25}\left(\frac{\mathcal{R}}{\mathcal{R}_0}\right)^{-\frac{3}{4}},
\end{equation}
with the combination \eqref{eq_choice} being the maximiser. Hence, compared with \cref{eq_bound_Theta},
\begin{equation}
    B<B^{num}<\left[1+O(\mathcal{R}^{-3/4})\right]B,
\end{equation}
which implies that \cref{eq_bound_Theta} is equivalent to the maximal lower bound on leading order. For $\mathcal{R}\leq\mathcal{R}_0$, $A$ and $\delta$ are chosen as \eqref{eq_choice_2} so that the bound is tangent to $B^{num}$ at $\frac{3}{5}\tilde{R}_0$.
\end{remark}

\begin{remark}
   In \cref{eq_bound_Theta,eq_bound_Theta_2}, the prefactors of terms are determined by $\phi$, while $\mathcal{R}_0$ arises from the particular estimates used to bound $Q_{\bm{k}}$ from below. As such, ${\mathcal{R}}_0$ is a product of the methodology and can be improved if sharper estimates on $Q_{\bm{k}}$ are possible.
\end{remark}

\begin{figure}[h]
     \centering
     \includegraphics[width=0.5\linewidth]{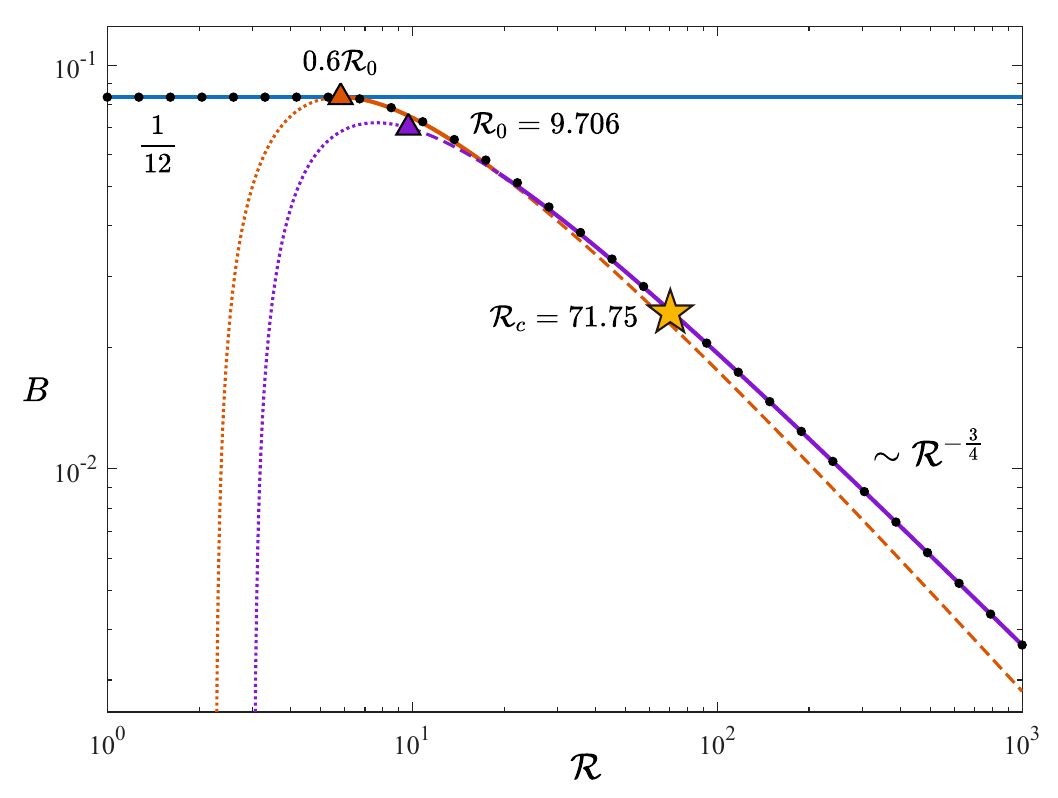}
     \caption{Lower bounds on $\langle \Theta \rangle_{V,\,t}$ from \eqref{eq_bound_Theta} (purple solid line) and \eqref{eq_bound_Theta_2} (red solid line), compared with the uniform upper bound of $\tfrac{1}{12}$ (blue solid line) and the numerical maximum (black dots) of \eqref{eq_Theta_B} subject to \eqref{eq_est_SC} and the condition $\delta \leq \frac{1}{2}$. Triangle markers denote the ${\mathcal{R}}$ above which \eqref{eq_bound_Theta} and \eqref{eq_bound_Theta_2} are valid, with the red triangle corresponding to a lower bound on the nonlinear onset of convection, and the star indicates the linear onset of convection. Dotted and dashed lines respectively indicate regions where the bounds are invalid and valid but suboptimal. }
     \label{fig_Theta_bounds}
 \end{figure}

\section{Bounds for mean convective heat flux $\langle {w\theta}\rangle_{V,\,t}$}
\label{sec_wtheta}

\subsection{Bounding framework}

The same method as that outlined in \cref{subsec_Theta_framework} is used to bound the mean convective heat flux $\langle {w\theta}\rangle_{V,\,t}$ from above. Since the time average of a quantity does not exceed the pointwise supremum, we have that
\begin{equation}\label{eq_supwtheta}
\langle{w\theta}\rangle_{V,\,t}=\limsup_{\tau \rightarrow \infty}\frac{1}{\tau}\int^{\tau}_0 \langle{w\theta}\rangle_V+\frac{{\rm d}\mathcal{V}}{{\rm d}t}\,{\rm d}t\leq \sup \left(
\langle {w\theta}\rangle_V+\frac{{\rm d}\mathcal{V}}{{\rm d}t}
\right)\leq B,
\end{equation}
where $\mathcal{V}$ is defined by \eqref{eq_auxiliaryfunction}. It thereby suffices for $B$ to be an upper bound on $\langle{w\theta}\rangle_{V,\,t}$ provided that
\begin{equation}\label{eq_Q_wtheta}
     Q\coloneqq B-
         \langle{w\theta}\rangle_{V}-\frac{{\rm d}\mathcal{V}}{{\rm d}t}\geq0.
\end{equation}
On this basis, we state the problem.

\begin{proposition}[bound on ${\langle {w\theta}\rangle_{V,\,t}}$] \label{proposition_boundingwtheta}
Let $\phi:[0,1] \rightarrow \mathbb{R}$ satisfying the boundary conditions \eqref{eq_phiBC}, and $\beta>0$, then the upper bound $B$ on $\langle w \theta \rangle_{V,\,t}$ is given by
\begin{equation}
    B\coloneqq\frac{\beta}{48}+\int_0^1 \frac{1}{4\beta}{\phi'}^2-\frac{1}{2}\phi  
    \,{\rm d}Z,\label{eq_Program_wtheta_OF}
\end{equation}
provided that $\phi$ and $\beta$ further satisfy the \emph{spectral constraint}:
\begin{equation}\label{eq_wtheta_SC}
   Q_{\bm{k}}\left\{w,\,\phi,\,\beta\right\}\geq0,\quad \forall k>0,\quad \forall \left.w\,\right| \,{w}(0)={w}(1)=0,
\end{equation}
in which 
\begin{equation}
    Q_{\bm{k}}\coloneqq k^6\|{w}_{\bm{k}}\|^2_2+2\|{w}'_{\bm{k}}\|^2_2+\frac{1}{k^6}\|{w}''_{\bm{k}}\|^2_2+\frac{{\mathcal{R}}}{\beta }\int_0^1\left(\phi'-1\right)\left(k^2\left|w_{\bm{k}}\right|^2-\frac{1}{k^4}\mathrm{Re}\left\{{w}''_{\bm{k}}{w}^*_{\bm{k}}\right\}\right)\, {\rm d}Z.\label{eq_Program_wtheta_SC}
\end{equation}
\end{proposition}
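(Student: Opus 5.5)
The proof follows that of \cref{proposition_boundingTheta}, keeping the balance parameter $\beta>0$ general instead of fixing $\beta=1$. The plan is to compute ${\rm d}\mathcal{V}/{\rm d}t$ along solutions of \eqref{eq_QG}, insert it into $Q$ in \eqref{eq_Q_wtheta}, and split $Q$ into a mean part $Q_0\{\Theta\}$ plus a sum over Fourier modes $Q_{\bm k}$. Requiring $Q_0\geq0$ will determine $B$, and requiring $Q_{\bm k}\geq0$ will give the spectral constraint.

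\emph{Lie derivative.} For the $\beta$-terms, multiply \eqref{eq_theta} by $\theta$ and average. The Jacobian term vanishes by horizontal periodicity, leaving $-\langle|\nabla_h\theta|^2\rangle_V-\langle w\theta\,\Theta'\rangle_V$. Multiplying \eqref{eq_Theta} by $\Theta$ and integrating by parts with $\Theta=0$ at $Z=0,1$ gives $\langle \overline{w\theta}\,\Theta'\rangle_V-\langle\Theta'^2\rangle_V+\langle\Theta\rangle_V$, so the $w\theta\Theta'$ terms cancel. For the term $-E^{-2/3}\langle\phi\Theta\rangle_V$, substitute \eqref{eq_Theta} and integrate by parts. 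The boundary terms vanish because $\phi(0)=\phi(1)=0$ by \eqref{eq_phiBC} and $\overline{w\theta}=0$ on the boundaries by \eqref{eq_BC}. The result should be
\begin{equation*}
\frac{{\rm d}\mathcal{V}}{{\rm d}t}=-\beta\langle|\nabla_h\theta|^2+\Theta'^2\rangle_V+\beta\langle\Theta\rangle_V-\langle\phi'w\theta\rangle_V+\langle\phi'\Theta'\rangle_V-\langle\phi\rangle_V .
\end{equation*}
Hence
\begin{equation*}
Q=B+\beta\langle|\nabla_h\theta|^2\rangle_V+\langle(\phi'-1)w\theta\rangle_V+\int_0^1\beta\Theta'^2-\beta\Theta-\phi'\Theta'+\phi\,{\rm d}Z .
\end{equation*}
The first two fluctuation terms split mode by mode under \eqref{eq_Fouriertransform} because $\overline{\theta}=\overline{w}=0$, and the last integral is $Q_0-B$.

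\emph{Mean part.} Using $\Theta(0)=\Theta(1)=0$, rewrite $\int_0^1\Theta\,{\rm d}Z=-\int_0^1(Z-\tfrac12)\Theta'\,{\rm d}Z$. Then $Q_0=B+\int_0^1\beta\Theta'^2+\big(\beta(Z-\tfrac12)-\phi'\big)\Theta'+\phi\,{\rm d}Z$, which is convex in $\Theta'$. The pointwise minimiser is $\Theta'_{opt}=\big(\phi'-\beta(Z-\tfrac12)\big)/(2\beta)$. This minimiser is admissible because it has zero mean, since both $\phi'$ and $Z-\tfrac12$ integrate to zero, so it integrates to a $\Theta$ vanishing at both ends. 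Substituting it, using $\int_0^1\phi'(Z-\tfrac12)\,{\rm d}Z=-\int_0^1\phi\,{\rm d}Z$ and $\int_0^1(Z-\tfrac12)^2\,{\rm d}Z=\tfrac1{12}$, gives
\begin{equation*}
\inf_\Theta(Q_0-B)=-\frac{\beta}{48}-\int_0^1\frac{\phi'^2}{4\beta}-\frac{\phi}{2}\,{\rm d}Z .
\end{equation*}
Therefore $Q_0\geq0$ is exactly the choice \eqref{eq_Program_wtheta_OF}.

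\emph{Fluctuation part.} Each mode contributes $Q_{\bm k}=\beta k^2\|\theta_{\bm k}\|_2^2+\int_0^1(\phi'-1)\mathrm{Re}\{w_{\bm k}\theta^*_{\bm k}\}\,{\rm d}Z$; the conjugate-pair factor is irrelevant to the sign. Eliminate $\theta_{\bm k}$ with \eqref{eq_ODE} and multiply by $\mathcal{R}^2/\beta>0$. Expanding $\|k^3w_{\bm k}-k^{-3}w_{\bm k}''\|_2^2$ and integrating the cross term by parts with $w_{\bm k}(0)=w_{\bm k}(1)=0$ gives $-2\,\mathrm{Re}\int_0^1 w_{\bm k}''w_{\bm k}^*\,{\rm d}Z=2\|w_{\bm k}'\|_2^2$. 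This produces exactly the quadratic form \eqref{eq_Program_wtheta_SC}, with coefficient $\mathcal{R}/\beta$ on the indefinite term. Nonnegativity of every mode is thus equivalent to \eqref{eq_wtheta_SC}, and combined with $Q_0\geq0$ it yields $Q\geq0$ and the bound via \eqref{eq_supwtheta}.

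The only delicate step is the mean-part optimisation. One must check that the unconstrained pointwise minimiser respects the Dirichlet data (the zero-mean check above) and track the $-\beta\langle\Theta\rangle_V$ term correctly, since it is the source of the $\beta/48$ contribution.
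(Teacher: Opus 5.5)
Your proposal is correct and follows the paper's proof: the same Lie derivative with general $\beta$, the same split into $Q_0$ and $Q_{\bm k}$, and the same elimination of $\theta_{\bm k}$ via \eqref{eq_ODE}. The only cosmetic difference is in the mean part. You write $\langle\Theta\rangle_V=-\int_0^1(Z-\tfrac12)\Theta'\,{\rm d}Z$ so that the pointwise minimiser already has zero mean. The paper instead keeps $\beta Z$, and its minimiser \eqref{eq_Thetaopt_wtheta} carries the constant $\tfrac14$ that enforces the Dirichlet data. Both routes give the same $B$.
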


\begin{proof}
    Similar to the demonstration of \cref{proposition_boundingTheta}, inserting \cref{eq_dVdt} into \cref{eq_Q_wtheta}, and expanding $w$ and $\theta$ in horizontal Fourier modes leads to the decomposition of $Q$ in \cref{eq_Q_decomposition}, and the condition $Q\geq 0$ becomes two independent conditions,
    \begin{subequations}
\begin{gather}
        Q_0=B+\int_0^1
    \beta\,|\Theta' |^2-\left(\phi'-\beta Z \right)\Theta'+\phi
    \,\,{\rm d}Z \geq 0,\label{eq_Q2_0}\\
    Q_{\bm{k}}=\beta k^2\|{\theta}_{\bm{k}}\|^2_2+\int_0^1\left(\phi'-1\right)\mathrm{Re}\left\{{w}_{\bm{k}}{\theta}_{\bm{k}}^*\right\}\, {\rm d}Z \geq 0.\label{eq_Q2_k}
\end{gather}
\end{subequations}
Since $\beta$ is positive, the integrand in $Q_0$ is convex in $\Theta'$. Then the worst case value of $B$ is given by
\begin{equation}\label{eq_wtheta_infQ0}
   B = - \inf_{  \substack{\phi(Z) \\\Theta(0)= \Theta(1)=0 }  }{\int_0^1
    \beta\,|\Theta'|^2-\left(\phi'-\beta Z \right)\Theta'+\phi
    \,\,{\rm d}Z},
\end{equation}
the minimiser of which is solved as
\begin{equation}\label{eq_Thetaopt_wtheta}
   \Theta'_{opt}=\frac{\phi'}{2\beta} - \frac{z }{2} + \frac14.
\end{equation}
Substituting $\Theta'_{opt}$ back into $B$ yields the desired result.

Using the diagnostic relation \eqref{eq_ODE} and integrating by parts, $Q_{\bm{k}}$ can be equivalently written as
\begin{equation}
    Q_{\bm{k}}=\frac{\beta}{{\mathcal{R}}^2}\left(k^6\|{w}_{\bm{k}}\|^2_2+2\|{w_{\bm{k}}'}\|^2_2+\frac{1}{k^6}\|{w_{\bm{k}}''}\|^2_2\right)+\frac{1}{{\mathcal{R}} }\int_0^1\left(\phi'-1\right)\left(k^2\left|w_{\bm{k}}\right|^2-\frac{1}{k^4}\mathrm{Re}\left\{{w_{\bm{k}}''}{w_{\bm{k}}}^*\right\}\right)\, {\rm d}Z,
\end{equation}
the non-negativity of which for all admissible $\bm{k}$ and $w_{\bm{k}}$ is equivalent to the spectral constraint. Hence, the condition \eqref{eq_Q_wtheta} is satisfied by $B$ as given in \cref{eq_Program_wtheta_OF}, together with the spectral constraint.
\end{proof}



\subsection{Ansatz}

For the proof of \cref{Theorem_wtheta}, a different family of background field, $\phi$, from the previous section \eqref{subsec_Theta_framework} is used.
The background field is characterised by a quadratic top boundary layer, and is linear in the bulk with  $\phi'=1$,
\begin{equation}\label{eq_phi_wtheta}
    \phi= \begin{cases}
        Z,&0\leq Z<1-\delta,\\
       \frac{1}{\delta^2}\left(1-Z\right)\left(2\delta-\delta^2+Z-1\right), &1-\delta\leq Z\leq 1.
    \end{cases}
\end{equation}
As illustrated in figure \ref{fig_quadraticBackground2}, $\phi$ is fully specified by the boundary layer thickness $\delta\in\left(0,\,1\right]$. 

\begin{figure}
\centering
    \includegraphics[width=0.45\linewidth]{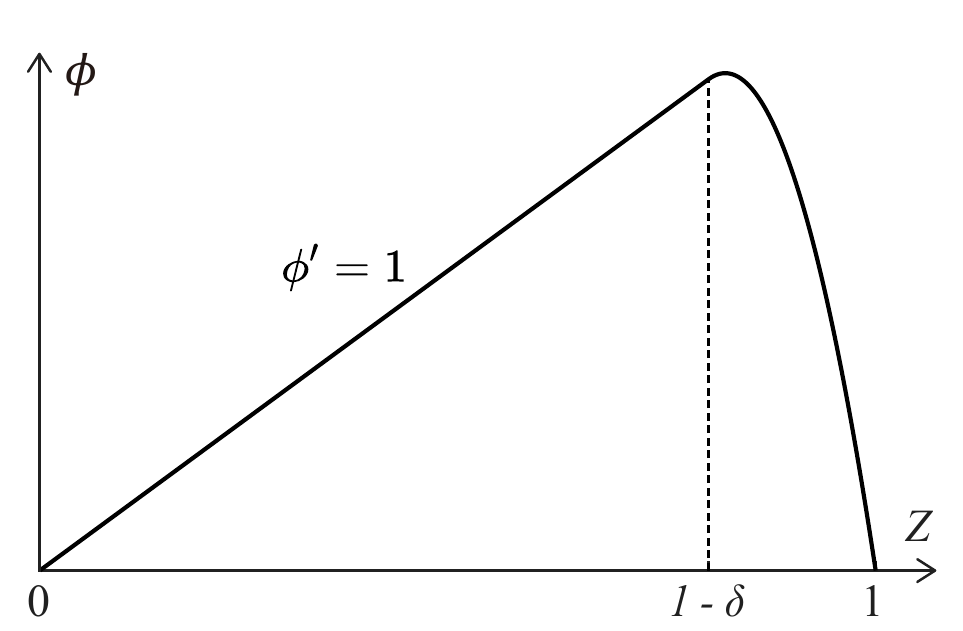}
    \caption{A sketch of the background field \eqref{eq_phi_wtheta} with a quadratic boundary layer and linear bulk.}
    \label{fig_quadraticBackground2}
\end{figure}

When the ansatz \eqref{eq_phi_wtheta} is substituted into \cref{eq_Program_wtheta_OF,eq_Program_wtheta_SC}, we get
\begin{equation}\label{eq_wtheta_B}
    B=\frac{\beta}{48}+\frac{1}{3\beta\delta}-\frac{1}{4\beta}+\frac{\delta}{6}-\frac{1}{4},
\end{equation}
and
\begin{equation}\label{eq_wtheta_Bk}
    Q_{\bm{k}}= \underbrace{ k^6\|{w}_{\bm{k}}\|_2^2+2\|{w}'_{\bm{k}}\|_2^2+\frac{1}{k^6}\|{w}''_{\bm{k}}\|_2^2 }_{P}-\frac{2{\mathcal{R}}}{\beta\delta^2}I_T,
\end{equation}
respectively. Therein, $I_T$ is the same as in \cref{eq_It}. 

\subsection{Sufficient conditions for the spectral constraint}\label{subsec_wtheta_SCest}

Similar to \cref{subsect_Theta_SCest}, the proof of \cref{Theorem_wtheta} relies on a lemma presented in this section, which provides a closed-form sufficient condition on the parameter of $\phi$ in \cref{eq_phi_wtheta} and on the balance parameter $\beta$ to ensure that the spectral constraint \eqref{eq_wtheta_SC} is satisfied. Before stating this result, we again present two auxiliary lemmas giving sufficient conditions for the non-negativity of $Q_{\bm{k}}$ over different ranges of wavenumbers.

\begin{lemma}[Sufficient conditions for $Q_{\bm{k}}\geq 0$, large $k$] \label{lemma_wtheta_largek}

Let $Q_{\bm{k}}$ be specified by \cref{eq_wtheta_Bk,eq_Theta_Bk_decompose}, and $k_c>0$ be a lower threshold wavenumber. Suppose that the positive parameter $\beta$ satisfies
\begin{equation}\label{eq_est_wtheta_SC_large}
    \frac{{\mathcal{R}} }{\beta}\leq k_c.
\end{equation}
Then, $Q_{\bm{k}}\geq0$ for all $k\geq k_c$.
\end{lemma}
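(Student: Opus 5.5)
The argument runs parallel to the proof of \cref{lemma_Theta_largek}, except that only the top boundary layer now contributes. \Cref{eq_wtheta_Bk} contains only the term $-\frac{2\mathcal{R}}{\beta\delta^2}I_T$, so we need an upper bound on $I_T$, with no lower bound on a bottom integral. By symmetry with \eqref{eq_Theta_IB_integral}, we integrate the $O(k^{-4})$ term in \eqref{eq_It} by parts using $w_{\bm{k}}(1)=0$. This gives
\begin{equation*}
I_T=\int_{1-\delta}^1 (Z-1+\delta)\left(k^2|w_{\bm{k}}|^2+\frac{1}{k^4}|w'_{\bm{k}}|^2\right){\rm d}Z-\frac{1}{2k^4}|w_{\bm{k}}(1-\delta)|^2 .
\end{equation*}
We then drop the last (non-positive) term and bound the remaining integrand by its sup norms, which yields
\begin{equation*}
I_T\le \frac{\delta^2}{2}\left(k^2\|w_{\bm{k}}\|_\infty^2+\frac{1}{k^4}\|w'_{\bm{k}}\|_\infty^2\right).
\end{equation*}
Hence a sufficient condition for $Q_{\bm{k}}\ge0$ is
\begin{equation*}
P-\frac{\mathcal{R}}{\beta}\left(k^2\|w_{\bm{k}}\|_\infty^2+\frac{1}{k^4}\|w'_{\bm{k}}\|_\infty^2\right)\ge 0.
\end{equation*}
Compared with \eqref{eq:spec+const_mod}, the coefficient of $\|w'_{\bm{k}}\|_\infty^2$ is now $1$ rather than $2$, because the bottom contribution is absent. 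This is what lets the constant improve from $\sqrt{2/5}$ to $1$.

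Next we bound $P$ from below with a different split of the middle term, $2\|w'_{\bm{k}}\|_2^2=\|w'_{\bm{k}}\|_2^2+\|w'_{\bm{k}}\|_2^2$. Applying Young's inequality to each pairing gives
\begin{equation*}
k^6\|w_{\bm{k}}\|_2^2+\|w'_{\bm{k}}\|_2^2\ge 2k^3\|w_{\bm{k}}\|_2\|w'_{\bm{k}}\|_2,
\qquad
\|w'_{\bm{k}}\|_2^2+k^{-6}\|w''_{\bm{k}}\|_2^2\ge 2k^{-3}\|w'_{\bm{k}}\|_2\|w''_{\bm{k}}\|_2 .
\end{equation*}
Combining these with the Agmon-type inequalities \eqref{ineq_w_inf} and \eqref{ineq_w'_inf} gives
\begin{equation*}
P\ge k^3\|w_{\bm{k}}\|_\infty^2+k^{-3}\|w'_{\bm{k}}\|_\infty^2 = k\left(k^2\|w_{\bm{k}}\|_\infty^2+k^{-4}\|w'_{\bm{k}}\|_\infty^2\right).
\end{equation*}
It follows that
\begin{equation*}
Q_{\bm{k}}\ge \left(k-\frac{\mathcal{R}}{\beta}\right)\left(k^2\|w_{\bm{k}}\|_\infty^2+k^{-4}\|w'_{\bm{k}}\|_\infty^2\right),
\end{equation*}
which is non-negative for every $k\ge k_c$ under \eqref{eq_est_wtheta_SC_large}.

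The only delicate point is matching the weights. The lower bound on $P$ must reproduce exactly the ratio $1:1$ between the two sup-norm terms that appears in the estimate of $I_T$. The equal split of $2\|w'_{\bm{k}}\|_2^2$ does this, whereas \cref{lemma_Theta_largek} needed the $2/5,8/5$ split. Everything else is routine and carries over from the earlier proof.
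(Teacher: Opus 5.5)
Your proposal is correct and follows the paper's proof essentially step for step. It uses the same upper bound on $I_T$ as the paper. The paper simply cites that bound from \eqref{eq_est_It_large} by symmetry, whereas you re-derive it by integration by parts using $w_{\bm{k}}(1)=0$. The remaining steps also match: the equal split $2\|w'_{\bm{k}}\|_2^2=\|w'_{\bm{k}}\|_2^2+\|w'_{\bm{k}}\|_2^2$ with Young's inequality and \eqref{ineq_w_inf}, \eqref{ineq_w'_inf}, and the final factorisation $Q_{\bm{k}}\ge (k-\mathcal{R}/\beta)\bigl(k^2\|w_{\bm{k}}\|_\infty^2+k^{-4}\|w'_{\bm{k}}\|_\infty^2\bigr)$.
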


\begin{proof}
    By applying Young's inequality along with inequalities \eqref{ineq_w'_inf} and \eqref{ineq_w_inf}, $P$ in \cref{eq_wtheta_Bk} is estimated from below as,
\begin{equation}
    \begin{aligned}\label{eq_est_wtheta_P}
        P=k^6 \|w_{\bm{k}}\|_2^2+\|w_{\bm{k}}'\|_2^2+\|w_{\bm{k}}'\|_2^2+\frac{1}{k^6}\|w_{\bm{k}}''\|_2^2&\geq 2k^3\|w_{\bm{k}}\|_2\|w_{\bm{k}}'\|_2+\frac{2}{k^3}\|w_{\bm{k}}'\|_2\|w_{\bm{k}}''\|_2\\
        &\geq k^3 \|w_{\bm{k}}\|^2_{\infty}+\frac{1}{k^3}\|w_{\bm{k}}'\|^2_{\infty}.
    \end{aligned}
\end{equation}
Substituting \eqref{eq_est_wtheta_P} and the upper bound for $I_T$ \eqref{eq_est_It_large} into \cref{eq_wtheta_Bk} yields the lower estimate on $Q_{\bm{k}}$,
\begin{equation}\label{eq_est_wtheta_Bk_large}
    Q_{\bm{k}}\geq \left(k-\frac{{\mathcal{R}}}{\beta}\right)\left(k^2 \|w_{\bm{k}}\|^2_{\infty}+\frac{1}{k^4}\|w_{\bm{k}}'\|_{\infty}^2\right).
\end{equation}
Under condition \eqref{eq_est_wtheta_SC_large}, the right-hand side of \eqref{eq_est_wtheta_Bk_large} is non-negative for all $k\geq k_c$.
\end{proof}

On the other hand, for smaller wavenumbers, the following lemma applies.

\begin{lemma}
    [Sufficient conditions for $Q_{\bm{k}}\geq 0$, small $k$] \label{lemma_wtheta_smallk}

Let $Q_{\bm{k}}$ be specified by \cref{eq_wtheta_Bk,eq_Theta_Bk_decompose}, $k_c>0$ be an upper threshold wavenumber, and $a\in(0,\,1)$ be an arbitrary number. Suppose that the positive pair $(\beta,\,\delta)$ satisfies
\begin{equation}\label{eq_est_wtheta_SC_small}
    \frac{{\mathcal{R}}}{\beta}\leq\min\left\{
    \frac{6a}{k_c^2 \delta},\quad 2\sqrt{\frac{5}{k_c \delta}}\left(\frac{2-2a}{3}\right)^{\frac{1}{4}}
    \right\}.
\end{equation}
Then, $Q_{\bm{k}}\geq0$ for all $0<k\leq k_c$.
\end{lemma}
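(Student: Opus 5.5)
The plan is to reuse the proof of \cref{lemma_Theta_smallk} almost verbatim. The only structural change is that $Q_{\bm{k}}$ in \cref{eq_wtheta_Bk} contains only the top boundary-layer integral $I_T$, with coefficient $2\mathcal{R}/(\beta\delta^2)$, and has no bottom contribution $I_B$. Since only a lower bound on $Q_{\bm{k}}$ is needed, the only estimate required is an upper bound on $I_T$.

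By the symmetry $Z\mapsto 1-Z$ and $w_{\bm{k}}(1)=0$, the estimate \eqref{eq:IB_small} applies to $I_T$. Its ingredients are the pointwise bounds $|w_{\bm{k}}(Z)|\le \sqrt{1-Z}\,\|w'_{\bm{k}}\|_2$ and $|w_{\bm{k}}(Z)|\le (1-Z)\|w'_{\bm{k}}\|_\infty$, together with Cauchy--Schwarz on the $w''$ term. This gives
\begin{equation*}
I_T\le \tfrac16\delta^3k^2\|w'_{\bm{k}}\|_2^2+\frac{\delta^{5/2}}{\sqrt{30}\,k^4}\|w'_{\bm{k}}\|_\infty\|w''_{\bm{k}}\|_2 .
\end{equation*}
For $P$ I will use the lower bound \eqref{eq_est_P_small} with the same weights $a$ and $b=2/3$:
\begin{equation*}
P\ge 2a\|w'_{\bm{k}}\|_2^2+\frac{2}{k^{9/2}}\left(\tfrac23\right)^{3/4}(1-a)^{1/4}\|w'_{\bm{k}}\|_\infty\|w''_{\bm{k}}\|_2 .
\end{equation*}

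Combining the two bounds gives
\begin{equation*}
Q_{\bm{k}}\ge\left(2a-\tfrac13k^2\delta\tfrac{\mathcal{R}}{\beta}\right)\|w'_{\bm{k}}\|_2^2+\frac{2}{k^{9/2}}\left[\left(\tfrac23\right)^{3/4}(1-a)^{1/4}-\sqrt{\tfrac{k\delta}{30}}\,\tfrac{\mathcal{R}}{\beta}\right]\|w'_{\bm{k}}\|_\infty\|w''_{\bm{k}}\|_2 .
\end{equation*}
Here the factor $2$ in front of $\sqrt{k\delta/30}$ in \eqref{eq_est_Bk_small} is absent. In that proof the factor came from $I_T$ and $-I_B$ each contributing a copy of the $\|w'\|_\infty\|w''\|_2$ term, while now only $I_T$ contributes. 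Both coefficients are increasing in $k$ with a negative sign, so it suffices to check them at $k=k_c$. The first is non-negative when $\mathcal{R}/\beta\le 6a/(k_c^2\delta)$. For the second, solving gives
\begin{equation*}
\frac{\mathcal{R}}{\beta}\le \sqrt{\frac{30}{k_c\delta}}\left(\tfrac23\right)^{3/4}(1-a)^{1/4} = 2\sqrt{\frac{5}{k_c\delta}}\left(\frac{2-2a}{3}\right)^{1/4}.
\end{equation*}
The equality follows from $\sqrt{30}\,(2/3)^{3/4}=\sqrt{5}\,\sqrt{6}\,(2/3)^{3/4}$ and rewriting $\sqrt6\,(2/3)^{3/4}(1-a)^{1/4}=2\,((2-2a)/3)^{1/4}$. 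I expect checking this constant is the only real obstacle; it yields exactly \eqref{eq_est_wtheta_SC_small}.
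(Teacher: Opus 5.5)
Your proposal is correct and follows the paper's proof essentially line for line: you transfer \eqref{eq:IB_small} to $I_T$ by symmetry, combine it with \eqref{eq_est_P_small} at $b=2/3$ to get the same two-term lower estimate, and require both prefactors to be non-negative at the worst case $k=k_c$. Your two supporting checks are also right: the factor $2$ in front of $\sqrt{k\delta/30}$ disappears because only $I_T$ contributes (there is no $-I_B$ term), and $\sqrt{30}\,(2/3)^{3/4}(1-a)^{1/4}=2\sqrt{5}\,\bigl((2-2a)/3\bigr)^{1/4}$.
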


\begin{proof}
    Applying the estimates \eqref{eq:IB_small} and \eqref{eq_est_P_small} to \cref{eq_wtheta_Bk} yields the lower bound,
\begin{equation}
\begin{aligned}
     Q_{\bm{k}}
     &\geq \left(2a-\frac{k^2\delta {\mathcal{R}} }{3\beta}\right)\|w_{\bm{k}}'\|_2^2+\frac{2}{k^{9/2}}\left[\left(\frac{2}{3}\right)^{\frac{3}{4}}(1-a)^{\frac{1}{4}}-\sqrt{\frac{k\delta}{30}}\frac{{\mathcal{R}} }{\beta}\right]\|w_{\bm{k}}'\|_{\infty}\|w_{\bm{k}}''\|_2.
\end{aligned}
\end{equation}
This lower estimate is non-negative provided that the pre-factors of both terms on the right-hand side are non-negative for all $k\in \left(0, \,k_c\right]$. This condition is equivalent to \eqref{eq_est_wtheta_SC_small}, which suffices for the non-negativity of $Q_{\bm{k}}$.
\end{proof}

Finally, taking the two auxiliary lemmas together, we arrive at a sufficient condition for the spectral constraint.

\begin{lemma}[Sufficient conditions for spectral constraint]
\label{lemma_SC_wtheta}
Let $\phi$ be defined by \cref{eq_phi_wtheta}. Suppose that the positive pair $(\beta,\,\delta)$ satisfies
\begin{equation}\label{eq_est_wtheta_SC}
    \frac{{\mathcal{R}} }{\beta}\leq c\, \delta^{-\frac{1}{3}},\quad\text{where}\quad c=\left[\frac{20}{9}\left(\sqrt{154}-10\right)\right]^{\frac{1}{3}}=1.750.
\end{equation}
Then, $(\beta,\,\delta)$ satisfies the spectral constraint.
\end{lemma}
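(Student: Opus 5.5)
The plan is to follow the proof of \cref{lemma_SC_Theta}: combine \cref{lemma_wtheta_largek,lemma_wtheta_smallk} by choosing the threshold wavenumber $k_c$ and the weight $a\in(0,1)$ so that the large-$k$ condition \eqref{eq_est_wtheta_SC_large} and both branches of the small-$k$ condition \eqref{eq_est_wtheta_SC_small} coincide. Under that choice, the single inequality \eqref{eq_est_wtheta_SC} implies all three conditions at once. Then $Q_{\bm{k}}\geq0$ for $k\geq k_c$ by \cref{lemma_wtheta_largek} and for $0<k\leq k_c$ by \cref{lemma_wtheta_smallk}. Hence $Q_{\bm{k}}\geq0$ for every $k>0$, which is the spectral constraint \eqref{eq_wtheta_SC}.

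Concretely, I would make the ansatz $k_c=c\,\delta^{-1/3}$. The large-$k$ condition then becomes exactly $\mathcal{R}/\beta\leq c\,\delta^{-1/3}$. Substituting into \eqref{eq_est_wtheta_SC_small}, the first entry of the minimum becomes $6a\,c^{-2}\delta^{-1/3}$. The second entry becomes $2\sqrt{5/c}\,\bigl((2-2a)/3\bigr)^{1/4}\delta^{-1/3}$. So the small-$k$ condition is implied by $\mathcal{R}/\beta\leq c\,\delta^{-1/3}$ provided $c^3\leq 6a$ and $c^3\leq 20\sqrt{(2-2a)/3}$.

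To get the largest admissible $c$, I would balance these two constraints by setting $6a=20\sqrt{(2-2a)/3}$. Squaring gives $27a^2+200a-200=0$. Its positive root is $a=\tfrac{10}{27}\bigl(\sqrt{154}-10\bigr)\approx0.893$, which lies in $(0,1)$ as required by \cref{lemma_wtheta_smallk}. This yields $c^3=6a=\tfrac{20}{9}\bigl(\sqrt{154}-10\bigr)$, i.e.\ the stated value $c\approx1.750$.

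There is no genuine obstacle here, since the analytic work is already done in the two auxiliary lemmas. The only points needing care are:
\begin{itemize}
\item solving the quadratic for $a$ correctly;
\item checking that the positive root lies in $(0,1)$;
\item noting that $k_c$ depends on $\delta$, which is allowed because $k_c$ is arbitrary in both auxiliary lemmas.
\end{itemize}
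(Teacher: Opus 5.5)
Your proposal is correct and follows the paper's proof. Like the paper, you equate the right-hand sides of \eqref{eq_est_wtheta_SC_large} and \eqref{eq_est_wtheta_SC_small}, which gives $a=\tfrac{10}{27}(\sqrt{154}-10)$ and $k_c=c\,\delta^{-1/3}$, so that all three conditions reduce to \eqref{eq_est_wtheta_SC}; your algebra ($27a^2+200a-200=0$, $a\approx0.893\in(0,1)$, $c^3=6a$) checks out.
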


\begin{proof} 
If we equate the right hand sides of \eqref{eq_est_wtheta_SC_large} and \eqref{eq_est_wtheta_SC_small}, then, solving the resulting two equations for $a$ and $k_c$ gives that,
\begin{equation}
    \label{eq:a_kc_wT}
    a=\frac{10}{27}\left(\sqrt{154}-10\right),\quad k_c=\left[\frac{20}{9}\left(\sqrt{154}-10\right)\right]^{\frac{1}{3}}\delta^{-\frac{1}{3}}.
\end{equation}
Given \eqref{eq:a_kc_wT}, both conditions of \cref{lemma_wtheta_largek,lemma_wtheta_smallk}, for large and small $k$ respectively, become equivalent to \eqref{eq_est_wtheta_SC}. Therefore, when \eqref{eq_est_wtheta_SC} holds, then $Q_{\bm{k}} \geq 0 $ for all $k>0$.
\end{proof}

\begin{remark}
    The velocity ansatz \eqref{eq_velocityansatz} may also be employed to demonstrate the optimality of the exponent $-\frac{1}{3}$ in the condition \eqref{eq_est_wtheta_SC}. Substituting \cref{eq_velocityansatz} into \cref{eq_wtheta_Bk}, we obtain
    \begin{equation}
        {Q}_{\bm{k}}^{ans}=\frac{k^6\delta}{2}+\frac{\pi^2}{\delta}+\frac{\pi^4}{2k^6\delta^3}-\frac{{\mathcal{R}}}{2\beta}\left(k^2+\frac{\pi^2}{k^4 \delta^2}\right)=\frac{1}{2k\delta }\left(k^3 \delta+\frac{\pi^2}{k^3 \delta}\right)\left[k^4\delta+\frac{\pi^2}{k^2\delta}-\frac{{\mathcal{R}}}{\beta}\right].
    \end{equation}
    A necessary condition for the spectral constraint to be satisfied is then 
    \begin{equation}
        \frac{{\mathcal{R}} }{\beta}\leq \min_k\, k^4\delta+\frac{\pi^2}{k^2\delta}=3\cdot 2^{-\frac{2}{3}}\pi^{\frac{4}{3}}\delta^{-\frac{1}{3}}.
    \end{equation}
    Therefore, no sufficient condition for the spectral constraint can take the form $\frac{{\mathcal{R}}}{\beta}\lesssim \delta^{-d}$ with $d>\frac{1}{3}$, if $\delta$ decreases as ${\mathcal{R}}$ increases beyond a certain threshold.
\end{remark}


\subsection{Proof of theorem \ref{Theorem_wtheta}}

To complete the demonstration of \cref{Theorem_wtheta}, we define \textit{a posteriori}
\begin{equation}\label{eq_R0_wtheta}
    {\mathcal{R}}_0 \coloneqq 2^{3}c=14.00,
\end{equation}
where $c$ is defined in \eqref{eq_est_wtheta_SC}, and choose in \cref{lemma_SC_wtheta}
\begin{equation}\label{eq_wtheta_choice}
    \beta=8\left(\frac{{\mathcal{R}}}{{\mathcal{R}}_0}\right)^{\frac{3}{5}},\quad \delta=\left(\frac{{\mathcal{R}}}{{\mathcal{R}}_0}\right)^{-\frac{6}{5}},
\end{equation}
which satisfy the condition \eqref{eq_est_wtheta_SC} and, based on \cref{proposition_boundingwtheta}, yield the upper bound on $\langle {w\theta} \rangle_{V,\,t}$,
\begin{equation}\label{eq_wtheta_bound}
    B=\frac{5}{24}\left(\frac{{\mathcal{R}}}{{\mathcal{R}}_0}\right)^{\frac{3}{5}}-\frac{1}{4}-\frac{1}{32}\left(\frac{{\mathcal{R}}}{{\mathcal{R}}_0}\right)^{-\frac{3}{5}}+\frac{1}{6}\left(\frac{{\mathcal{R}}}{{\mathcal{R}}_0}\right)^{-\frac{6}{5}}
    ,\quad \forall {\mathcal{R}}\geq {{\mathcal{R}}_0}.
\end{equation}
This bound is valid for ${\mathcal{R}}\geq{\mathcal{R}}_0$, where the requirement $\delta\leq 1$ is satisfied. For small $\mathcal{R}$, an unconstrained global minimum of \cref{eq_wtheta_B}, $0$, is achieved at $\delta = 1$ and $\beta = 2$. By the condition \eqref{eq_est_wtheta_SC}, this choice is feasible up to ${\mathcal{R}} = {\mathcal{R}}_0/4$. Since 0 is also the uniform lower bound, it follows that $\langle w\theta\rangle_{V,\,t}=0$ in this regime. For ${\mathcal{R}} > {\mathcal{R}}_0/4$, we then choose
\begin{equation}
    \delta=\sqrt{\frac{2}{\beta}}=\left(\frac{4{\mathcal{R}}}{{\mathcal{R}}_0}\right)^{-\frac{3}{7}},
\end{equation}
which gives the bound
\begin{equation}\label{eq_wtheta_bound2}
    B=\frac{1}{24}\left(\frac{4{\mathcal{R}}}{{\mathcal{R}}_0}\right)^{\frac{6}{7}}-\frac{1}{4}+\frac{1}{3}\left(\frac{4{\mathcal{R}}}{{\mathcal{R}}_0}\right)^{-\frac{3}{7}}-\frac{1}{8}\left(\frac{4{\mathcal{R}}}{{\mathcal{R}}_0}\right)^{-\frac{6}{7}}
    ,\quad\forall \mathcal{R}\geq \frac{1}{4}\mathcal{R}_0.
\end{equation}
This bound becomes suboptimal to \cref{eq_wtheta_bound} for ${\mathcal{R}} \geq 2.160\,{\mathcal{R}}_0$. Theorem \ref{Theorem_wtheta} then holds with 
\begin{gather}
    c_0=0.01424,\quad c_1=0.25,\quad c_2=0.5702,\quad c_3=0.3657, \nonumber\\
    c_4=0.04277,\quad c_5=0.25,\quad c_6=0.1522,\quad c_7=3.954,\quad d_0=3.499,\quad d_1=30.23.
\end{gather}

\begin{remark}
    Eq. \eqref{eq_wtheta_bound} is suboptimal once it exceeds the uniform upper bound $\langle {w\theta}\rangle_{V,\,t}\leq \frac{1}{2}$, which occurs at ${\mathcal{R}}=8.376\,{\mathcal{R}}_0=117.2$. The upper bounds on $\langle w\theta\rangle_{V,\,t}$ for different ranges of ${\mathcal{R}}$ are displayed in figure \ref{fig_wtheta_bounds}, where the analytical bounds lie at most 10\% above the numerical solution of minimising \cref{eq_wtheta_B} subject to the constraint \eqref{eq_est_wtheta_SC} and $\delta\leq1$. 
\end{remark}

 \begin{figure}
     \centering
     \includegraphics[width=0.5\linewidth]{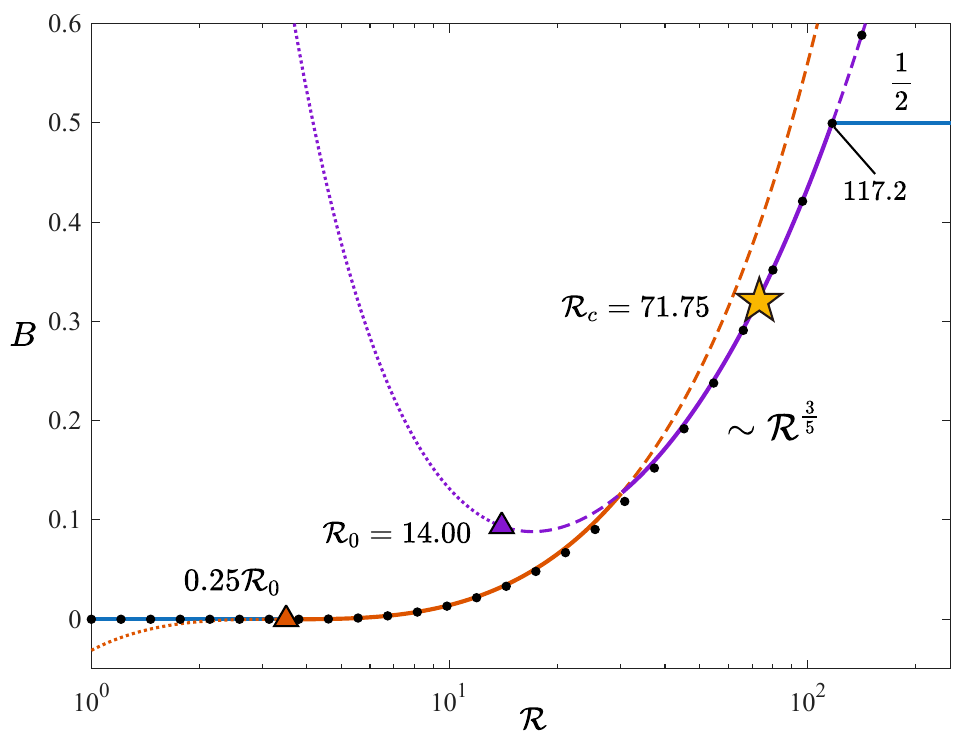}
     \caption{Upper bounds on $\langle w\theta \rangle_{V,\,t}$ from \eqref{eq_wtheta_bound} (purple solid line) and \eqref{eq_wtheta_bound2} (red solid line), compared with the uniform bounds of $0$ and $\frac{1}{2}$ (blue solid lines), along with the numerical minimum (black dots) of \cref{eq_wtheta_B} subject to the condition \eqref{eq_est_wtheta_SC} and $\delta \leq 1$. The triangle markers denote the ${\mathcal{R}}$ above which \cref{eq_wtheta_bound,eq_wtheta_bound2} are valid, and the star indicates the linear onset of convection. Dotted and dashed lines respectively indicate regions where the bounds are invalid and valid but suboptimal. }
     \label{fig_wtheta_bounds}
 \end{figure}

\section{Discussion}
\label{Discussion}

In this work, we present the asymptotically reduced equations for rapidly rotating internally heated convection and prove that, in the limit of infinite $Pr$, under stress-free and fixed temperature boundary conditions, the mean temperature of the system is bounded from below by $\langle\Theta\rangle_{V,\,t}\gtrsim R^{-3/4}E^{-1}$, while the mean heat transport is bounded from above by $\langle{w\theta}\rangle_{V,\,t} \lesssim R^{3/5}E^{4/5}$, for $RE^{4/3}$ larger than certain thresholds. We demonstrate that the bounds scale optimally within the methodology and for the specific class of background profiles selected.  

\begin{figure}
    \centering
    \includegraphics[width=0.6\linewidth]{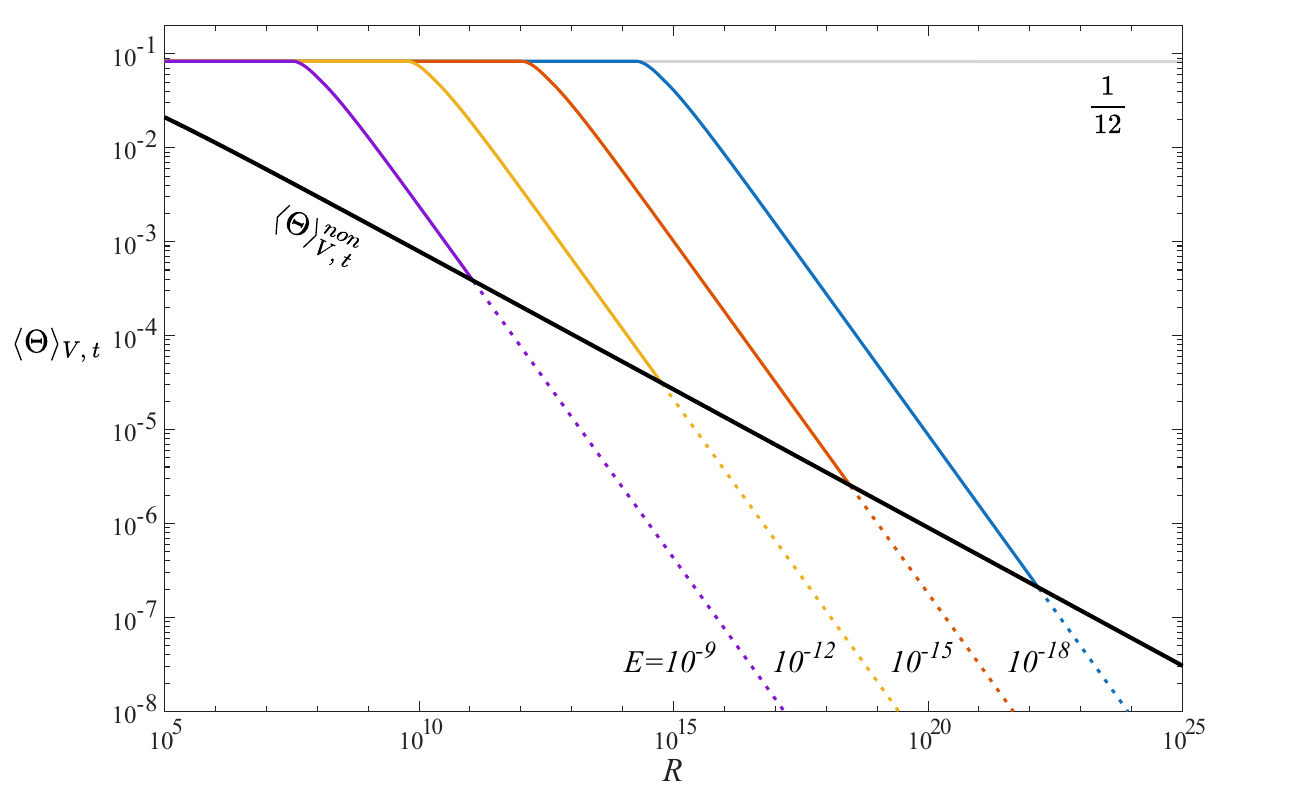}
    \caption{Lower bounds on $\langle\Theta\rangle_{V,\,t}$ from \cref{eq_bound_Theta,eq_bound_Theta_2} as functions of $R$ for different values of $E$ (coloured solid lines). The black solid line shows the bound for non-rotating convection, \cref{eq_bound_Theta_nonrot}. Dotted lines indicate regions where the bounds no longer apply.}
    \label{fig_bounds}
\end{figure}

Figure \ref{fig_bounds} compares lower bounds of $\langle\Theta\rangle_{V,\,t}$ from \cref{eq_bound_Theta,eq_bound_Theta_2} to the non-rotating lower bound from \cite{whitehead2012rigid} of,
\begin{equation}\label{eq_bound_Theta_nonrot}
    \langle\Theta\rangle^{non}_{V,\,t}\geq 0.6910 R^{-\frac{5}{17}}\left(1-2.849R^{-\frac{5}{17}}\right).
\end{equation}
At fixed $E$, the bound on $\langle\Theta\rangle_{V,\,t}$ 
scales with a smaller power of $R$
than $\langle\Theta\rangle_{V,\,t}^{non}$. 
Although the bound scales with a smaller power of $R$ under rotation,
the absolute magnitude of $\langle\Theta\rangle_{V,\,t}$ remains larger than $\langle\Theta\rangle^{non}_{V,\,t}$ while the flow is rotationally constrained.
That means \cref{Theorem_Theta} provides better bounds only for $R\lesssim E^{-68/31}$. For larger $R$, however, the least stringent criterion reported for rotation dominance \cite{kunnen2021geostrophic}, $R\lesssim E^{-2}$, is not satisfied, and the flow lies outside of the QG regimes of interest. For the other quantity of interest, lower bounds on $\mathcal{F_B}$ can be obtained from \cref{Theorem_wtheta}, by \cref{eq_FB}, and more heat tends to exit from the lower boundary as rotation becomes more rapid. Although a non-rotating bound $\mathcal{F}_B\geq 1/2-2^{-21/5}R^{1/5}$ has been demonstrated for no-slip boundary conditions \cite{arslan2021bounds}, its validity is restricted to $R\leq 65536$, hence a comparison cannot be made.

A diagnostic Rayleigh number, $R_{diag}\coloneqq R/Nu_p$ with $Nu_p\sim \langle\Theta\rangle_{V,\,t}^{-1}$ being a proxy Nusselt number quantifying the intensity of mixing 
\cite{goluskin2016internally}, can be introduced to facilitate comparison between bounds on $\langle \Theta \rangle_{V,\,t}$ in internally heated convection and those on the Nusselt number $Nu$, quantifying the convective enhancement of heat transport, in Rayleigh-B\'enard convection. With this definition, \cref{Theorem_Theta} can be written as $Nu_p \lesssim R_{diag}^{3}$ for fixed $E$, which coincides with the upper bound on $Nu$ for NH-QG RBC with stress-free and fixed-temperature boundary conditions \cite{grooms2014bounds}. The equivalence between our bounds on $\langle \Theta \rangle_{V,\,t} $ and those on $Nu$ arises from the fact that the spectral constraint in both variational problems (equation 18 in \cite{grooms2014bounds} and \cref{eq_Theta_SC,eq_Program_Theta_SC}) is structurally identical. By using the Green’s function solution of the diagnostic equation \eqref{eq_ODE} to estimate $w_{\bm{k}}$ in terms of $\theta_{\bm{k}}$, and assuming \textit{a priori} a threshold wavenumber $k_c$ scaling as $\delta^{-\frac{1}{3}}$, Grooms and Whitehead \cite{grooms2014bounds} establish a sufficient condition with the optimal scaling ${\mathcal{R}}\lesssim \delta^{-\frac{1}{3}}$. This is equivalent to \eqref{eq_est_SC} upon fixing $A=\frac{1}{2}$, despite differences in the definition of $\phi$. In contrast to the Green’s function approach, we invoke \eqref{eq_ODE} to eliminate $\theta_{\bm{k}}$, demonstrate conditions for the validity of the spectral constraint at large and small wavenumbers with integral inequalities, and
solve for $k_c$ that delineates the two regions. Solving for $k_c$ provides a single sufficient condition for the validity of the spectral constraint, and this condition has the same scaling as that Grooms and Whitehead find in their problem.

The upper bound for $\langle {w\theta} \rangle_{V,\,t}$ given by \cref{Theorem_wtheta} exceeds the uniform upper bound $\frac{1}{2}$ at large ${\mathcal{R}}$. This has been demonstrated to correspond to a violation of the minimum principle $T\geq0$ in previous literature \cite{arslan2021bounds}. Previous studies of IHC address this issue by imposing the minimum principle as an additional constraint in \eqref{eq_supwtheta} \cite{arslan2021bounds,arslan2023rigorous}, which changes the expression of the bound \eqref{eq_Program_wtheta_OF}, without affecting the spectral constraint. Despite pursuing a similar strategy, we were unable to establish a bound that tends to $1/2$ from below, using background profiles featuring quadratic, linear or multiple sub-layered boundary layers. The difficulty arises from the $\beta$-dependence of the bound $B$ when $\delta$ is constrained by \eqref{eq_est_wtheta_SC}, where $\beta$ now varies with $\mathcal{R}$ as opposed to being a constant, as is the case for bounding $\langle \Theta\rangle_{V,\,t}$. 
That being said, the governing equations, \eqref{eq_QG}, are valid up to a finite $R$. When we fix $E$, if $R$ is sufficiently large enough, convection is dominated by buoyancy as opposed to rotation, and \eqref{eq_QG} are no longer valid. Therefore, the bound on $\langle w \theta \rangle_{V,\,t}$ is relevant for the geostrophic regime valid in the range $0\leq\langle w \theta \rangle_{V,\,t} \leq C$, where $C<1/2$ and the asymptotic limit of $1/2$ is not relevant to our study but is instead addressed by bounds on IHC without rotation. The value of the aforementioned $C=C(E,R)$ remains an open question, and the parameter regime over which the reduced equations are valid could be investigated in future work by comparison with solutions of the full PDE system.

Unlike the non-rotating bounds in \cite{lu2004bounds}, our methodology for proving \cref{Theorem_Theta,Theorem_wtheta} cannot be directly extended to general $Pr$. Without the diagnostic relation \eqref{eq_ODE} arising in the infinite-$Pr$ limit, or supplying additional information, the non-negativity of $Q_{\bm{k}}$ cannot be ensured, even when the kinetic energy balance is taken into account. Nevertheless, future work may follow the approaches of \cite{wang2013bound}, where the solutions at infinite $Pr$ are treated as perturbations from the solutions when $Pr<\infty$, and 
estimates on inertia terms lead to corrected bounds applicable to large but finite $Pr$. In \cite{tilgner2022bounds}, a similar approach is taken to obtain a semi-analytical bound on rotating convection, albeit by only using the maximum principle from the temperature equation and not any dynamical information.
Another approach would be to use different auxiliary functionals, $\mathcal{V}$, 
although higher-order polynomials remain analytically intractable such that non-polynomial alternatives may be required \cite{Goluskin_2019,CHERNYSHENKO2023}. 

\appendix
\section{Reduced model for rapidly rotating internally heated convection}
\label{app:eqs}
\label{A_1_demonstration}

This appendix outlines the derivation of the reduced model for rapidly rotating internally heated convection, \eqref{eq_QG}, based on the setup in \cref{sec_setup}. The derivation follows the same methodology as that used to obtain the Rayleigh–Bénard NH-QG equations in \cite{julien1998new} and \cite{sprague2006numerical}, and begins with the standard dimensional Navier–Stokes, temperature, and continuity equations for Boussinesq rotating convection with uniform internal heating, in the Cartesian coordinate system $(x,y,z)$,
\begin{subequations}\label{eq_NS}
    \begin{gather}
        \partial_t \bm{u}+\nabla\cdot\left(\bm{u}\bm{u}^T\right)+2\Omega\hat{\bm{z}}\times\bm{u}=-\frac{1}{\rho}\nabla p+\alpha gT \hat{\bm{z}}+\nu\nabla^2\bm{u},\\
        \partial_tT+\nabla\cdot\left(\bm{u}T\right)=\kappa\nabla^2T+\frac{S}{\rho c_p},\\
        \nabla\cdot\bm{u}=0. \label{eq_NS_continuity}
    \end{gather}
\end{subequations}
Here, $\bm{u}=\bm{u}(x,y,z,t)$, $T=T (x,y,z,t)$, and $p=p(x,y,z,t)$ denote the dimensional velocity, temperature and pressure fields. For an incompressible fluid, conservation of mass reduces to the divergence-free condition \eqref{eq_NS_continuity}.

For the reduced equations to be applicable to general $Pr$, we derive them on viscous time scales and then, to obtain \cref{eq_QG}, rescale to thermal time scales and take the infinite $Pr$ limit, following sections 2.4 and 2.5 of \cite{sprague2006numerical}. Motivated by the inherent anisotropy between flow scales along the rotation axis and those in the horizontal planes in QG regimes, an anisotropic non-dimensionalisation is adopted. Specifically, the layer height $H$ is taken as the vertical length scale, while the horizontal characteristic flow scale $L$ is used as the horizontal length scale, with
\begin{equation}
\epsilon \coloneqq \frac{L}{H} \ll 1.
\end{equation}
Moreover, to reflect the separation between fast horizontal dynamics and the slow vertical evolution subject to rapid rotation, the time scale for the fluctuating fields about the horizontal mean is chosen as the horizontal viscous diffusion time $L^{2}/\nu$, while the horizontally averaged fields evolve on the slower vertical diffusion time $H^{2}/\nu$. Aside from these, the velocity and temperature units are isotropically chosen as the horizontal viscous diffusion velocity $\nu/L$ and the conductive temperature $S H^{2}/\rho c_p \kappa$, respectively. The pressure scale is fixed as $\rho (\nu/L)^{2} (H/L)^{2}$. With these choices, the dimensionless form of \cref{eq_NS} after decomposing the velocity, temperature and pressure into a horizontal mean and fluctuations, is
\begin{subequations}\label{eq_NS_dimensionless}
    \begin{gather}
        \partial_t \tilde{\bm{u}}+\epsilon^2 \partial_{t_s} \,\overline{\bm{u}}+\left(\nabla_h+\epsilon\hat{\bm{Z}}\partial_Z\right)\cdot\left[\left(\overline{\bm{u}}+\tilde{\bm{u}}\right)\left(\overline{\bm{u}}+\tilde{\bm{u}}\right)^T\right]+\frac{\epsilon^2}{E}\hat{\bm{Z}}\times\left(\overline{\bm{u}}+\tilde{\bm{u}}\right)\nonumber\\
        =-\left( \epsilon^{-2}\,\nabla_h+\epsilon^{-1} \hat{\bm{Z}}\partial_Z\right)p+\frac{\epsilon^3 R}{Pr}\left(\overline{T}+\tilde{T}\right) \hat{\bm{Z}}+\left(\nabla_h^2+\epsilon^2 \partial_Z^2\right)\left(\overline{\bm{u}}+\tilde{\bm{u}}\right),\\
        \partial_t \tilde{T}+\epsilon^2\partial_{t_s} \,\overline{T}+\left(\nabla_h+\epsilon\hat{\bm{Z}}\partial_Z\right)\cdot\left[\left(\overline{\bm{u}}+\tilde{\bm{u}}\right)\left(\overline{T}+\tilde{T}\right)\right]=\frac{1}{Pr}\left(\nabla_h^2+\epsilon^2 \partial_Z^2\right)\left(\overline{T}+\tilde{T}\right)+\frac{\epsilon^2}{Pr},\\
       \left(\nabla_h+\epsilon\hat{\bm{Z}}\partial_Z\right)\cdot\left(\overline{\bm{u}}+\tilde{\bm{u}}\right)=0.
    \end{gather}
\end{subequations}
Therein, $t_s\coloneqq \epsilon^2 t$, the capitalised $Z$ denotes the dimensionless vertical coordinate to highlight the anisotropy, and $\tilde{\left(\cdot\right)}\coloneqq \left(\cdot\right)-\overline{\left(\cdot\right)}$ denote the fluctuation of a field about its horizontal mean. 

In \cite{sprague2006numerical}, a small-scale dimensionless vertical coordinate $z$ is retained to capture vertical variations on spatial scales comparable to $L$, which are then removed by averaging over those scales. In contrast, the present formulation assumes the absence of vertical modes on such scales, following \cite{julien1998new}. This assumption is appropriate for quasi-geostrophic regimes, where rapid vertical variations are inhibited by the dominant geostrophic balance. We then state below the asymptotically reduced system from \cref{eq_NS_dimensionless} in the limit of rapid rotation, along with some necessary definitions . 

Let $\bm{u}$, $p$, and $T$ be solutions of the dimensionless Boussinesq system for internally heated rotating convection \eqref{eq_NS_dimensionless}, satisfying impermeable, stress-free boundary conditions at $Z=0,1$ and horizontal periodicity. Decompose the velocity field as $\bm{u}=\bm{u}_h+w\hat{\bm{Z}}$, with $\bm{u}_h(x,y,Z,t)$ and $w(x,y,Z,t)$ being the horizontal and the vertical components, respectively, and define the vertical vorticity by $\zeta\coloneqq \hat{\bm{Z}} \cdot \left(\nabla_h\times {\bm{u}}_{h}\right)$. In the limit of rapid rotation, suppose that $\epsilon=E^{\frac{1}{3}}\to0$, with the reduced Rayleigh number ${\mathcal{R}} \coloneqq R \epsilon^4$ remaining $O(1)$. Further suppose that the dimensionless fields $\bm{u}$, $p$, $T$ and $\zeta$ admit the asymptotic expansion
\begin{equation}\label{eq_asymptoticexpansion}
    f=\sum_{n=0}^{\infty}\epsilon^nf_n=f_0+\epsilon f_1+\epsilon^2f_2+O(\epsilon^3),
\end{equation}
in which $n\in\mathbb{N}$, and $f_n=f_n (x,y,Z,t)$ is independent of $\epsilon$. Then, the dynamics of the leading order fields follow the dimensionless PDEs,
\begin{subequations}\label{eq_QG_viscousscale}
\begin{gather}
        \partial_t \tilde{w}_0+J[\tilde{p}_1,\tilde{w}_0]+\partial_Z\tilde{p}_1=\frac{{\mathcal{R}}}{Pr}\tilde{T}_1 +\nabla_h^2 \tilde{w}_0,\label{eq_w_originalscale}\\
    \partial_t \tilde{\zeta}_0+J\left[\tilde{p}_1,\,\tilde{\zeta}_0\right]-\partial_Z\tilde{w}_0=\nabla_{h}^2 \tilde{\zeta}_0,\label{eq_zeta_originalscale}\\
    \tilde{\zeta}_0=\nabla_{h}^2\,\tilde{p}_1,\label{eq_p_originalscale}\\
    \partial_t\tilde{T}_1+J[\tilde{p}_1,\,\tilde{T}_1]+\tilde{w}_0\partial_Z \overline{T}_0=\frac{1}{Pr}\nabla_{h}^2 \tilde{T}_1 \label{eq_theta_originalscale}\\
    E^{-\frac{2}{3}}\partial_t\overline{T}_0+\partial_Z\left(\overline{\tilde{w}_0\tilde{T}_1}\right)=\frac{1}{Pr}\partial_Z^2\overline{T}_0+\frac{1}{Pr}.\label{eq_Theta_originalscale}
\end{gather}    
\end{subequations}

 As will be shown in \cref{app_A1}, $\tilde{p}_1$ is also a stream function for the leading-order horizontal velocity field, with $\tilde{\bm u}_{h\,0}=-\hat{\bm x}\partial_y\tilde{p}_1+\hat{\bm y}\partial_x\tilde{p}_1$. From \cref{eq_QG_viscousscale}, the reduced model in the infinite-$Pr$ limit, given by \cref{eq_QG}, can be derived upon applying the rescaling 
    \begin{equation}
    t\to Pr\,t,\quad (w,\,\zeta,\,\,p)\to\frac{1}{Pr}(w,\,\zeta,\,\,p),
    \end{equation}
    and relabelling $\tilde{w}_0$, $\tilde{\zeta}_0$, $\tilde{p}_1$, $\tilde{T}_1$ and $\overline{T}_0$ as $w$,  $\zeta$, $\psi$, $\theta$ and $\Theta$ respectively, in order to match the notation used in the main text. However, throughout the remainder of this appendix we retain the original variables without relabelling or rescaling.

\subsection{Derivation of the reduced equations}
\label{app_A1}
Now we derive \cref{eq_QG_viscousscale}, which begins by taking the horizontal average of \cref{eq_NS_dimensionless} to obtain the mean-field dynamics,
    \begin{subequations}\label{eq_NS_mean}
    \begin{gather}
        \epsilon^{-1}\left[\hat{\bm{Z}}\times \overline{\bm{u}}_h+\hat{\bm{Z}}\left(\partial_Z \,\overline{p}-\frac{{\mathcal{R}}}{Pr}\overline{T}\right)
        \right]+\epsilon \, \partial_Z \left(\overline{w\bm{u}}\right)+\epsilon^2\left(\partial_{t_s}\, \overline{\bm{u}}-\partial_Z^2\,\overline{\bm{u}}\right)=0,\label{eq_NS_mean_u}\\
        \epsilon\,\partial_Z \left(\overline{wT}\right)+\epsilon^2\left(\partial_{t_s} \,\overline{T}-\frac{1}{Pr}\partial_Z^2 \,\overline{T}-\frac{1}{Pr}\right)=0,\label{eq_NS_mean_Theta}\\      \epsilon\,\partial_Z\overline{w}=0.\label{eq_NS_mean_w}
    \end{gather}
\end{subequations}
Therein, terms are organized in ascending powers of $\epsilon$ to highlight the leading-order balances. From \cref{eq_NS_mean_w} and the impermeable boundary conditions, it directly follows that 
\begin{equation}\label{eq_mean_w}
    \overline{w}=0.
\end{equation}
Subtracting \cref{eq_NS_mean} from \cref{eq_NS_dimensionless} and using \cref{eq_mean_w} further yield the evolution of the fluctuating fields, with terms collected by ascending powers of $\epsilon$,
\begin{subequations}\label{eq_NS_fluc}
\begin{gather}
    \epsilon^{-2}  \nabla_h\,\tilde{p}+\epsilon^{-1}\left[\hat{\bm{Z}}\times\tilde{\bm{u}}_h+\hat{\bm{Z}}\left(\partial_Z \,\tilde{p}-\frac{{\mathcal{R}}}{Pr}\tilde{T}\right)\right]+\left[\partial_t+\left(\overline{\bm{u}}_h+\tilde{\bm{u}}_h\right)\cdot\nabla_h-\nabla_h^2\right]\tilde{\bm{u}}\nonumber\\
    +\epsilon\left[\tilde{w}\partial_Z \left(\overline{\bm{u}}+\tilde{\bm{u}}\right)-\partial_Z \left(\overline{\tilde{w}\tilde{\bm{u}}}\right)\right]-\epsilon^2  \partial_Z^2 \tilde{\bm{u}}=0,
    \label{eq_NS_fluc_u}\\
    \left[\partial_t+\left(\overline{\bm{u}}_h+\tilde{\bm{u}}_h\right)\cdot\nabla_{h} -\frac{1}{Pr}\nabla_h^2\right]\tilde{T}+\epsilon\left[\tilde{w}\partial_Z \left(\overline{T}+\tilde{T}\right)-\partial_Z  \left(\overline{\tilde{w}\tilde{T}}\right)\right]-\frac{\epsilon^2}{Pr} \partial_Z^2 \tilde{T}=0, \label{eq_NS_fluc_Theta}\\
    \nabla_{h}\cdot\tilde{\bm{u}}_{h}+\epsilon\,\partial_Z\tilde{w}=0.\label{eq_NS_fluc_w}
\end{gather}
\end{subequations}

The next step is to substitute \cref{eq_asymptoticexpansion} into \cref{eq_NS_mean,eq_NS_fluc}, and compare terms at successive orders of $\epsilon$, which yields the force and flux balances at each order. Applying this to \cref{eq_NS_mean_u}, from the $O(\epsilon^{-1} )$ and $O(1)$ terms we have
\begin{equation}
    \hat{\bm{Z}}\times \overline{\bm{u}}_{h \,n}+\hat{\bm{Z}}\left(\partial_Z \,\overline{p}_{n} -\frac{{\mathcal{R}}}{Pr}\overline{T}_{n} \right)=0,\quad n=0,\,1,
        \label{eq_mean_u_minus1}
\end{equation}
the horizontal component of which implies
\begin{equation}\label{eq_mean_uh}
    \overline{\bm{u}}_{h\,n}=0,\quad n=0,\,1.
\end{equation}

With \cref{eq_mean_w} utilized, $O(\epsilon)$ and $O(\epsilon^2 )$ terms of the mean temperature equation \eqref{eq_NS_mean_Theta} yield
\begin{subequations}
    \begin{gather}
        \partial_Z \left(\overline{\tilde{w}_0\tilde{T}_0}\right)=0,\label{eq_mean_Theta_1}\\
        \partial_Z\left(\overline{\tilde{w}_0\tilde{T}_1+\tilde{w}_1\tilde{T}_0}\right)+\partial_{t_s} \,\overline{T}_0-\frac{1}{Pr}\partial_Z^2 \,\overline{T}_0 -\frac{1}{Pr} =0;\label{eq_mean_Theta_2}
    \end{gather}
\end{subequations}
while, further aided by \cref{eq_mean_uh}, the $O(1)$ and $O(\epsilon )$ terms of the temperature fluctuation equation \eqref{eq_NS_fluc_Theta} lead to
\begin{subequations}
    \begin{gather}
        \left(\partial_t+\tilde{\bm{u}}_{h\, 0}\cdot\nabla_{h}-\frac{1}{Pr}\nabla_{h}^2\right) \,\tilde{T}_0 = 0,\label{eq_fluc_Theta_0}\\
         \left(\partial_t+\tilde{\bm{u}}_{h \, 0}\cdot\nabla_{h}-\frac{1}{Pr}\nabla_h^2\right)\,\tilde{T}_1 +\tilde{\bm{u}}_{h\, 1}\cdot\nabla_{h}\tilde{T}_0+\tilde{w}_0\partial_Z\left(\overline{T}_0+\tilde{T}_0\right)=0.\label{eq_fluc_Theta_1}
    \end{gather}
\end{subequations}
Here, the relation \eqref{eq_mean_Theta_1} has been used to simplify \cref{eq_fluc_Theta_1}.

The $O(\epsilon^{-2})$, $O(\epsilon^{-1})$ and $O(1)$ force balances in the momentum fluctuation equation \eqref{eq_NS_fluc_u} are,
\begin{subequations}
    \begin{gather}
    \nabla_h \,\tilde{p}_0=0,\label{eq_fluc_u_minus2}\\   
        \nabla_h \,\tilde{p}_1 + \hat{\bm{Z}}\times \tilde{\bm{u}}_{h\, 0}+\hat{\bm{Z}}\left(\partial_Z \,\tilde{p}_0-\frac{{\mathcal{R}}}{Pr} \tilde{T}_0\right)=0,
        \label{eq_fluc_u_minus1}\\
       \nabla_h \,\tilde{p}_2 + \hat{\bm{Z}}\times \tilde{\bm{u}}_{h\, 1}+\hat{\bm{Z}}\left(\partial_Z \,\tilde{p}_1-\frac{{\mathcal{R}}}{Pr} \tilde{T}_1\right)+\left(\partial_t+\tilde{\bm{u}}_{h\, 0}\cdot\nabla_{h}-\nabla_h^2 \right)\tilde{\bm{u}}_0 =0,\label{eq_fluc_u_0}
    \end{gather}
\end{subequations}
to which \cref{eq_mean_uh} has been inserted. By the definition of the fluctuating fields,  \cref{eq_fluc_u_minus2} indicates
\begin{equation}
    \tilde{p}_0=0.
\end{equation}
Substituting this result into the vertical component of \cref{eq_fluc_u_minus1} then yields
\begin{equation}\label{eq_theta0}
    \tilde{T}_0=0,
\end{equation}
consistent with \cref{eq_mean_Theta_1,eq_fluc_Theta_0}. Moreover, the horizontal component of \cref{eq_fluc_u_minus1} gives 
\begin{equation}\label{eq_geostrophy}
    \bm{e}_Z \times\tilde{\bm{u}}_{h\, 0}=-\nabla_{h}\,\tilde{p}_1,
\end{equation}
which demonstrates that $\tilde{p}_1$ is a stream function for the leading-order horizontal velocity field. Based on \cref{eq_geostrophy}, one may invoke the definition of the vertical vorticity $\zeta$ to write
\begin{equation}
    \tilde{\zeta}_0 \coloneqq \hat{\bm{Z}} \cdot \left(\nabla_{h}\times\tilde{\bm{u}}_{h\, 0}\right)=\nabla_h^2 \,\tilde{p}_1,
\end{equation}
which coincides with \cref{eq_p_originalscale}, and further simplify the horizontal advection operator as
\begin{equation}\label{eq_uhnablah}
    \tilde{\bm{u}}_{h\, 0}\cdot\nabla_{h}\left(\cdot\right)=J\left[\,\tilde{p}_1,\,\cdot\,\right].
\end{equation}
Using \cref{eq_uhnablah}, we rearrange the vertical component of \cref{eq_fluc_u_0}, as well as \cref{eq_fluc_Theta_1} and \cref{eq_mean_Theta_2} with \cref{eq_theta0} substituted, to obtain eqs. \eqref{eq_w_originalscale}, \eqref{eq_theta_originalscale}, and \eqref{eq_Theta_originalscale}, respectively.

Furthermore, with \cref{eq_uhnablah} in place, taking the vertical component of the curl of \cref{eq_fluc_u_0} yields the following evolution equation for $\tilde{\zeta}_0$,
\begin{equation}\label{eq_derive_zeta}
    \partial_t \tilde{\zeta}_0+J\left[\tilde{p}_1,\,\tilde{\zeta}_0\right]+\nabla_h \cdot\tilde{\bm{u}}_{h\, 1}=\nabla_{h}^2 \,\tilde{\zeta}_0.
\end{equation}
On the other hand, the expansion of fluctuating continuity equation \eqref{eq_NS_fluc_w} gives
\begin{subequations}
    \begin{gather}
        \nabla_h\cdot\tilde{\bm{u}}_{h\, 0}=0,\label{eq_fluc_w_0}\\
       \nabla_h \cdot\tilde{\bm{u}}_{h\, n}+\partial_Z \,\tilde{w}_{n-1}=0,\quad  n\geq1,\label{eq_fluc_w_n}
    \end{gather}
\end{subequations}
with \cref{eq_fluc_w_0} imposing a consistent restriction on $\tilde{\bm{u}}_{h\,0}$ to \cref{eq_geostrophy}. According to \cref{eq_fluc_w_n}, a non-zero vertical gradient of $\tilde{w}_0$ does not violate the incompressibility condition, as mass conservation is ensured through the advection of $\tilde{\bm u}_{h\,1}$. We then use \cref{eq_fluc_w_n} to substitute $\nabla_h \cdot\tilde{\bm{u}}_{h\, 1}$ with $-\partial_Z \,\tilde{w}_{0}$ in \cref{eq_derive_zeta} to obtain \cref{eq_zeta_originalscale}, and all sub-equations of the system \eqref{eq_QG_viscousscale} have now been derived. Although the closed system \eqref{eq_QG_viscousscale} solely describes dynamics of the leading-order fields, the asymptotic expansion of the evolution equations in \eqref{eq_NS_mean} and \eqref{eq_NS_fluc} contains contributions at all orders, from which higher-order fields can be recovered.

   

\subsection{Remarks}

    The vertical component of \cref{eq_mean_u_minus1} reveals the hydrostatic balance
    \begin{equation}
        \partial_Z \,\overline{p}_{0} =\frac{{\mathcal{R}}}{Pr} {\overline{T}_{0}}
    \end{equation}
    attained between the leading-order mean pressure and buoyancy fields. Meanwhile, \eqref{eq_geostrophy} implies the geostrophic balance between the leading-order Coriolis force and horizontal pressure gradients, capturing the dominant force balance characteristic of the QG regimes. Compared to the classical geostrophic theory \cite{taylor1923experiments}, in \eqref{eq_QG_viscousscale} higher-order force balances are adjusted by the vertical convection, which renders the vertical pressure gradient \emph{non-hydrostatic} \cite{julien1998new} and the horizontal gradient \emph{quasi-geostrophic}. The balance \eqref{eq_geostrophy} does not account for boundary-layer effects, hence stress-free boundary conditions must be applied to ensure the physical consistency. For no-slip boundaries, Ekman boundary layers form and influence the bulk flow through Ekman pumping. Such effects can be parametrized as a modified mechanical boundary condition, following the approach of \cite{julien2016nonlinear}.
    
    Although the non-dimensionalisation itself involves no assumptions, the condition \eqref{eq_asymptoticexpansion} imposes that the chosen velocity, temperature, and pressure scales represent the largest amplitudes in powers of $\epsilon$, so that after non-dimensionalization, as $\epsilon\to 0$, no dynamics of order $O(\epsilon^{n})$ with $n<0$ arises. The chosen velocity scale filters out fast inertial waves. For the temperature scale, since rapid rotation tends to suppress convection, the conductive temperature scale is a natural choice. Finally, the geostrophic and hydrostatic balances satisfied by the leading-order pressure fields provide justification for the adopted pressure unit.

    The other assumption $\epsilon=E^{\frac{1}{3}}$, originally introduced by Julien et al. \cite{julien1998new}, is motivated by the $E^{\frac{1}{3}}$-scaling of the characteristic horizontal length scale in rotating Rayleigh–Bénard convection near the onset of convection \cite{chandrasekhar2013hydrodynamic}. The same scaling has been proven to hold asymptotically for rotating internally heated convection \cite{roberts1965thermal,jones2000onset,arslan2024internally}. Whether this scaling persists in the geostrophic turbulence regime remains an open question. Nevertheless, simulations of the rescaled incompressible Navier–Stokes equations \cite{julien2025rescaled,van2025bridging} on rotating Rayleigh–Bénard convection, which employ the same anisotropic non-dimensionalisation as the NH-QG equations but do not assume this scaling to be characteristic, demonstrate convergence toward NH-QG solutions as $E$ decreases.

\section{Boundedness of the auxiliary functional $\mathcal{V}$}
\label{Appendix_V}

\begin{proposition}[boundedness of the auxiliary functional] Let $\mathcal{V}\left\{\Theta,\,\theta\right\}$ be 
\begin{equation}\label{eq_asV_auxiliaryfunction}
    \mathcal{V}= \frac{\beta}{2}\left\langle\theta^2+E^{-\frac{2}{3}}\Theta^2\right\rangle_{V}-E^{-\frac{2}{3}}\left\langle\phi (Z)\,\Theta\right\rangle_{V},
\end{equation}
and evaluated along trajectories of \cref{eq_QG}, subject to the stress-free, isothermal boundary conditions \cref{eq_BC} and regular initial data. Therein, $\beta$ is a positive parameter, and $\phi:[0,\,1]\to\mathbb{R}$ is square-integrable. Then, for any fixed choice of $\phi$ and $\beta$, and for any fixed value of $E$, the functional $\mathcal{V}$ remains uniformly bounded in time.   
\end{proposition}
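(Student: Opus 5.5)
By Cauchy--Schwarz, $|\langle\phi\Theta\rangle_V|\le\|\phi\|_2\langle\Theta^2\rangle_V^{1/2}$, so
\[
|\mathcal{V}|\le \frac{\beta}{2}\langle\theta^2\rangle_V+\frac{\beta E^{-2/3}}{2}\langle\Theta^2\rangle_V+E^{-2/3}\|\phi\|_2\langle\Theta^2\rangle_V^{1/2}.
\]
It therefore suffices to show that the thermal energy $\mathcal{E}(t)\coloneqq\langle \Theta^2+E^{2/3}\theta^2\rangle_V$ is uniformly bounded in time along solutions, since $\langle\theta^2\rangle_V\le E^{-2/3}\mathcal{E}$ and $\langle\Theta^2\rangle_V\le\mathcal{E}$ for fixed $E$.

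The plan is to start from the energy evolution identity \eqref{eq_asV_energyevo}, which is derived from \cref{eq_theta,eq_Theta} using only the boundary conditions \eqref{eq_BC} and the cancellation of the nonlinear terms:
\[
\frac{E^{-2/3}}{2}\frac{\mathrm{d}\mathcal{E}}{\mathrm{d}t}=-\langle|\nabla_h\theta|^2+(\partial_Z\Theta)^2\rangle_V+\langle\Theta\rangle_V .
\]
Here the advective term $J[\psi,\theta]\theta$ integrates to zero by horizontal periodicity. The cross terms $w\theta\,\partial_Z\Theta$ and $\Theta\,\partial_Z\overline{w\theta}$ cancel after integrating by parts with $\Theta=0$ at $Z=0,1$.

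The dissipation terms are then made coercive.
\begin{itemize}
\item Since $\Theta(0)=\Theta(1)=0$, the Poincaré inequality gives $\langle(\partial_Z\Theta)^2\rangle_V\ge\pi^2\langle\Theta^2\rangle_V$.
\item Since $\overline\theta=0$, every Fourier mode has $k\ge k_{\min}=E^{1/3}\,2\pi/\max(L_x,L_y)>0$, so $\langle|\nabla_h\theta|^2\rangle_V\ge k_{\min}^2\langle\theta^2\rangle_V$.
\item The forcing is handled by Young's inequality: $\langle\Theta\rangle_V\le\frac{\pi^2}{2}\langle\Theta^2\rangle_V+\frac{1}{2\pi^2}$.
\end{itemize}
Combining these yields a differential inequality $\frac{\mathrm{d}\mathcal{E}}{\mathrm{d}t}\le -\lambda\mathcal{E}+C$ with $\lambda=2E^{2/3}\min\{\pi^2/2,\,k_{\min}^2E^{-2/3}\}>0$ and $C=E^{2/3}/\pi^2$. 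Grönwall's inequality then gives $\mathcal{E}(t)\le \mathcal{E}(0)e^{-\lambda t}+C/\lambda$, which is finite for regular initial data and uniform in $t$. Substituting into the Cauchy--Schwarz estimate bounds $|\mathcal{V}|$ by a constant depending only on $\beta$, $\|\phi\|_2$, $E$, the domain periods and the initial data.

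The main obstacle is justifying the energy identity rigorously, because $w$ is coupled to $\theta$ only through the diagnostic relation \eqref{eq_ODE}. This is also the reason the statement fixes $E$: the time-scale prefactor $E^{-2/3}$ and $k_{\min}\propto E^{1/3}$ make the constants $E$-dependent, so the bound is not uniform in $E$. We need enough regularity that the integrations by parts are valid and that the flux $\overline{w\theta}$ vanishes at the boundaries (this follows from $w=\theta=0$ there).

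For this I would invoke the standard well-posedness of the system for regular data. Equation \eqref{eq_ODE} shows $w_{\bm k}$ is obtained from $\theta_{\bm k}$ by a bounded elliptic solve: $k^2w-k^{-4}w''=\mathcal{R}\theta$ with Dirichlet data, so $\|w_{\bm k}\|_2\le \mathcal{R}k^{-2}\|\theta_{\bm k}\|_2$. Hence $w$ is at least as regular as $\theta$, and the computation can be carried out on smooth (e.g. Galerkin) approximations and passed to the limit.
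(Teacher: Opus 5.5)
Your proof is correct and takes essentially the same route as the paper: the energy identity \eqref{eq_asV_energyevo}, coercivity of the dissipation from the Poincar\'e inequality for $\Theta$ and the horizontal spectral gap for $\theta$, Gr\"onwall, and Cauchy--Schwarz on $\langle\phi\Theta\rangle_V$. The only cosmetic difference is the forcing term: the paper bounds it by $|\langle\Theta\rangle_V|\le\sqrt{\langle T^2\rangle_V}$ (with $\langle T^2\rangle_V=\mathcal{E}$) and applies Gr\"onwall to $\sqrt{\langle T^2\rangle_V}$, whereas you absorb it with Young's inequality and apply Gr\"onwall to $\mathcal{E}$ itself, which also avoids dividing by $\sqrt{\langle T^2\rangle_V}$.
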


\begin{proof}
   The argument begins with the evolution equation \eqref{eq_asV_energyevo} for the total temperature energy,
   \begin{equation}
    \frac{E^{-\frac{2}{3}}}{2} \frac{{\rm d}}{{\rm d} t}\langle \Theta^2+E^{\frac{2}{3}}\theta^2\rangle_{V}=-\langle \left|\nabla_h \theta\right|^2+(\partial_Z \Theta)^2\rangle_{V}+\langle\Theta\rangle_{V},
\end{equation}
According to the decomposition \eqref{eq_totaltemperature}, as $\overline{\theta}=0$, we have
\begin{equation}\label{eq_asV_energy}
    \langle \Theta^2+E^{\frac{2}{3}}\theta^2\rangle_{V}=\langle T^2\rangle_{V}.
\end{equation}

In \cref{eq_asV_energyevo}, as a consequence of horizontal periodicity, the horizontal dissipation satisfies
 \begin{equation}\label{eq_asV_dissitheta}
     \langle \left|\nabla_h \theta\right|^2\rangle_{V}\geq 4 \pi^2 E^{\frac{2}{3}} \min\left\{\frac{1}{L_x^2},\,\frac{1}{L_y^2}\right\}\,\langle\theta^2\rangle_{V}.
 \end{equation}
On the other hand, the isothermal boundary conditions together with the Poincaré inequality imply
 \begin{equation}\label{eq_asV_dissiTheta}
     \langle (\partial_Z \Theta)^2\rangle_{V}\geq \pi^2 \langle\Theta^2\rangle_{V}.
 \end{equation}
Combining \eqref{eq_asV_dissitheta} and \eqref{eq_asV_dissiTheta}, one may bound the volumetric thermal diffusion rate by
 \begin{equation}\label{eq_asV_dissi}
     \langle \left|\nabla_h \theta\right|^2+(\partial_Z \Theta)^2\rangle_{V} \geq \mu^2 \langle T^2\rangle_{V}, 
 \end{equation}
 in which 
 \begin{equation}
      \mu\coloneqq \pi \min\left\{\frac{2}{L_x},\,\frac{2}{L_y},\,1\right\}.
 \end{equation}
Additionally, by the Cauchy–Schwarz inequality, the source term in \cref{eq_asV_energyevo} satisfies
 \begin{equation}\label{eq_asV_source}
     \left|\langle \Theta\rangle_{V}\right|\leq \sqrt{\langle T^2 \rangle_{V}}.
 \end{equation}

Substituting \eqref{eq_asV_energy}, \eqref{eq_asV_dissi}, and \eqref{eq_asV_source} into \cref{eq_asV_energyevo} and dividing by $\sqrt{\langle T^2 \rangle_{V}}$ leads to
 \begin{equation}
    E^{-\frac{2}{3}}\frac{{\rm d}}{{\rm d} t}\sqrt{\langle T^2 \rangle_{V}}\leq -\mu^2 \sqrt{\langle T^2 \rangle_{V}}+1.
 \end{equation}
 Then, by Grönwall’s inequality,
 \begin{equation}\label{eq_asV_limsup}
     \sqrt{\langle T^2 \rangle_{V}}(t)\leq \sqrt{\langle T^2 \rangle_{V}}(0)\mathrm{e}^{-\mu^2 E^{\frac{2}{3}}t}+\frac{1}{\mu^2}\left(1-\mathrm{e}^{-\mu^2 E^{\frac{2}{3}}t}\right)< \sqrt{\langle T^2 \rangle_{V}}(0)+\frac{1}{\mu^2},\quad \forall t>0.
 \end{equation}

Moreover, the Cauchy–Schwarz inequality also gives
 \begin{equation}\label{eq_asV_phiTheta}
     \left|\langle \phi (Z)\Theta\rangle_{V}\right|\leq \|\phi(Z)\|_2 \sqrt{\langle T^2 \rangle_{V}}.
 \end{equation}
 By substituting \eqref{eq_asV_energy} and \eqref{eq_asV_phiTheta} into the auxiliary functional \eqref{eq_asV_auxiliaryfunction}, and combining with \cref{eq_asV_limsup}, we obtain the upper bound,
\begin{equation}
    \left| {\mathcal{V}}\right|< E^{-\frac{2}{3}}\left[\frac{ \beta}{2}\left(\sqrt{\langle T^2 \rangle_{V}}(0)+\frac{1}{\mu^2}\right)^2+\|\phi(Z)\|_2 \,\left(\sqrt{\langle T^2 \rangle_{V}}(0)+\frac{1}{\mu^2}\right)\right], \quad \forall t>0,
\end{equation}
which concludes the proof under the regularity of initial conditions.

\end{proof}

\section{Linear stability analysis}
\label{appendix_LST}

In this appendix, we perform a linear stability analysis  of the conduction state \eqref{eq_conduction} in the limit of infinite $Pr$. Specifically, the following result is demonstrated.
\begin{proposition}[Linear onset of convection, infinite $Pr$] Consider the convection system governed by \cref{eq_QG}, subject to the boundary conditions \eqref{eq_BC} and horizontal periodicity. Suppose that convective instability onsets in the form of stationary cellular convection, 
then the flow is linearly unstable for
\begin{equation}
    {\mathcal{R}}>{\mathcal{R}}_c=71.75.
\end{equation}
\end{proposition}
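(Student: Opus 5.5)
The plan is to linearise \cref{eq_QG} about the conduction state \eqref{eq_conduction}, reduce the perturbation problem for each horizontal Fourier mode to a single second-order ODE for $w_{\bm{k}}$, and locate the smallest ${\mathcal{R}}$ at which a non-trivial stationary solution exists.

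\textbf{Linearised mode problem.} Set $\Theta=\Theta_c+\Theta_1$ and treat $\theta$, $w$ and $\Theta_1$ as small. The Jacobian $J[\psi,\theta]$ and the flux $\overline{w\theta}$ are quadratic, so \eqref{eq_Theta} decouples: $\Theta_1$ merely diffuses and decays. At linear order \eqref{eq_theta} becomes, mode by mode,
\begin{equation*}
(\partial_t+k^2)\theta_{\bm{k}}=-\Theta_c'\,w_{\bm{k}}=\left(Z-\tfrac12\right)w_{\bm{k}} .
\end{equation*}
Seek normal modes $\theta_{\bm{k}},w_{\bm{k}}\propto e^{\sigma t}$. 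Eliminate $\theta_{\bm{k}}$ using the diagnostic relation \eqref{eq_ODE}. This gives
\begin{equation*}
(\sigma+k^2)\left(k^6 w_{\bm{k}}-w_{\bm{k}}''\right)={\mathcal{R}}\,k^4\left(Z-\tfrac12\right)w_{\bm{k}},\qquad w_{\bm{k}}(0)=w_{\bm{k}}(1)=0 .
\end{equation*}
The boundary conditions $\theta=0$ and $w=0$ are consistent with this.

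\textbf{Real growth rates and the marginal condition.} The operator $-\partial_Z^2+k^6$ is positive and self-adjoint on $H^1_0(0,1)$. Multiplying by $w^*_{\bm{k}}$ and integrating therefore gives
\begin{equation*}
\sigma+k^2={\mathcal{R}}k^4\int_0^1\left(Z-\tfrac12\right)|w_{\bm{k}}|^2\,{\rm d}Z\Big/\left(\|w_{\bm{k}}'\|_2^2+k^6\|w_{\bm{k}}\|_2^2\right).
\end{equation*}
The right-hand side is real, so $\sigma$ is real. This justifies the stationary-onset hypothesis, i.e.\ exchange of stabilities. Marginal modes ($\sigma=0$) then satisfy the Airy-type equation
\begin{equation*}
w''=\left[k^6-{\mathcal{R}}k^2\left(Z-\tfrac12\right)\right]w .
\end{equation*}
With $\lambda=({\mathcal{R}}k^2)^{1/3}$ and $\xi(Z)=\lambda\bigl(k^6/({\mathcal{R}}k^2)-Z+\tfrac12\bigr)$, the general solution is $\alpha\,\mathrm{Ai}(\xi)+\beta\,\mathrm{Bi}(\xi)$. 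The Dirichlet conditions give the dispersion relation
\begin{equation*}
\mathrm{Ai}(\xi(0))\,\mathrm{Bi}(\xi(1))-\mathrm{Ai}(\xi(1))\,\mathrm{Bi}(\xi(0))=0 .
\end{equation*}
This defines, for each $k$, a smallest root ${\mathcal{R}}_m(k)$.

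\textbf{Onset for ${\mathcal{R}}>{\mathcal{R}}_c$.} Define ${\mathcal{R}}_c=\min_k{\mathcal{R}}_m(k)$. By the variational characterisation above, the largest growth rate $\sigma_{\max}(k,{\mathcal{R}})$ is increasing in ${\mathcal{R}}$: the numerator can be made positive by concentrating $w$ in $Z>1/2$, and it enters linearly with a positive coefficient. Hence $\sigma_{\max}$ crosses zero exactly at ${\mathcal{R}}_m(k)$. For ${\mathcal{R}}>{\mathcal{R}}_c$ the wavenumber attaining the minimum then carries a growing mode. Strictly, one should approach that $k$ through admissible wavevectors, assuming the periods $L_x,L_y$ are large enough or taking the continuum limit in $\mathcal{K}$.

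\textbf{Main obstacle.} The hard step is the quantitative evaluation ${\mathcal{R}}_c=71.75$, which is not available in closed form. I would first evaluate ${\mathcal{R}}_m(k)$ by root-finding the Airy determinant (or by a Chebyshev--Galerkin discretisation of the generalised eigenproblem) and minimise over $k$. As a consistency check, the optimal $k$ should be $O(1)$, and ${\mathcal{R}}_c$ should exceed the nonlinear-stability threshold $\tfrac35{\mathcal{R}}_0$ from the proof of \cref{Theorem_Theta}.
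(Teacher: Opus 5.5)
Your proposal is correct and follows the paper's route: linearise about $\Theta_c$, eliminate $\theta_{\bm{k}}$ with the diagnostic relation to reach the Airy equation $w_{\bm{k}}''=[k^6-\mathcal{R}k^2(Z-\tfrac12)]w_{\bm{k}}$, impose $w_{\bm{k}}(0)=w_{\bm{k}}(1)=0$ through the $\mathrm{Ai}$--$\mathrm{Bi}$ determinant, and root-find numerically over $(k,\mathcal{R})$ to obtain $\mathcal{R}_c=71.75$. Your one addition is sound and goes beyond the paper: the Rayleigh-quotient argument shows that $\sigma$ is real and that $\sigma_{\max}+k^2$ is linear in $\mathcal{R}$ with positive slope, which proves the exchange-of-stabilities hypothesis that the paper (and the statement) only assume, and justifies instability for every $\mathcal{R}>\mathcal{R}_c$ rather than just locating the marginal curve.
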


\begin{proof}
The governing equations are defined in \cref{eq_QG}. The steady conductive solution corresponds to $\Theta_c=\frac12 Z(1-Z)$  and $w=\psi=\theta=0$. To determine the criterion for linear instability, we find the smallest $\mathcal{R}$ for which linear disturbances grow. The linearised governing equations are given by
\begin{subequations}
\begin{gather}
    \partial_Z \psi = \mathcal{R}\theta + \nabla_h^2 w, \\
    -\partial_Z w = \nabla_h^2 \zeta, \label{eq_LST_zeta}\\
    \zeta = \nabla_h^2 \psi,\label{eq_LST_psi}
    \\
    \partial_t \theta + w  \,\Theta_c' = \nabla_h^2 \,\theta.\label{eq_theta_LST}
\end{gather}
\end{subequations}
Assuming the principle of exchange of stabilities, valid for large $Pr$, we consider steady onset modes.
Then, rearranging the equations in terms of $w$ gives
\begin{equation}
    -\partial_Z^2 w = \mathcal{R} \nabla_h^2(w \Theta'_c) + \nabla_h^6 w. \label{eq:comb_linear}
\end{equation}
Introducing in \cref{eq:comb_linear} the standard normal-mode ans\"{a}tz for the plane layer geometry of
\begin{equation}
    w = \sum_{\bm{k}} {w}_{\bm{k}}(Z) \, \mathrm{e}^{{\rm i}\left(k_x x+k_y y\right)},\quad  \theta = \sum_{\bm{k}} {\theta}_{\bm{k}}(Z) \, \mathrm{e}^{{\rm i}\left(k_x x+k_y y\right)},
\end{equation}
gives
\begin{equation}\label{eq_LST_w_infinitePr}
    w_{\bm{k}}''+\left[{\mathcal{R}}{k^2}\left(Z-\frac{1}{2}\right)-k^6\right]w_{\bm{k}}=0,
\end{equation}
in which $k=\sqrt{k_x^2+k_y^2}\in \mathbb{R}^+$. The solutions to \eqref{eq_LST_w_infinitePr} are given by Airy functions, since if we make the change of variables 
\begin{equation}
    \xi=-{\mathcal{R}}^{\frac{1}{3}}k^{\frac{2}{3}}\left(Z-\frac{1}{2}\right)+{\mathcal{R}}^{-\frac{2}{3}}k^{\frac{14}{3}},\quad W(\xi)= {w}_{\bm{k}}(Z),
\end{equation}
then \cref{eq_LST_w_infinitePr} reduces to the standard Airy equation,
\begin{equation}
    \frac{{\rm d}^2 W}{{\rm d}\xi ^2}-\xi\, W=0.
\end{equation}
The general solution of which is $ W(\xi)=C_1 {\rm Ai}(\xi)+C_2 {\rm Bi}(\xi)$, with $C_1$, $C_2$ constants, and $\rm Ai$, $\rm Bi$ denoting the Airy functions of the first and second kind, respectively. Imposing the boundary conditions ${w}_{\bm{k}}(0)={w}_{\bm{k}}(1)=0$ and requiring non-trivial solutions leads to the eigenvalue condition, 
\begin{equation}\label{eq_LST_infinitePr_eigen}
\begin{aligned}
    &{\rm Ai}\left({\mathcal{R}}^{-\frac{2}{3}}k^{\frac{14}{3}}+\frac{1}{2} {\mathcal{R}}^{\frac{1}{3}}k^{\frac{2}{3}}\right){\rm Bi}\left({\mathcal{R}}^{-\frac{2}{3}}k^{\frac{14}{3}}-\frac{1}{2} {\mathcal{R}}^{\frac{1}{3}}k^{\frac{2}{3}}\right)\\
    -&{\rm Ai}\left({\mathcal{R}}^{-\frac{2}{3}}k^{\frac{14}{3}}-\frac{1}{2} {\mathcal{R}}^{\frac{1}{3}}k^{\frac{2}{3}}\right){\rm Bi}\left({\mathcal{R}}^{-\frac{2}{3}}k^{\frac{14}{3}}+\frac{1}{2} {\mathcal{R}}^{\frac{1}{3}}k^{\frac{2}{3}}\right)=0.
\end{aligned}
\end{equation}
Based on \eqref{eq_LST_w_infinitePr}, $w''_{\bm{k}}=0$ wherever $w_{\bm{k}}=0$, and taking the $Z$-derivative of \cref{eq_LST_zeta,eq_LST_psi} further implies that $\partial_Z\zeta=\partial_Z\psi=0$. Therefore, the stress-free boundary condition is automatically satisfied. The linear critical reduced Rayleigh number is then given by
\begin{equation}
    {\mathcal{R}}_{c}\coloneqq \min\left\{{\mathcal{R}}\,\left|\,\text{eq.}\,\eqref{eq_LST_infinitePr_eigen}\, \&\, k>0\right.\right\}.
\end{equation}

For a given positive $k$, we numerically solve \cref{eq_LST_infinitePr_eigen} for ${\mathcal{R}}$ using the MATLAB \textit{vpasolve} function; and whenever solutions exist, only the smallest root is retained. By varying $k$, this procedure yields a spectrum ${\mathcal{R}}(k)$, as presented in figure \ref{fig_LST_infinit_sp}(a). The minimum of this curve identifies ${\mathcal{R}}_c = 71.75$, and the corresponding most unstable wavenumber $k_c = 1.505$. Figure \ref{fig_LST_infinit_sp}(b) displays the most unstable velocity and temperature modes. The constants $C_1$ and $C_2$ are fixed as $1.867$ and $5.783\times10^{-4}$, such that the boundary conditions are satisfied and the maximum bulk value of $w_{\bm{k}}$ is normalized to unity. The amplitudes of the corresponding $w$ and $\theta$ over one horizontal period is illustrated in figure \ref{fig_LST_infinit_sp}(c)\&(d), with $k_x=k_c$ and $k_y=0$. 

\begin{figure}
    \centering
    \includegraphics[width=0.7\linewidth]{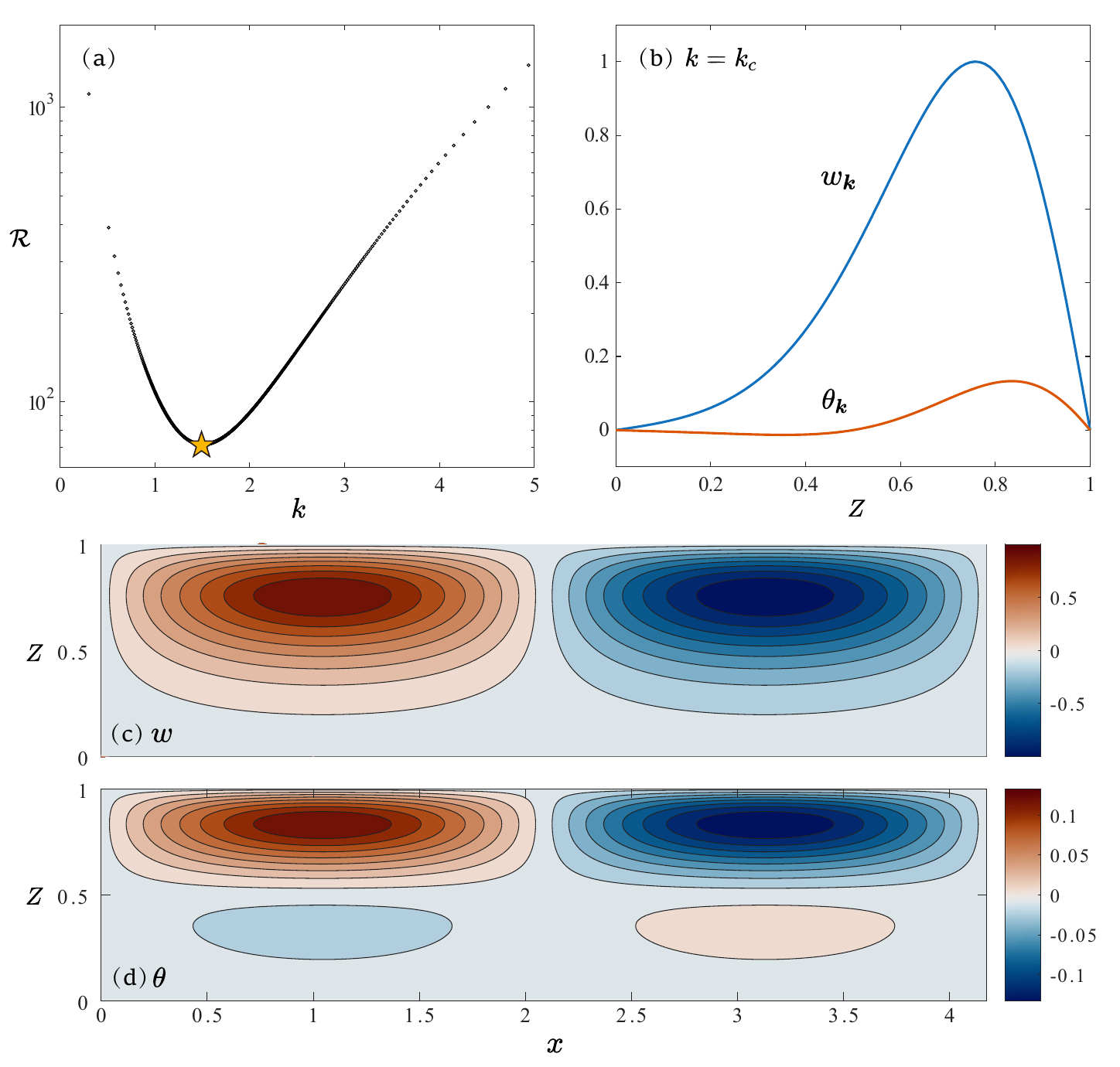}
    \caption{(a) Smallest root of \cref{eq_LST_infinitePr_eigen} for each given $k$. The star marks the $(k,{\mathcal{R}})$ pair corresponding to marginal linear stability. (b) Velocity and temperature modes at marginal linear stability, with $C_1$ and $C_2$ set to $1.867$ and $5.783\times10^{-4}$ respectively. (c) and (d) Amplitudes of the velocity and temperature modes in one horizontal period.} 
    \label{fig_LST_infinit_sp}
\end{figure}

\end{proof}

\ack{This study was funded by the Swiss National Science Foundation (Grant No. 219247) under the MINT 2023 call, and  by SNF Grant No. 10000683, for which we are grateful.
We also acknowledge funding by the European Research Council (agreement No. 833848-UEMHP) under the Horizon 2020 program.}





\bibliographystyle{iopart-num}
\bibliography{0_ref}

\end{document}